\def\llncs{0}
\def\issubmission{0}
\def\comments{0}
\newtheorem{theorem}{Theorem}[section]
\newtheorem{corollary}[theorem]{Corollary}
\newtheorem{definition}[theorem]{Definition}
\newtheorem{lemma}[theorem]{Lemma} 
\newtheorem{claim}{Claim}      
\theoremstyle{definition}
\theoremstyle{remark}
\newtheorem{remark}[theorem]{Remark}
\newtheorem{note}[theorem]{Note}
\newcommand{\A}{\mathcal{A}}
\newcommand{\N}{\mathbb{N}}
\newcommand{\E}{\mathop{\mathbb{E}}}
\newcommand{\wt}[1]{\widetilde{#1}}
\DeclareMathOperator{\Sim}{Sim}
\newcommand{\secparam}{\lambda}
\newcommand{\regA}{{\color{gray} {\bf A}}}
\newcommand{\regB}{{\color{gray} {\bf B}}}
\newcommand{\regC}{{\color{gray} {\bf C}}}
\newcommand{\regD}{{\color{gray} {\bf D}}}
\newcommand{\regK}{{\color{gray} {\bf K}}}
\newcommand{\regM}{{\color{gray} {\bf M}}}
\newcommand{\regR}{{\color{gray} {\bf R}}}
\newcommand{\regS}{{\color{gray} {\bf S}}}
\newcommand{\regX}{{\color{gray} {\bf X}}}
\newcommand{\regY}{{\color{gray} {\bf Y}}}
\newcommand{\regZ}{{\color{gray} {\bf Z}}}
\newcommand{\regIn}{{\color{gray} {\bf In}}}
\newcommand{\regOut}{{\color{gray} {\bf Out}}}
\newcommand{\regSt}{{\color{gray} {\bf St}}}
\newcommand{\iso}{\mathsf{iso}}
\newcommand{\PureV}{\mathsf{PureV}}
\newcommand{\regAnc}{{\color{gray} {\bf Anc}}}
\newcommand{\hilbert}{{\cal H}}
\newcommand{\Unitary}{{\cal U}}
\renewcommand{\Tr}{\mathsf{Tr}}
\newcommand{\TD}{\mathsf{TD}}
\newcommand{\negl}{\mathsf{negl}}
\newcommand{\poly}{\mathsf{poly}}
\newcommand{\type}{\mathsf{type}}
\newcommand{\types}{\mathsf{TYPES}}
\newcommand{\setst}{\mathsf{Set}}
\newcommand{\expect}{\mathbb{E}}
\newcommand{\distr}{\mathcal{D}}
\newcommand{\hybrid}{\mathsf{Hybrid}}
\newcommand{\adv}{{\cal A}}
\newcommand{\reduction}{{\cal R}}
\newcommand{\gen}{{\sf Gen}}
\newcommand{\mint}{{\sf Mint}}
\newcommand{\verify}{{\sf Verify}}
\newcommand{\signsetup}{{\cal K}{\cal G}}
\newcommand{\sign}{{\cal S}}
\newcommand{\signver}{{\cal V}}
\newcommand{\sk}{{\sf sk}}
\newcommand{\vk}{{\sf vk}}
\newcommand{\adversary}{{\cal A}}
\newcommand{\prob}{{\sf Pr}}
\newcommand{\U}{\mathcal{U}}
\newcommand{\myvdots}{\raisebox{1ex}{\vdots}}
\newcommand{\Apply}{\mathsf{Apply}}
\newcommand{\Select}{\mathsf{Sel}}
\newcommand{\sims}{{\sf Sim}}
\newcommand{\distinct}{{\sf dis}}
\newcommand{\Haar}{\mathscr{H}}
\newcommand{\pk}{\mathsf{pk}}
\newcommand{\PRU}{PRU}
\newcommand{\ESim}{\Sim_{\mathsf{eff}.\mathsf{iso}}}
\newcommand{\expand}{\mathsf{Expand}}
\newcommand{\Img}{\mathsf{Im}}
\newcommand{\BigPr}[2]{
\Pr\left[
\begin{array}{c}
#1
\end{array}
:
\begin{array}{c}
#2
\end{array}
\right]
}
\newcommand{\todo}[1]{{\color{blue} (TODO: #1) }}
\newcommand{\enote}[1]{{\footnotesize {\color{purple} E:\ #1}}}
\newcommand{\pnote}[1]{{\footnotesize {\color{red} P:\ #1}}}
\newcommand{\todo}[1]{}
\newcommand{\enote}[1]{}
\newcommand{\pnote}[1]{}
\title{Less is More:\\ On Copy Complexity in Quantum Cryptography\ifnum\issubmission=0\thanks{Work done in part while visiting the Simons Institute for the Theory of Computing, Berkeley.}\fi}
\author{Prabhanjan Ananth\thanks{prabhanjan@cs.ucsb.edu}\\ UCSB \and Eli Goldin\thanks{eli.goldin@nyu.edu}\\ NYU}
\date{}
\author{}
\institute{}
\begin{document}

\maketitle

\begin{abstract}
\noindent Quantum cryptographic definitions are often sensitive to the number of copies of the cryptographic states revealed to an adversary. Making definitional changes to the number of copies accessible to an adversary can drastically affect various aspects including the computational hardness, feasibility, and applicability of the resulting cryptographic scheme. This phenomenon appears in many places in quantum cryptography, including quantum pseudorandomness and unclonable cryptography. 

To address this, we present a generic approach to boost single-copy security to multi-copy security and apply this approach to many settings. As a consequence, we obtain the following new results: 
\begin{itemize}
    \item One-copy stretch pseudorandom state generators (under mild assumptions) imply the existence of $t$-copy stretch pseudorandom state generators, for any fixed polynomial $t$. 
    \item One-query pseudorandom unitaries with short keys (under mild assumptions) imply the existence of $t$-query pseudorandom unitaries with short keys, for any fixed polynomial $t$. 
    \item Assuming post-quantum pseudorandom functions exist, i.i.d.-copy secure uncloneable primitives imply identical-copy secure uncloneable primitives. In other words, many-time secure uncloneable primitives with mixed state output imply the same primitives with pure state output. This gives the first constructions of identical-copy secure public-key quantum money and quantum copy-protection.
\end{itemize}
\end{abstract} 




\ifnum\llncs=0
\newpage 

\tableofcontents
\newpage 
\fi

\section{Introduction}
\noindent The foundational principles of quantum mechanics impose constraints that force us to revisit cryptographic definitions and security models when designing quantum cryptographic primitives. Concretely, when formulating security definitions, due to the no-cloning principle of quantum mechanics, one has to be careful about the number of copies of the cryptographic quantum states that the adversary receives. While this may appear merely semantic at first glance, the copy complexity measure has important implications. Indeed, making definitional changes to the number of copies accessible to an adversary can drastically affect various aspects including the computational hardness, feasibility and applicability of the resulting cryptographic scheme. There are many such examples in quantum cryptography and we will touch upon a few below. 

\paragraph{Case study: Quantum Pseudorandomness and Microcrypt.} Pseudorandom state generators (PRSGs) are efficient algorithms that on input a classical key produce states that are computationally indistinguishable from Haar random states. That is, any computationally bounded adversary cannot tell apart whether it receives as input pseudorandom states or Haar random states. A natural question that arises in this definition is: how many copies of the state does the adversary receive? The original work~\cite{JLS18} that introduced pseudorandom states proposed a definition where the adversary can receive {\em a priori unbounded polynomial number} of copies. A more recent work~\cite{MY22} proposed a different definition referred to as {\em stretch} PRSGs wherein the adversary only receives one copy of the state and the key length is smaller than the number of qubits of the state. In the past few years, there have been several works that have demonstrated that these two definitions are vastly different.  
\par Recently,~\cite{CCS24} showed a separation between single-copy and multi-copy PRSGs. Moreover, it is believed that single-copy stretch PRSGs cannot be broken using any classical oracle~\cite{LMW24} while on the other hand, multi-copy PRSGs can be broken using a PP oracle~\cite{Kretschmer21,GMMY24}. Finally, there are cryptographic primitives, such as quantum pseudo one-time pads, that can be built from multi-copy secure PRSGs but not known from stretch PRSGs~\cite{AQY21}. 
\par Broadly speaking, a pseudorandom state generator is just one notion in the expanding world of microcrypt, which is comprised of a variety of quantum primitives that are believed to exist even if one-way functions don't. Another popular resident of this world is a one-way state generator. It was shown by~\cite{cavalar2023computational} that a one-way state generator with $n$-qubit output can be realized with information-theoretic security if the number of copies received by the adversary is $o(\frac{n}{\log(n)})$. This is tight since it was shown by~\cite{KT24,BJ24} that $\omega\left( \frac{n}{\log(n)} \right)$-copy secure one-way state generators imply the existence of quantum bit commitments.

\paragraph{Case Study: Unclonable Cryptography.} Unclonable cryptographic primitives are yet another set of primitives where the number of copies of the unclonable state the adversary receives is critical in the security definition. Most of the unclonable notions studied in the literature only guarantee security if the adversary only receives one copy of the unclonable state. In fact, some of the constructions are easily broken if the adversary receives multiple copies of the unclonable state. The first work to study multi-copy security for unclonable primitives was Aaronson~\cite{Aaronson18} who argued that in some settings, using shadow tomography, many unclonable primitives can be broken if the adversary receives many copies of the state. Following Aaronson, several recent works~\cite{LLQZ22,CG24b,AMP24,KNP25,PRV24} attempt to show the feasibility of multi-copy security for a limited number of unclonable primitives, including copy-protection, single-decryptor encryption and revocable encryption. As demonstrated in these works, achieving multi-copy security turns out to be much harder than single-copy security. 

\paragraph{Our Work.} We set out to understand the copy complexity for many quantum cryptographic primitives. Specifically, we set out to understand the following question: 
\begin{quote}
\begin{center}
{\em In which settings does single-copy security imply multi-copy security?} 
\end{center} 
\end{quote}

\noindent We also study similar questions for cryptographic unitaries. As a concrete example, we study the relationship between pseudorandom unitaries (PRUs) secure against 1-query adversaries versus PRUs secure against adversaries that make polynomially many queries.

\subsection{Results} 
\noindent We show that indeed in many settings, single-copy security does imply multi-copy security. To prove this, we present the main theorem that reduces multi-copy security to single-copy security and then we show how to apply this general theorem for various applications. Intuitively, our main theorem will show that for every ensemble of mixed states $\{\sigma_i\}$, there exists an ensemble of purifications $\{\ket{\psi_k}\}$ such that $\ket{\psi_k}^{\otimes t}$ for a random $k$ reveals no more information than $(\sigma_{i_1}\otimes \dots \otimes \sigma_{i_t})$ for random $i_1,\dots,i_t$. Formally, we show that $\ket{\psi_k}^{\otimes t}$ can be \text{efficiently simulated} given $(\sigma_{i_1}\otimes \dots \otimes \sigma_{i_t})$. And so, in most applications with mixed state output, the mixed state can be replaced with a random sample from this purification ensemble.
\par We first explain the intuition behind the main theorem. We will begin by considering a simplified setting. Suppose there is a family of states $\{\ket{\phi_i}_{\regA}\}_{i \in \{0,1\}^n}$ supported on register $\regA$. This family could correspond to pseudorandom states, quantum money states and so on. Consider the state $\ket{\psi_{f_1,f_4}} = 2^{-\frac{n}{2}} \sum_{i} \omega_{2^n}^{f_1(i)}  \ket{\phi_{f_4(i)}}_{\regA} \ket{i}_{\regC}$, where $f_1,f_4$ are functions. That is, $\ket{\psi_{f_1,f_4}}$ is a uniform superposition of all the states  $\{\ket{\phi_i}_{\regA}\}_{i \in \{0,1\}^n}$ with a random phase. Then, our (simplified) main theorem states that $t$ copies of $\ket{\psi_{f_1,f_4}}$, where $f_1,f_4$ are random functions, can be efficiently simulated by having $t$ i.i.d copies of $\{\ket{\phi_i}_{\regA}\}_{i \in \{0,1\}^n}$. That is, $t$ copies of $\ket{\psi_{f_1,f_4}}$ can be simulated given $(\ket{\phi_{i_1}}_{\regA},\ldots,\ket{\phi_{i_t}}_{\regA})$, where $i_j$ is sampled uniformly at random. This means that if the underlying family $\{\ket{\phi_i}_{\regA}\}_{i \in \{0,1\}^n}$ satisfies i.i.d copy security, i.e., security holds even given independent copies from $\{\ket{\phi_i}_{\regA}\}_{i \in \{0,1\}^n}$ then the security also holds even given $t$ copies of the pure state $\ket{\psi_{f_1,f_4}}$. 
\par However, in some applications, the cryptographic state could either be mixed or the ancilla register could be traced out before it is revealed to the adversary. In this case, we generalize the above intuition as follows: this time, let the family be $\{\ket{\phi_i}_{\regA \regB}\}_{i \in \{0,1\}^n}$. Imagine cryptographic settings where only the register $\regA$ is revealed and in particular, $\regB$ is traced out. We update the above intuition by applying a quantum one-time pad on $\regB$ controlled on the register $\regC$. Specifically, we consider the state  $\ket{\psi_{f_1,f_2,f_3,f_4}}=2^{-\frac{n}{2}} \sum_{i} \omega_{2^n}^{f_1(i)}  (I \otimes X_{\regB}^{f_2(i)} Z_{\regB}^{f_3(i)})\ket{\phi_{f_4(i)}}_{\regA \regB} \ket{i}_{\regC}$. We similarly argue that $t$ copies of the state $\ket{\psi_{f_1,f_2,f_3,f_4}}=2^{-\frac{n}{2}}\sum_{i} \omega_{2^n}^{f_1(i)}  (I \otimes X_{\regB}^{f_2(i)} Z_{\regB}^{f_3(i)})\ket{\phi_{f_4(i)}}_{\regA \regB} \ket{i}_{\regC}$, where $f_1,\ldots,f_4$ are random functions, can be approximately simulated given $t$ i.i.d copies from $\{\Tr_{\regB}\left( \ketbra{\phi_i}_{\regA \regB} \right)\}_i$. \par We state the main theorem in more detail below.  

\begin{theorem}[Main Theorem; Informal]
    \label{thm:intro:mainthm}
    Consider a family of mixed states $\{\sigma_i\}_{i\in [N]}$ with corresponding purifications $\{\ket{\phi_i}\}_{i\in [N]}$. Then for all $n$, there exists a family of states $\{\ket{\psi_{k}}_{\regA,\regB}\}_{k\in \mathcal{K}}$ such that the following hold
    \begin{enumerate}
        \item $t$ copies of a random sample $\ket{\psi_k}^{\otimes t}$ can be efficiently approximated using $t$ i.i.d. copies of $\sigma_i$. That is, there exists an efficient simulator that gets as input $\E_{i_1,\dots,i_t}[\sigma_{i_1}\otimes \dots \otimes \sigma_{i_t}]$ and produces a state that is $\frac{t^2}{2^n}$ close (in trace distance) to $\E_{k}[\ketbra{\psi_k}^{\otimes t}]$. Furthermore, the simulator is universal and independent of the family $\{\sigma_i\}$.
        \item $\{\ket{\psi_k}\}_{k\in \mathcal{K}}$ is a purification of $\{\sigma_i\}_{i\in \{0,1\}^n}$. In particular, 
        $$\E_k[\Tr_{\regB}(\ketbra{\phi_k})] = \E_i[\sigma_i]$$
        For the case when $N=1$ and $\{\sigma_i\}$ consists of a single mixed state $\sigma$, we further have that for all $k\in \mathcal{K}$,
        $$\Tr_{\regB}[\ketbra{\phi_k}] = \sigma$$
        \item $\mathcal{K}$ is the set of functions $f_1,f_2,f_3,f_4$ on an appropriate domain and codomain. Given query access to $f_1,f_2,f_3,f_4$ and a circuit preparing $\{\ket{\phi_i}\}_{i\in \{0,1\}^n}$, $\ket{\psi_{f_1,f_2,f_3,f_4}}$ is efficiently preparable.
    \end{enumerate}
\end{theorem}

In particular, the construction of $\{\ket{\psi_{f_1,f_2,f_3,f_4}}\}$ will be
$$\ket{\psi_{f_1,f_2,f_3,f_4}} = \sum_{i} \frac{\omega_{2^n}^{f_1(i)}}{\sqrt{2^{n}}}  \left(I_{\regA} \otimes X_{\regB}^{f_2(i)}Z_{\regB}^{f_3(i)} \otimes I_{\regC}\right) \ket{\phi_{f_4(i)}}_{\regA \regB} \ket{i}_{\regC}$$


\noindent We present many applications below. 

\paragraph{Pseudorandomness.} We show copy expansion theorems for pseudorandom state generators and pseudorandom unitaries. 
\par Let us start with pseudorandom state generators (PRSGs). There are three versions of pseudorandom state generators that are of interest: (a) \textsc{Stretch PRSGs}: the output length of the generator, say $n$, is much larger than the key length, denoted by $\secparam$. The adversary only gets one copy of the state,  (b) \textsc{Bounded-copy PRSGs}: the number of copies received by the adversary is a priori bounded. Depending on the key length and the output length, this notion can either information theoretically exist ($t$-state designs) or require computational assumptions, (c) \textsc{Multi-copy secure PRSGs}: the number of copies received by the adversary can be an arbitrary polynomial. 
\par A number of recent works~\cite{Kretschmer21,GMMY24,LMW24,CCS24} suggest that stretch PRSGs could be strictly weaker than multi-copy PRSGs. However, the relationship between stretch PRSGs and bounded-copy PRSGs has not been thoroughly investigated so far. Using~\Cref{thm:intro:mainthm}, we show that, perhaps surprisingly, stretch PRSGs {\em do imply} bounded-copy PRSGs. As far as we are aware, this is the first copy expansion theorem for pseudorandom states. However, this implication comes at a caveat: we assume that the stretch PRSG has a bounded-sized ancilla register\footnote{In more detail, suppose the stretch pseudorandom generator $G$ can be viewed as a unitary that outputs two registers $\regA$ and $\regB$, with the pseudorandom state being on the register $\regA$ and $\regB$ is the ancilla register. We require an upper bound on the size of $\regB$ and specifically, it should be much smaller than $\regA$.}. 

\begin{theorem}[Informal]
Let $t=t(\secparam)$ be a polynomial. Assuming one-copy stretch PRSGs with some mild restrictions, there exists a $t$-copy stretch PRSGs. Specifically, we assume that the one-copy stretch PRSGs have a bounded-size ancilla register. 

If the one-copy stretch PRSG takes in keys of length $\lambda$ and outputs states over $n$ qubits, leaving some junk state on an ancilla of length $a$ qubits, then the corresponding $t$-copy stretch PRSG takes in keys of length $O(t(\lambda+a))$ and outputs states over $\geq n+a$ qubits.
\end{theorem}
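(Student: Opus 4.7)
The plan is to instantiate the construction of \Cref{thm:intro:mainthm} with the one-copy stretch PRSG as the underlying family $\{\ket{\phi_k}_{\regA\regB}\}_{k \in \{0,1\}^\lambda}$ and to derandomize the four random functions using $2t$-wise independent hash functions. Concretely, let $G$ be the one-copy stretch PRSG producing $\ket{\phi_k}_{\regA \regB}$ with $\regA$ of $n$ qubits and $\regB$ of $a$ qubits; I would choose the summation index in $\ket{\psi}$ to range over $\{0,1\}^\lambda$, so that $\regC$ has $\lambda$ qubits. Using standard polynomial-based $2t$-wise independent constructions, the total description length of $f_1,f_2,f_3,f_4$ is $O(t(\lambda+a))$. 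The new PRSG $G'$ coherently prepares $\ket{\psi_{f_1,f_2,f_3,f_4}}$ on $\regA \otimes \regB \otimes \regC$ from a uniform superposition on $\regC$, a coherent invocation of the unitary form of $G$ on $f_4(i)$ (with uncomputation of $f_4(i)$), followed by the phase and Pauli rotations derived from $f_1, f_2, f_3$.

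For security, I would chain the following hybrids. Hybrid $H_0$ is $t$ copies of the real output. Hybrid $H_1$ replaces the $2t$-wise independent functions by truly uniform random functions; this is an exact equality, since $\ketbra{\psi_{\vec f}}^{\otimes t}$ depends only on the joint values of each $f_i$ at $2t$ inputs. Hybrid $H_2$ applies \Cref{thm:intro:mainthm} in the forward direction: $H_1$ is $t^2/2^\lambda$-close in trace distance to the simulator of the Main Theorem run on $t$ i.i.d.\ samples from $\{\Tr_{\regB}\ketbra{\phi_k}\}_k$. Hybrid $H_3$ replaces each of the $t$ samples by an i.i.d.\ Haar random pure state on $\regA$, invoking one-copy stretch PRSG security via a standard $t$-step hybrid argument. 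Hybrid $H_4$ applies \Cref{thm:intro:mainthm} in the backward direction with the trivial Haar family $\ket{\phi_k}=\ket{\eta_k}_{\regA}\ket{0}_{\regB}$ (for which $\Tr_{\regB}\ketbra{\phi_k}=\ketbra{\eta_k}$), producing $\E_{f,\eta}[\ketbra{\psi_{f,\mathrm{Haar}}}^{\otimes t}]$ up to trace distance $t^2/2^\lambda$.

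The main obstacle is bounding the trace distance between Hybrid $H_4$ and $t$ copies of a genuine Haar random state on the output register of $\geq n+a$ qubits. Single-copy moments match trivially (both equal the maximally mixed state), but the $t$-th moment requires direct analysis. My approach is to expand $\ketbra{\psi_{f,\mathrm{Haar}}}^{\otimes t}$ along the $\regC$-basis, collapse off-diagonal cross-terms using the expectation over the random phase $f_1$, use the quantum one-time pad on $\regB$ induced by $f_2, f_3$ to impose further diagonality, and then invoke standard symmetric-subspace / Schur--Weyl identities for the $t$-th Haar moment of $\ket{\eta_k}$. The $2t$-wise collision probability of $f_4$ is bounded by a birthday argument contributing an $O(t^2/2^\lambda)$ error, and the bounded-ancilla assumption ensures the parameters balance so that the overall trace distance is $\poly(t)/2^{\min(n,\lambda)}$, yielding $t$-copy stretch PRSG security.
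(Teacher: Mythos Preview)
Your proposal is correct and tracks the paper's proof exactly through $H_3$ (the paper's Hybrids 1--4): the construction, the replacement of $2t$-wise independent hashes by true random functions, the invocation of the Main Theorem to pass to the simulator, and the swap of PRSG samples for Haar samples via one-copy security are all identical.

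The divergence is in how you finish. You propose to apply the Main Theorem \emph{backward} with a Haar-indexed family, landing on $\E_{f,\eta}[\ketbra{\psi_{f,\mathrm{Haar}}}^{\otimes t}]$, and then analyze that state directly via phase cancellation, quantum-one-time-pad diagonalization, and Schur--Weyl/Haar moment identities. This works in principle, but the paper takes a shorter route: since a \emph{single} copy of a Haar random state is exactly the maximally mixed state, each Haar sample $\ket{\eta_i}$ fed to the simulator can be replaced, with zero error, by a uniformly random computational basis state $\ket{r_i}$. The simulator's output on basis-state inputs is then \emph{explicitly} the permutation state $\propto \sum_{\pi \in Sym(t)} \bigotimes_j \ket{r'_{\pi(j)}}$ over the full output register, and a standard type-state lemma (\Cref{lem:haartypelemma}) immediately gives $O(t^2/2^{\ell'_n})$ closeness to $t$ copies of Haar. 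This bypasses the backward application of the Main Theorem (which requires some care to make sense of a Haar-indexed family), and it replaces your Schur--Weyl step with a one-line appeal to a known combinatorial fact. Your route would get there too, but the ``one Haar copy equals maximally mixed equals random string'' observation is the clean shortcut you are missing.
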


In particular, there exists some constant $c$ such that if the $1$-copy PRSG maps keys of length $\secparam$ to states of length $ct \secparam$, then the corresponding $t$-copy PRSG is also expanding.

\noindent We show that by extending~\Cref{thm:intro:mainthm}, a similar copy expansion theorem can also be shown for pseudorandom unitaries (PRUs). As in the case of PRSGs, we can correspondingly define one-query, bounded-query and multi-query PRUs. We show that one-query PRUs imply bounded-query PRUs with non-adaptive security. 

\begin{theorem}[Informal]
Let $t=t(\secparam)$ be a polynomial. Assuming one-query short-key PRUs with some mild restrictions, there exists a $t$-query, non-adaptively secure PRU. Specifically, we assume that the one-query short-key PRU is "pure", that is, it clears out its ancilla register after computation.

If the one-query PRU takes in keys of length $\lambda$ and acts on states of length $n$, then the $t$-query PRU takes in keys of length $O(t\lambda)$ and acts on states of length $\geq n$.
\end{theorem}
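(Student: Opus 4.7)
The plan is to mirror the boosting strategy used for stretch PRSGs: reformulate non-adaptive $t$-query PRU security as an indistinguishability-of-copies statement about states, and then invoke \Cref{thm:intro:mainthm}. Concretely, non-adaptive $t$-query security of a PRU $V_K$ acting on $n' \geq n$ qubits is equivalent to the indistinguishability of $t$ copies of its Choi--Jamio{\l}kowski state $\ket{\Psi_{V_K}} = (V_K \otimes I)\,2^{-n'/2}\sum_x \ket{x}\ket{x}$ from $t$ copies of the Choi state of a Haar-random unitary. This recasts the problem into exactly the form addressed by \Cref{thm:intro:mainthm}.

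The construction has $K$ encode (pseudo)random functions $f_1, f_2, f_3, f_4$ with appropriate ranges, realizable within $O(t\lambda)$ bits. Define $V_K$ on $n' \geq n$ qubits so that its basis action is of the Main Theorem form: apply $U_{f_4(\cdot)}$ to the data register, twirl by $X^{f_2(\cdot)} Z^{f_3(\cdot)}$, and scale by $\omega^{f_1(\cdot)}$; the ``pure'' assumption on $U_k$ guarantees that $V_K$ is a genuine unitary (no leftover ancilla). One then checks that $\ket{\Psi_{V_K}}$ equals, up to reindexing, the state $\ket{\psi_{f_1,f_2,f_3,f_4}}$ of \Cref{thm:intro:mainthm} for a family $\{\ket{\phi_k}\}$ derived from $U_k$. \Cref{thm:intro:mainthm} then yields an efficient simulator that produces $t$ copies of $\ket{\Psi_{V_K}}$ from $t$ i.i.d.\ samples of $\Tr_B(\ketbra{\phi_k})$ with trace-distance error $t^2/2^n$. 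A standard hybrid argument swaps each sample for its Haar counterpart using the single-query PRU security of $U_k$, concluding that $\ket{\Psi_{V_K}}^{\otimes t}$ is computationally indistinguishable from the Haar $t$-copy Choi state, and hence that $V_K$ is $t$-query non-adaptively secure.

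The main obstacle is arranging the construction so that the family $\{\ket{\phi_k}\}$ fed into \Cref{thm:intro:mainthm} has reduced states that retain $k$-dependence after tracing out the register $B$. A direct Choi-based embedding (taking $\ket{\phi_k}$ to be the Choi state of $U_k$ with the reference half in $B$) is a trap: it collapses $\Tr_B(\ketbra{\phi_k})$ to $I/2^n$ independent of $k$ and makes the reduction vacuous, which would absurdly yield statistical security without using the PRU assumption. A careful embedding, placing $U_k$'s output on the part of the register that survives tracing out (so that $\Tr_B(\ketbra{\phi_k})$ behaves like a genuine single PRS sample from $U_k$), is required for the single-query assumption to actually be leveraged. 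Verifying that this construction meets the $O(t\lambda)$ key-length and $\geq n$-qubit bounds, and that the combined distinguishing advantage (aggregating the $t^2/2^n$ statistical error from \Cref{thm:intro:mainthm} with the negligible 1-query PRU losses through each hybrid step) remains negligible, completes the proof.
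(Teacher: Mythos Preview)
The central reduction—claiming that non-adaptive $t$-query PRU security is equivalent to computational indistinguishability of $t$ copies of the Choi state—does not hold in the direction you need. Indistinguishability of $t$ Choi copies is the \emph{special case} of non-adaptive security in which the adversary's input is fixed to $\ket{\Phi^+}^{\otimes t}$, so it is implied by, not equivalent to, non-adaptive security. For the converse you would need to turn an arbitrary-input distinguisher into a Choi-state distinguisher, i.e., to produce $(V_K^{\otimes t}\otimes I)\ket{\phi}$ from $(V_K^{\otimes t}\otimes I)\ket{\Phi^+}^{\otimes t}$ by acting only on the reference registers. The map that sends $\ket{\Phi^+}^{\otimes t}$ to a general $\ket{\phi}$ has operator norm up to $\sqrt{2^{n't}}$ and is not implementable as a channel; gate teleportation introduces independent Pauli errors $P_1,\dots,P_t$ inside $V_K$ that cannot be removed. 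So proving Choi indistinguishability via \Cref{thm:intro:mainthm} would not yield non-adaptive query security.

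The paper avoids Choi states entirely and instead proves a dedicated unitary-setting simulation theorem (\Cref{thm:unisimmain}): one parallel application of $\bigotimes_i S^f_{\regK_i}\!\cdot\!\Apply^{\U}_{\regK_i,\regIn_i}$ on any state supported on \emph{distinct} key values can be exactly simulated by an algorithm that queries each $U_{k_1},\dots,U_{k_t}$ once. This works directly on the adversary's chosen input, sidestepping the Choi gap. The distinct-keys hypothesis is enforced by first applying an approximate unitary $t$-design to the key register (and invoking \Cref{lem:nocoll}); your construction has no such scrambling step, so an adversary that sets all key registers to the same value would evade any simulator. Finally, after swapping in Haar unitaries via one-query security, one still has to show that ``apply an independent Haar unitary on each key branch, after a Haar scramble of the key register'' is itself indistinguishable from a single Haar unitary on the full register; the paper handles this with a separate path-recording argument (\Cref{lem:idealunitary}), an ingredient your plan does not anticipate.
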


\paragraph{Unclonable Cryptography.} We show applications of~\Cref{thm:intro:mainthm} to unclonable cryptography. Specifically, we consider two primitives: public-key quantum money~\cite{AC12,Zhandry19} and copy-protection~\cite{Aar09}. 
\par We consider a stronger security for public-key quantum money, wherein the adversary gets (unbounded) polynomially many copies of a (pure) money state associated with the same serial number. This notion was originally called quantum coins~\cite{MS09}, and prior to this work there were no known constructions. Concretely, the security guarantee states that given $t$ copies, for any polynomial $t$, of the (pure) money state, it should be computationally infeasible to produce $(t+1)$ copies of the money state. Multi-copy secure quantum money is also relevant in the setting when the quantum systems are noisy and hence, giving access to more copies would mitigate this risk. Pure multi-copy security also has applications for \textit{untraceability}, which has been studied in a recent work~\cite{CGY24a}. The property of untraceability stipulates that even the bank should not be able to trace banknotes: if every banknote is the same state then this property is immediately satisfied. 
\par We show the following. 

\begin{theorem}[Informal]
    Suppose that there exists a public key quantum money scheme and post-quantum secure pseudorandom functions exist, then there exists a multi-copy secure public-key quantum money scheme.
\end{theorem}

\begin{corollary}[Informal]
Assuming the existence of post-quantum secure indistinguishability obfuscation and post-quantum secure injective one-way functions, there exists a multi-copy secure public-key quantum money scheme. 
\end{corollary}

\noindent We similarly consider a stronger security property for quantum copy-protection as well. We require that the adversary after receiving $t$ copies of the copy-protected state is not able to produce a $(t+1)$-partite state such that all the partitions compute the original functionality. Previous works~\cite{LLLQ22,CG24} deal with the so-called {\em i.i.d-copy} security wherein the adversary receives {\em independent} copies of the copy-protected state. A couple of recent works~\cite{AMP24,PRV24} explore {\em identical-copy} security wherein the adversary receives many copies of a pure copy-protected state. However, both  works~\cite{AMP24,PRV24} considered weaker definitions of copy-protection and proposed restricted results. Using~\Cref{thm:intro:mainthm}, we show the following. 

\begin{theorem}[Informal]
For any class of functionalities ${\cal F}$, suppose there exists a copy-protection scheme for ${\cal F}$ satisfying {\em i.i.d-copy} security and that post-quantum secure pseudorandom functions exist. Then, for the same function family ${\cal F}$, there exists a copy-protection scheme satisfying {\em identical-copy} security.  
\end{theorem}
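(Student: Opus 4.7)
The plan is to instantiate \Cref{thm:intro:mainthm} with the given i.i.d.-copy secure scheme and derandomize the four random functions appearing there with a post-quantum, quantum-query secure PRF. Let $(\mathsf{Setup}_0, \mathsf{Apply}_0)$ be the i.i.d.-copy secure copy-protection scheme for $\calF$, and view $\mathsf{Setup}_0(f;r)$ as a unitary that coherently prepares the purification $\ket{\phi_{f,r}}_{\regA\regB}$ of the copy-protected state, where $\regA$ is the register shown to the user/adversary and $\regB$ collects internal ancillas. The new $\mathsf{Setup}(f)$ will sample four PRF keys $k_1,\ldots,k_4$, coherently run $\mathsf{Setup}_0$ with randomness $\mathsf{PRF}_{k_4}(i)$ indexed by a fresh register $\regC$, and output the pure state
\[
\ket{\psi_f}\;=\;\sum_i \frac{\omega_{2^n}^{\mathsf{PRF}_{k_1}(i)}}{\sqrt{2^n}}\,\Bigl(I_{\regA}\otimes X_{\regB}^{\mathsf{PRF}_{k_2}(i)}Z_{\regB}^{\mathsf{PRF}_{k_3}(i)}\otimes I_{\regC}\Bigr)\,\ket{\phi_{f,\mathsf{PRF}_{k_4}(i)}}_{\regA\regB}\ket{i}_{\regC},
\]
then discards $k_1,\ldots,k_4$. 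The new $\mathsf{Apply}(\ket{\psi_f},x)$ is simply $\mathsf{Apply}_0$ applied coherently to register $\regA$ with input $x$; because every branch of $\ket{\psi_f}$ carries a valid original copy-protected state for $f$ on $\regA\regB$, correctness follows branch-by-branch, and crucially no PRF key is ever needed at evaluation time.

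For security I will reduce $t$-identical-copy security of the new scheme to $t$-i.i.d.-copy security of the old scheme through two hybrids. In the first hybrid, replace the four PRFs by truly random functions $f_1,\ldots,f_4$; since $k_1,\ldots,k_4$ are generated, consumed inside $\mathsf{Setup}$, and never exposed to $\adv$ (in particular $\mathsf{Apply}$ is key-oblivious), post-quantum quantum-query PRF security incurs only negligible loss despite the preparation of $\ket{\psi_f}$ issuing superposition queries. In the second hybrid, invoke \Cref{thm:intro:mainthm}: the state $\E_{f_1,\ldots,f_4}[\ketbra{\psi_f}^{\otimes t}]$ is within trace distance $t^2/2^n$ of a state efficiently simulable from $t$ i.i.d.\ original samples $\Tr_{\regB}(\ketbra{\phi_{f,r_1}}),\ldots,\Tr_{\regB}(\ketbra{\phi_{f,r_t}})$. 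The reduction $\reduction$ then writes itself: given $t$ i.i.d.\ samples from the original game, $\reduction$ runs the simulator, hands the result to $\adv$, and from the $(t+1)$-partite output $\sigma_{1,\ldots,t+1}$ forms candidate original devices $\Tr_{\regB\regC}(\sigma_j)$. Since new $\mathsf{Apply}$ is literally $\mathsf{Apply}_0$ on $\regA$, every tester that accepts $\sigma_j$ in the new game also accepts $\Tr_{\regB\regC}(\sigma_j)$ in the old game on the same input, so $\reduction$ produces $(t+1)$ simultaneously good original devices, contradicting i.i.d.-copy security.

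The main obstacle I expect is the PRF-to-random-function hybrid in the presence of quantum superposition queries: we genuinely need post-quantum, quantum-query secure PRFs, and we must ensure the PRF keys never leak to $\adv$---which is exactly what motivates designing $\mathsf{Apply}$ to act obliviously on $\regA$ alone. A secondary subtlety is that applying $\mathsf{Apply}_0$ coherently to an entangled $\regA$ inside $\ket{\psi_f}$ must not cause appreciable interference loss across branches indexed by different $r$; I would control this using the standard ``almost as good as new''/gentle-measurement guarantee already implicit in any reasonable copy-protection scheme, which ensures each branch is preserved up to small error after evaluation.
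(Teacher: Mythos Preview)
Your proposal is correct and follows essentially the same route as the paper: replace the PRFs by random functions, invoke \Cref{thm:intro:mainthm} to switch to the simulator fed by $t$ i.i.d.\ samples of the underlying scheme, and then wrap the identical-copy adversary into an i.i.d.-copy adversary. The paper's construction differs only cosmetically---it uses two PRF keys instead of four (taking $f_4$ to be the identity so the index register $\regC$ directly supplies the randomness for the underlying scheme), and your interference worry for correctness is unnecessary since the branches are orthogonal via $\regC$.
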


\noindent We note that i.i.d-copy security has also been explored in the context of other unclonable primitives, such as secure leasing~\cite{KNP25}. While we do not prove this formally in this work, our main theorem~\Cref{thm:intro:mainthm} also yields identical-copy security for the same primitives considered in~\cite{KNP25}.

\section{Related Work}
\label{sec:relatedwork}

There are two concurrent works showing similar results to our main theorem~\cite{CGKNY25,TWZ25}. We discuss their relationship with our results in detail.

\subsection{Comparison with concurrent work~\cite{CGKNY25}}

In particular,~\cite{CGKNY25} also shows one-copy to many-copy results in uncloneable encryption. In particular, they also construct identical-copy secure quantum money and copy-protection. As an additional result beyond ours, they show how to construct i.i.d.-copy secure uncloneable encryption, which then implies using similar techniques to ours identical-copy secure uncloneable encryption. To achieve these results, they rely on a key theorem similar to our main theorem, which they call a "purification compiler". While the description of their compiler is conceptually simpler than ours, it only works when given a purification of the form $\sum_{k} \ket{\phi_k}\ket{k}$. As such, their constructions of identical-copy secure primitives cannot be realized fully generically from any i.i.d.-copy secure scheme, and require instantiating the underlying protocols with constructions satisfying appropriate properties.

\subsection{Comparison with concurrent work~\cite{TWZ25}}

\cite{TWZ25} proves as one of its results the existence of what has later become known as a "purification channel". This is a channel (which we will suggestively name $\Sim$) with the following property. Let $\sigma$ be any mixed state of dimension $N$, and let $\mathcal{P} = \{\ketbra{\phi}\}$ be the space of all purifications of $\sigma$ of dimension $N^2$. Then $\Sim(\sigma^{\otimes t})= \E_{\ket{\phi} \mathcal{P}}[\ketbra{\phi}^{\otimes t}]$. Intuitively, this is a channel which transforms $t$ copies of a mixed state into $t$ copies of a random purification of that state. Our~\Cref{thm:intro:mainthm} can be viewed under the same lens. If we set $\mathcal{P}' = \{\ket{\phi_k}\}_{k\in \mathcal{K}}$, then we give a channel $\Sim$ such that $\Sim(\sigma^{\otimes t}) \approx \E_{\ket{\phi_k} \gets \mathcal{P}'}[\ketbra{\phi_k}^{\otimes t}]$.

Our result thus has a few main differences from~\cite{TWZ25}. First, the distribution over resulting purifications is different. The~\cite{TWZ25} result uses a truly random purification, whereas our result uses a random element from some particular ensemble of purifications. Note that using the particular ensemble of purifications we consider has the advantage that if you can simulate a random function our ensemble is efficient to sample from. Our construction also has slightly more error than~\cite{TWZ25}. On the other hand, our construction is significantly easier to understand and analyze, as we do not require any techniques from representation theory. Our result has the additional advantage that the dimension of the purification ensemble can be scaled arbitrarily, although the error of our construction is inversely correlated with the dimension of the purification. In terms of applications, our result can be used to recover all applications used in~\cite{TWZ25}. Although it is not immediate, it seems likely that the reverse is also true. Note that for the application of purification channels to tomography~\cite{PSTW25}, our construction is sufficient but gives significantly worse bounds.

\subsection{Subsequent work}

\cite{GMMY24} originally showed a number of black-box separations between cryptographic primitives relative to a CPTP oracle. A recent update to this paper used a variation of our main theorem to lift these result to unitary oracle separation. 

\cite{YNM25} demonstrates how to simulate parallel queries to a random dilation of an arbitrary quantum channel using as many parallel queries to the channel itself. Viewed in this way, our~\Cref{thm:unisimmain,thm:isosim} shows the same result for the specific channel which applies samples a random key and applies the map $U_k$ for some family of unitaries or isometries $\{U_k\}$. In particular, our construction of many-query non-adaptive pseudorandom unitaries from one-query secure pseudorandom unitaries can be thought of as applying a randomized dilation of the channel: pick a random unitary from the family and apply it (see~\Cref{fig:prucons}).
\section{Technical Overview}
\label{sec:techoverview}

\pnote{I'm not sure if compressed oracle is the best way to lead in to our paper. This is how I would do it (In retrospect, this is probably not how we came up with the idea but our idea looks more natural if we think of it this way): I would start with the binary phase PRS (or even the one with the large domain), explain the proof using the purification technique.}

\paragraph{Background: the compressed oracle method} Quantum queries to a random function are most commonly analyzed using the compressed oracle framework~\cite{zhandry2019record}. We will model a random function $f$ generally as a phase oracle $S^f$, defined by the map
$$S^f\ket{x}\mapsto (-1)^{f(x)}\ket{x}.$$
In the compressed oracle framework, the mixed state resulting from some algorithm $\A^f$ querying a random function $S^f$ is instead modelled by its purification. Instead of representing $\ket{\A^f}$ for a random $f$ as the mixed state
$$\E_f[\ketbra{\A^f}]$$
the compressed oracle framework will consider the pure state
$$\sum_f \ket{\A^f}\ket{f}$$
Tracing out the $f$ register results in the original mixed state.

The key insight behind the compressed oracle framework is that taking the quantum Fourier transform of the $\ket{f}$ register leaves behind a transcript recording the queries made to $f$ by $\A$. Since this transcript will always contain at most as many queries as made by the algorithm $\A$, it can actually be represented efficiently. In particular, let $\regD$ be a register containing a set, initialized to $\emptyset$. Define the map 
$$CO\ket{x}\ket{D}_{\regD} \mapsto \begin{cases}
    \ket{x}\ket{D\setminus \{x\}}_{\regD}&x\in D\\
    \ket{x}\ket{D\cup \{x\}}_{\regD}&x\notin D
\end{cases}$$
It turns out that oracle access to $CO$ is equivalent to oracle access to $S^f$. That is, for any algorithm $\A$, $\E_f[\A^{f}]$ represents the same density matrix as $\Tr_{\regD}(\A^{CO})$.

Note that in the compressed oracle model, when $CO$ is queried twice on the same $x$ it will erase its saved state. While it turns out that this behavior can be very useful, for our purposes we would prefer that the compressed oracle actually tracks all queries made to $x$. It turns out that a slight generalization of the compressed oracle model to multi-bit random functions can easily achieve this goal. In particular, let $f:\{0,1\}^n\to [t]$ be a random function outputting a value in $[t]$. We will redefine $S^f$ to be the following map
$$S^f\ket{x}\mapsto \omega_t^{f(x)}\ket{x}$$
where $\omega_t$ is a $t$-th root of unity (so $\omega_t^t=\omega_t^0=1$).

We define the following expanded compressed oracle, where $D$ will now represent a multiset:
\begin{equation}\label{eq:co1}
    CO_t:\ket{x}\ket{D}_{\regD}\mapsto \ket{x}\ket{D\uplus \{x\}}_{\regD}
\end{equation}
Now, as long as an algorithm $\A^{(\cdot)}$ makes $<t$ queries, it is again the case that 
\begin{equation}\label{eq:co2}
    \E_{f:\{0,1\}^n\to [t]}[\A^f] = \Tr_{\regD}(\A^{CO_t})
\end{equation}
Note that setting $t=2^n$ allows us to handle all efficient algorithms $\A^{(\cdot)}$.

\paragraph{The main challenge} All of our results center around a solution for the following challenge: for a mixed state $\rho$, is it possible to construct a family of pure states $\ket{\psi_k}$ such that many copies of $\ket{\psi_k}$ function like many copies of $\rho$? 

An ideal solution to this question should look something like the following. Using $\ket{\psi_k}^{\otimes t}$, it should be possible to generate $\rho^{\otimes t}$. Similarly, using $\rho^{\otimes t}$, it should be possible to generate the mixed state $$\E_k\left[\ketbra{\psi_k}^{\otimes t}\right]$$

\paragraph{Example: random states from a family}
Let's start by considering a common example. Let $\{\ket{\phi_k}\}$ be some family of states. Let us consider $\rho$ the following distribution:
\begin{enumerate}
    \item Sample $i$ at random.
    \item Output $\ket{\phi_i}\ket{i}\ket{i}$
\end{enumerate}
As a mixed state, $\rho = \E_i [\ketbra{\phi_i}\otimes \ketbra{i}]$.

Our construction will be simple. Let $f:\{0,1\}^n\to \{0,1\}^n$ be some function sampled at random. Our state $\ket{\psi_f}$ will be defined by 
$$\ket{\psi_f} = \sum_{i} \omega_{2^n}^{f(i)} \ket{\phi_i}\ket{i}$$

This construction works because of the following key idea: \textbf{applying a random phase to a state is essentially the same as measuring it in the standard basis.}

In particular, consider generating $\ket{\psi}$ using $CO_{2^n}$ instead of $f$. Applying~\Cref{eq:co1,eq:co2} gives us
$$\E_{f}[\ketbra{\psi}] \propto \Tr_{\regD}\left(\sum_i \ket{\phi_i}\ket{i}\ket{\{i\}}_\regD\right)$$

Generalizing to $t$ copies we get
$$\E_{f}[\ketbra{\psi}^{\otimes t}] \propto \Tr_{\regD}\left(\sum_{i_1,\dots,i_t} \left(\bigotimes_{j=1}^t\ket{\phi_{i_j}}\ket{i_j}\right)\ket{\{i_1,\dots,i_t\}}_\regD\right)$$

Measuring the $\regD$ register tracks exactly what values $f$ was applied to, and so the residual state will be as if all $i$'s were measured, with the order information forgotten. Intuitively, applying a random phase oracle to $t$ different states measures all of them, but also permutes the order. Formally,
\begin{equation}
\begin{split}
\Tr_{\regD}\left(\sum_{i_1,\dots,i_t} \left(\bigotimes_{j=1}^t\ket{\phi_{i_j}}\ket{i_j}\right)\ket{\{i_1,\dots,i_t\}}_\regD\right)\\
= \Tr_{\regD}\left(\sum_{i_1,\dots,i_t}\sum_{\pi\in Sym(t)} \left(\bigotimes_{j=1}^t\ket{\phi_{i_{\pi(j)}}}\ket{i_{\pi(j)}}\right)\ket{\{i_1,\dots,i_t\}}_{\regD}\right)\\
= \E_{i_1,\dots,i_t}\left[\left(\sum_{\pi\in Sym(t)} \left(\bigotimes_{j=1}^t\ket{\phi_{i_{\pi(j)}}}\ket{i_{\pi(j)}}\right)\right)\left(\sum_{\pi\in Sym(t)} \left(\bigotimes_{j=1}^t\bra{\phi_{i_{\pi(j)}}}\bra{i_{\pi(j)}}\right)\right)\right]
\end{split}
\end{equation}

We can generate exactly the state $\E_f[\ketbra{\psi}^{\otimes t}]$ by sampling $i_1,\dots,i_t$ uniformly at random and then generating the state 
$$\sum_{\pi \in Sym(t)}\left(\bigotimes_{j=1}^t\ket{\phi_{i_{\pi(j)}}}\ket{i_{\pi(j)}}\right)$$

\paragraph{Erasing the index} Note that for most applications (such as pseudorandom states), the distribution we care about is 
$$\E_{i}\left[\ketbra{\phi_i}\right]$$
where the index $i$ is not revealed. Resolving this is simple, we simply hide the index behind another random function. In particular, let $f_1,f_2$ be two random functions. If we define
$$\ket{\psi_{f_1,f_2}} \propto \sum_{i} \omega_{2^n}^{f_1(i)} \ket{\phi_{f_2(i)}}\ket{i}$$
we can then generate exactly the state $\E_{f_1,f_2}\left[\ketbra{\psi_{f_1,f_2}}^{\otimes t}\right]$ by sampling $i_1,\dots,i_t$, $r_1,\dots,r_t$ uniformly at random and then generating the state
$$\sum_{\pi \in Sym(t)}\left(\bigotimes_{j=1}^t\ket{\phi_{r_{\pi(j)}}}\ket{i_{\pi(j)}}\right)$$

\paragraph{Handling general mixed states} To generalize this to mixed states, we make the observation that, by appending randomness, every mixed state looks like a random pure state from some family. In particular, let $\ket{\phi_k}_{\regA \regB}$ be the purification of some mixed state $\rho_{k,\regA}=\Tr_{\regB}(\ketbra{\phi_k}_{\regA \regB})$. Then applying a quantum one-time pad to the $\regB$ register exactly looks like tracing out $\regB$ and appending randomness. That is,
$$\Tr_{\regB}(\ketbra{\phi_k}_{\regA \regB})=\E_{x,z}\left[(I_{\regA}\otimes X^x_{\regB}Z^z_{\regB})\ketbra{\phi_k}(I_A\otimes Z^z_{\regB}X^x_{\regB})\right]$$
In particular, if we define the family $\ket{\phi_{k,x,z}'} = (I_{\regA}\otimes X^xZ^z)\ket{\phi_k}$, then
\begin{equation}\label{eq:otp}
    \rho_{\regA} \otimes I_{\regB}=\Tr_{\regB}(\ketbra{\phi}_{\regA \regB})\otimes I_{\regB} = \E_{x,z}[\ketbra{\phi_{x,z}'}_{\regA}]
\end{equation}
Thus, if we define 
$$\ket{\psi_{f_1,f_2}}\propto \sum_{i}\omega_{2^n}^{f_1(i)}\ket{\phi_{f_2(i)}'}\ket{i}$$
this looks like taking a few samples of $\rho$ and permuting them. 

In full detail, for a family $\rho_k = \Tr_B(\ket{\phi_k})$, we will define
$$\ket{\psi_{f_1,f_2,f_3,f_4}}\propto \sum_i \omega_{2^n}^{f_1(i)} (I_{\regA} \otimes X^{f_2(i)}_{\regB} Z^{f_3(i)}_{\regB})\ket{\phi_{f_4(i)}}\ket{i}$$

Now (with some error coming from the probability that $i_j=i_{j'}$), we have
$$\E_{f_1,f_2,f_3,f_4}[\ketbra{\psi_{f_1,f_2,f_3,f_4}}^{\otimes t}] \approx \Sim(\rho^{\otimes t})$$
where $\Sim(\rho^1,\dots,\rho^t)$ is defined by
\begin{enumerate}
    \item Sample $i_1,\dots,i_t$ at random
    \item Sample $r_1,\dots,r_t$ at random
    \item Generate the state
    $$\sum_{\pi\in Sym(t)}\bigotimes_{j=1}^t \ketbra{0}_{A_j} \otimes \ketbra{i_{\pi(j)}}_{\regK_j} \otimes \ketbra{r_{\pi(j)}}_{\regB_j}$$
    \item Swap $\rho^{i_j}$ into $\regA_j$ controlled on $K_j$ containing $k_i$.
    \item Output registers $\regA_1 \regK_1,\dots,\regA_t \regK_t$.
\end{enumerate}

And so, we can simulate a random $\ket{\psi_{f_1,f_2,f_3,f_4}}$ using samples from $\rho_k$. But the opposite is also true. Tracing out the $\ket{i}$ and $\regB$ registers in $\ket{\psi_{f_1,f_2,f_3,f_4}}$ leaves us exactly with the state $\rho$. And so for most purposes, $\ket{\psi_{f_1,f_2,f_3,f_4}}^{\otimes t}$ acts like $\rho^{\otimes t}$.

\subsection{Applications}

\paragraph{Uncloneable cryptography}
This result immediately implies that any cryptographic primitive with "i.i.d." security can be converted into one with "pure" security using a pseudorandom function. In particular, if a cryptographic protocol satisfies security against adversaries given "many copies" of some mixed output state, we can replace that mixed state with the pure state described in~\Cref{thm:intro:mainthm}. 

As applications of this idea, we show how to construct identical-copy secure copy protection and quantum money from i.i.d.-copy secure versions of both primitives. 

As an example, we will detail the argument for public key quantum money. An (i.i.d. secure) public key quantum money protocol is a triplet $(\gen,\mint,\verify)$, where
\begin{enumerate}
    \item $\gen \to (\pk,\sk)$ outputs a public key, secret key pair
    \item $\mint(\sk)\to (\rho,s)$ takes in a secret key, and produces a money state $\rho$ along with a serial number $s$
    \item $\verify(\pk, \rho, s) \to \{0,1\}$ checks whether the input $\rho,s$ is valid for the public key $\pk$
\end{enumerate}
The security definition stipulates that for any polynomial $t$, given $\mint(\sk)^{\otimes t} \to (\rho_1,s_1,\dots,\rho_t,s_t)$ it should be computationally infeasible to find $t+1$ money state, serial number pairs which pass verification (even if the money states can be entangled). 

But note that for any given $\sk$, $\mint(\sk)=(\rho,s)$ is some mixed state. And so we will set $\{\ket{\psi_k^{\sk}}\}$ to be the purification ensemble such that for a random $k$, $\ket{\psi_k}^{\otimes t}$ can be simulated using $\mint(\sk)^{\otimes t}$. 

Then, we can see that for a random $k$, $\ket{\psi_k}$ is $t\to t+1$ uncloneable. Given $\ket{\psi_k}^{\otimes t}$, it should be impossible to generate $\ket{\psi_k}^{\otimes t+1}$. If we could, then we could run the same procedure on $\Sim(\mint(\sk)^{\otimes t})$ to generate $t+1$ accepting money states. In particular, the sampler which outputs $(\ket{\psi_k^{\sk}},\pk)$ will be what we call a "multi-copy secure mini-scheme", which using standard techniques can be upgraded into a multi-copy secure quantum money protocol.

Similar (albeit simpler) approaches work generically for most primitives with cryptographic uncloneability satisfying "i.i.d.-security", and in the main body of the paper we show that this works for a broad class of such objects.

\paragraph{Pseudorandom states}
We can also use our main theorem (\Cref{thm:intro:mainthm}) to construct $t$-copy secure pseudorandom states from one-copy secure pseudorandom states. We will assume without loss of generality that the one-copy secure pseudorandom state generator $G(k)$ acts as follows
\begin{enumerate}
    \item Apply a unitary $U_G$ to the state $\ket{k}\ket{0}$, producing a state $\ket{\phi_k}_{\regA \regB}$.
    \item Output $\Tr_{\regB}(\ketbra{\phi}_{\regA\regB})$.
\end{enumerate}
Then, we can instantiate~\Cref{thm:intro:mainthm} with $\{\ket{\phi_k}_{\regA \regB}\}$, where $f_1,f_2,f_3,f_4$ are $2t$-wise independent hash functions. We get a family of states $\{\ket{\psi_{\wt{k}}}\}$ such that 
$$\E_{\wt{k}}\left[\ketbra{\psi_{\wt{k}}}^{\otimes t}\right]$$
can be simulated with $G(k_1),\dots,G(k_t)$ for $k_1,\dots,k_t$ chosen at random.

By one-copy security, we can replace each $G(k_1),\dots,G(k_t)$ with a Haar random state. Since one copy of a Haar random state is indistinguishable from a random string, we get 
$$\E_{\wt{k}}\left[\ketbra{\psi_{\wt{k}}}^{\otimes t}\right] \approx \E_{r_1,\dots,r_t}\left[\Sim(\ketbra{r_1}\otimes \dots \otimes \ketbra{r_t})\right]$$
where $\Sim$ is the algorithm from~\Cref{thm:intro:mainthm}.

We complete the argument by explicitly computing the mixed state
$$\E_{r_1,\dots,r_t}\left[\Sim(\ketbra{r_1}\otimes \dots \otimes \ketbra{r_t})\right]$$
and showing that it is statistically close to $t$ copies of a Haar random state. And so, $\ket{\psi_{\wt{k}}}^{\otimes t}$ for a random key $\wt{k}$ is also indistinguishable from a Haar random state.

Note that the key $\wt{k}$ contains a key for a $2t$-wise independent hash function with output length the length of the ancilla register. And so, our key grows with the number of ancillas used by the construction.

\subsection{Extension to unitaries}

In order to achieve copy-expansion for pseudorandom unitaries, we first prove a variant of~\Cref{thm:intro:mainthm} for the unitary setting. 
\begin{theorem}[Unitary Setting Main Theorem; Informal]
    Consider a family of unitaries $\{U_i\}_{i \in \{0,1\}^n}$. Then there exists another family of unitaries $\{\wt{U}_{\wt{k}}\}$ such that one parallel query to $\wt{U}_{\wt{k}}$ for a random $\wt{k}$ (formally, a single query to the map $\E_{\wt{k}}[\wt{U}_{\wt{k}}^{\otimes t}]$) can be efficiently approximated by making a single query to the map
    $$\E_{r_1,\dots,r_t}\left[U_{r_1}\otimes \dots \otimes U_{r_t}\right]$$
\end{theorem}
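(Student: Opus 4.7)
The plan is to mimic the state construction of Theorem~\ref{thm:intro:mainthm}, replacing the uniform superposition $\sum_i\omega_{2^n}^{f_1(i)}(I\otimes X^{f_2(i)}Z^{f_3(i)})\ket{\phi_{f_4(i)}}\ket{i}$ by a unitary that applies $U_{f_4(c)}$ controlled on a dedicated ``control'' register. Concretely, let the key $\wt{k}=(f_1,f_2,f_3,f_4)$ encode four $2t$-wise independent functions from $\{0,1\}^n$ to $\{0,1\}^n$, and define
\[ \wt{U}_{\wt{k}}\ket{x}_{\regA}\ket{c}_{\regC}\ket{y}_{\regB} \;=\; \omega_{2^n}^{f_1(c)}\,\bigl(U_{f_4(c)}\ket{x}\bigr)_{\regA}\otimes \ket{c}_{\regC} \otimes \bigl(X^{f_2(c)}Z^{f_3(c)}\ket{y}\bigr)_{\regB}. \]
This is unitary because each factor is unitary and the factors act on registers indexed by the computational-basis value $c$ in $\regC$. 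A parallel query $\wt{U}_{\wt{k}}^{\otimes t}$ then acts on $t$ triples $\regA_j\regC_j\regB_j$ supplied by the adversary.

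Next I would describe the simulator, which is allowed one parallel query to $U_{r_1}\otimes\cdots\otimes U_{r_t}$. Following the recipe of Theorem~\ref{thm:intro:mainthm}, the simulator samples a uniform permutation $\pi\in\mathrm{Sym}(t)$, uniform Pauli strings $x_j,z_j$, and uniform indices $c_j$; uses swap gates to route $\regA_j\mapsto \regA_{\pi(j)}$; issues the single external query so that $U_{r_j}$ acts on the re-routed $\regA$-slot; routes back; applies $X^{x_j}Z^{z_j}$ to $\regB_{\pi(j)}$; and overwrites $\regC_{\pi(j)}$ with $\ket{c_j}$. The correctness claim to establish is that this channel is $O(t^2/2^n)$-close, in diamond distance on the full input space, to $\E_{\wt{k}}[\wt{U}_{\wt{k}}^{\otimes t}\,(\cdot)\,(\wt{U}_{\wt{k}}^{\otimes t})^{\dagger}]$.

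The correctness argument I have in mind lifts the compressed-oracle analysis used for Theorem~\ref{thm:intro:mainthm}. Replace each $f_i$ by a compressed-oracle register $\regD_i$; averaging over $\wt{k}$ then becomes tracing out the $\regD_i$'s. Exactly as in the state theorem, the compressed-oracle representation of the phase oracle $f_1$ records the multiset of $c_j$'s held in the $\regC_j$ registers, which is morally a measurement of those registers; with probability at least $1-O(t^2/2^n)$ the recorded multiset has $t$ distinct entries. Conditioning on that event, $f_4$ supplies i.i.d.\ uniform labels $r_j$, so the net action on $\regA_j$ is exactly $U_{r_j}$, and the pair $(f_2,f_3)$ contributes a fresh uniformly random Pauli on each $\regB_j$. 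The residual channel then matches the simulator's output up to the permutation $\pi$ that encodes the lost order of the measured $c_j$'s.

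The hard part will be verifying that this compressed-oracle ``measurement'' of the $\regC$ registers still goes through when the input on $\regA_1\cdots\regA_t\,\regB_1\cdots\regB_t$ is an arbitrary state that may be entangled across the $t$ arms and with the adversary's private workspace. In the state-theorem setting the corresponding analysis starts from a fixed trivial input, so the only quantum state to track is the one the construction produces internally. Here, by contrast, I need to argue that the compressed-oracle bookkeeping on $\regC$ commutes with the controlled unitary $U_{f_4(c)}$ on $\regA$ and the Pauli pad on $\regB$ (which it does, because $f_1$ touches only $\regC$), and that once distinctness of the $c_j$'s is established the $f_4$ and $(f_2,f_3)$ oracles decouple arm-by-arm independently of whatever input the adversary supplied. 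Making this decoupling quantitative and uniform over all adversary inputs, while preserving the $O(t^2/2^n)$ error bound, is where I expect the main technical effort of the proof to lie.
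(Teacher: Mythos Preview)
Your construction has a genuine gap at the control register $\regC$. Since $\wt U_{\wt k}$ uses $\regC$ only as a classical control and returns it unchanged, the real channel $\E_{\wt k}[\wt U_{\wt k}^{\otimes t}(\cdot)(\wt U_{\wt k}^{\otimes t})^\dagger]$ preserves whatever the adversary placed there; but your simulator ``overwrites $\regC_{\pi(j)}$ with $\ket{c_j}$'' for fresh random $c_j$, so an adversary who inputs $\ket{0}$ on every $\regC_j$ and then measures $\regC$ distinguishes the two channels perfectly. Dropping the overwrite does not help either: if the adversary puts the \emph{same} value $c$ on all $\regC_j$, the real channel applies the \emph{same} unitary $U_{f_4(c)}$ to every arm, and one call to $U_{r_1}\otimes\cdots\otimes U_{r_t}$ with independent $r_j$ cannot reproduce this. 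Your claim that the recorded multiset has $t$ distinct entries with probability $1-O(t^2/2^n)$ is imported from the state theorem, where it holds only because the construction itself prepares $\regC$ in uniform superposition; here the adversary supplies $\regC$, so the claim is simply false in general.

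The paper handles this differently. Its core simulation theorem uses only the random phase (your $f_1$) together with the controlled application $\Apply^{\U}$---no $f_2,f_3$, no $\regB$ register---and is stated \emph{conditionally} on the input already lying in the distinct subspace of the control registers, where the simulation is exact rather than approximate. The simulator coherently records the multiset of control values into an ancilla, measures it to obtain classical $k_1,\ldots,k_t$, routes each input slot to the position matching its control value via $\Select$ swaps, applies $U_{k_1},\ldots,U_{k_t}$ once each, and routes back; the control registers are never overwritten. The distinctness precondition is then discharged only in the PRU application, by \emph{prepending a $t$-design} to the control register and invoking a path-recording lemma showing that after Haar scrambling the state lies in the distinct subspace up to $O(t^2/2^{\ell'})$. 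Your $f_4$ appears at that outer level (as the key-relabeling hash $f_{k_f}$), not inside the simulation theorem itself. This prepended $t$-design is the missing idea in your proposal.
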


The proof then follows roughly the same structure as copy-expansion for pseudorandom states. In particular, for $\{U_k\}$ a (pure) pseudorandom unitary family, a parallel query to $t$-copies of a unitary from $\{\wt{U}_{\wt{k}}\}$ will be indistinguishable from a single query to the simulator, which queries $U_{r_1}\otimes \dots \otimes U_{r_t}$ once for random $r_1,\dots,r_t$. Since $\{U_k\}$ is a one-time pseudorandom unitary, we can thus replace $U_{r_1},\dots, U_{r_t}$ with truly random unitaries. It then remains to show that the simulator we define when instantiated with truly random unitaries is itself indistinguishable from $t$ queries to a truly random unitary. We prove this via a careful use of the path-recording method from~\cite{MH24}.

Note that we need the pseudorandom unitary to be pure in order to make implementing our construction possible in the first place.

We thus get that if one-copy secure pseudorandom unitaries with sufficiently compact keys exist, then $t$-copy non-adaptively secure pseudorandom unitaries exist with a key that grows linearly with the number of copies.\\
\section{Preliminaries}
\label{sec:prelims}
\noindent We denote the security parameter to be $\secparam$. We denote $\negl(\cdot)$ to be a negligible function. We denote ${\sf wt}(\cdot)$ to be the Hamming weight. 

\paragraph{Notation.} A register $\regA$ is a named finite-dimensional Hilbert space. If $\regA$ and $\regB$ are registers, then $\regA \otimes \regB$ denotes the tensor product of the two associated Hilbert spaces. For a set $S$, we denote $\mathcal{H}(S)$ to be the $|S|$-dimensional Hilbert space spanned by $\ket{x}$ for $x\in S$. We define $\Unitary(\hilbert_\regA)$ as the set of all unitary operators acting on a Hilbert space $\Unitary(\hilbert_\regA)$. For $N \in \N$, we define $\Unitary(N)$ as the set of all $N$-dimensional unitary operators. We denote the identity operator on a register $\regA$ to be $I_{\regA}$. On an $(m+n)$-qubit state $\ket{\psi}$, if an $m$-qubit unitary $U$ is applied on the first $m$ qubits and a unitary $V$ is applied on the last $n$ qubits then we denote this by $(U_m \otimes V_n)\ket{\psi}$.
\par We denote by $\TD(\rho, \rho') = \frac{1}{2} \norm{\rho - \rho'}_1$ the trace distance between operators $\rho$ and $\rho'$, where $\norm{X}_1 = \Tr(\sqrt{X^{\dagger} X})$ is the trace norm. We use $\rho \approx_{\varepsilon} \rho'$ to denote the fact that $\TD(\rho,\rho')=\varepsilon$.
\par We denote $Sym([t])$ to be the symmetric group, consisting of all the permutations mapping $[t]$ to $[t]$.

\subsection{Type States}
\begin{definition}[Type vectors]
    Denote $[s]_0 \coloneqq [s] \cup \{0\}$. An $(\ell,s)$-type vector is a vector $T \in [s]_0^{\ell}$ such that $\sum_i T_i = s$. 

    We denote 
    $$\types(\ell, s) \coloneqq \left\{T\in [s]_0^\ell : \sum_i T_i = s\right\}$$
    to be the set of $(\ell,s)$-type vectors.

    For any $s$ length vector $\vec{v} \in [\ell]^s$, we say the type of $\vec{v}$ matches $T$, or $\type(\vec{v}) = T$ if, for each $i \in [\ell]$, the number of times $i$ appears in $\vec{v}$ is exactly $T_i$.

    Each type vector then defines a set of "matching" vectors in $[\ell]^s$. We define
    $$S_T\coloneqq \{\vec{v}\in [\ell]^s : \type(\vec{v}) = T\}.$$
    For ease of notation, if $\type(\vec{v}) = T$, we will sometimes write $\vec{v} \in T$ instead of $\vec{v} \in S_T$. Similarly, we will denote $|T|\coloneqq |S_T|$.
\end{definition}

\begin{definition}
\label{def:prelims:typestates}
Let $\ell=\poly(\secparam)$ and $t \leq \ell$. Suppose $\{\ket{\psi_i}\}_{i \in [\ell]}$ be an arbitrary collection of $n$-qubit states.  We define the following state $\ket{\setst_{\Psi,t,u}}$, for any $u \in [t]^{2^n}$ as follows: 
$$\ket{\setst_{\Psi,t,u}} \propto \sum_{\type((i_1,\ldots,i_t))= u} \ket{i_1 \ldots i_t}\ket{\psi_{i_1} \cdots \psi_{i_t}},$$
where $\Psi=\{\ket{\psi_{i_1}},\ldots,\ket{\psi_{i_t}}\}$.
\end{definition}

\subsection{$t$-wise independence and $t$-designs}


\begin{definition}
    A $t$-wise independent hash function is a family of functions $\{f_k:[N]\to [M]\}$ such that for all $x_1\neq \dots \neq x_t\in [N]$, for all $y_1,\dots,y_t\in [M]$,
    $$\Pr_{k\gets [N]}[f_k(x_1)=y_1\wedge\dots\wedge f_k(x_t)=y_t] = \frac{1}{M^t}$$
\end{definition}

\begin{theorem}[\cite{2twise}]\label{lem:twise}
    Let $\mathcal{F}=\{f:[N]\to [M]\}$. Let $\{f_k:[N]\to[M]\}$ be a $2t$-wise independent hash function family. Let $\A^{(\cdot)}$ be any (possibly inefficient) $t$ query quantum algorithm. Then
    $$\abs{\Pr_{k\gets [N]}[\A^{f_k}\to 1]-\Pr_{f\gets \mathcal{F}}[\A^f\to 1]} = 0$$
\end{theorem}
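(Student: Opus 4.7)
The plan is to apply Zhandry's polynomial method. The key observation is that the acceptance probability of a $t$-query quantum algorithm is a fixed polynomial of total degree at most $2t$ in the Boolean indicator variables $P_{x,y} \coloneqq \Ind[f(x)=y]$, and the expectation of any such polynomial over $f$ is determined entirely by the joint distribution of $f$ on any set of at most $2t$ inputs.

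First I would fix the oracle convention as $O_f\ket{x}\ket{y}\ket{z} = \ket{x}\ket{y\oplus f(x)}\ket{z}$ and write $\A^f = U_t O_f U_{t-1} \cdots U_1 O_f U_0$. Each matrix entry of $O_f$ in the computational basis is either $0$ or equal to some $P_{x,y\oplus y'}$. Expanding any amplitude $\bra{w}\A^f\ket{0}$ as a sum over computational-basis paths through the circuit, each path contributes a product of exactly $t$ oracle matrix entries, so the amplitude becomes a polynomial of degree at most $t$ in $\{P_{x,y}\}$ whose coefficients are fixed by $U_0,\ldots,U_t$ and are independent of $f$.

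Next I would square and sum. The acceptance probability $\Pr[\A^f \to 1] = \sum_{w \text{ accepting}} \abs{\bra{w}\A^f\ket{0}}^2$ is then a fixed polynomial $Q$ of degree at most $2t$ in $\{P_{x,y}\}$. Every monomial of $Q$ mentions at most $2t$ distinct inputs from $[N]$. For any monomial $P_{x_1,y_1}\cdots P_{x_s,y_s}$ with $s \le 2t$ and distinct $x_i$, we have
$$\E[P_{x_1,y_1}\cdots P_{x_s,y_s}] = \Pr[f(x_1)=y_1 \wedge \cdots \wedge f(x_s)=y_s],$$
which is an $s$-wise marginal of $f$. By the definition of $2t$-wise independence this marginal is identical for $f_k$ with $k$ drawn from the key space and for $f \gets \mathcal{F}$, so by linearity of expectation $\E_k[Q] = \E_f[Q]$, giving exactly zero distinguishing advantage.

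The hard part will be tightly justifying the degree bound in the first step. I would need to verify that each oracle invocation contributes exactly one linear factor in the $P_{x,y}$'s per computational-basis path, and that no hidden dependence on $f$ leaks through the fixed unitaries $U_i$ or through the squaring step (where the conjugate amplitude introduces another polynomial of degree $t$, so care is needed that the complex conjugate of a monomial in real indicator variables is still a monomial of the same degree, yielding total degree $2t$ rather than doubling again). Once the polynomial representation is in place, the $2t$-wise independence exchange is immediate and yields equality (not just negligible closeness), matching the $0$ on the right-hand side of the theorem statement.
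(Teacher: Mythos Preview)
The paper does not prove this theorem; it is stated as a preliminary result with a citation to~\cite{2twise} (Zhandry's work) and used as a black box. Your proposal is exactly the standard polynomial-method argument from that reference, and it is correct.

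One small point worth tightening: you jump from ``every monomial mentions at most $2t$ distinct inputs'' to analyzing a monomial $P_{x_1,y_1}\cdots P_{x_s,y_s}$ with \emph{distinct} $x_i$, without explicitly disposing of monomials where some $x_i$ repeat. This is easy---if $x_i=x_j$ with $y_i\neq y_j$ the product vanishes identically, and if $y_i=y_j$ then $P_{x_i,y_i}^2=P_{x_i,y_i}$ so one can multilinearize---but since the paper's definition of $2t$-wise independence is stated only for distinct query points, a line making this reduction explicit would close the gap cleanly. With that in place the argument is complete and matches the cited source.
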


\begin{theorem}[\cite{efftwise}]\label{thm:efftwise}
    For all $t,n,\ell'<n$, there exists a $t$-wise independent hash function from $\ell'$ bits to $n$ bits with key length $O(tn)$.
\end{theorem}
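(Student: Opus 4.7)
The plan is to prove this via the standard polynomial hashing construction over the finite field $GF(2^n)$. Since $\ell' < n$, we can canonically embed any input $x \in \{0,1\}^{\ell'}$ into $GF(2^n)$ by zero-padding. The key will consist of $t$ field elements $a_0, a_1, \ldots, a_{t-1} \in GF(2^n)$, each of which can be represented using $n$ bits, so the total key length is $tn = O(tn)$ as required. Given a key $k = (a_0, \ldots, a_{t-1})$, define
\[
f_k(x) \coloneqq \sum_{i=0}^{t-1} a_i \cdot x^i,
\]
where arithmetic is performed in $GF(2^n)$ and the output is interpreted as an $n$-bit string.

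The main step is to verify $t$-wise independence. Fix any distinct inputs $x_1, \ldots, x_t \in GF(2^n)$ and any targets $y_1, \ldots, y_t \in GF(2^n)$. The condition $f_k(x_j) = y_j$ for every $j \in [t]$ is equivalent to the linear system
\[
V \cdot (a_0, a_1, \ldots, a_{t-1})^T = (y_1, \ldots, y_t)^T,
\]
where $V$ is the $t \times t$ Vandermonde matrix with $V_{j,i} = x_j^i$. Because the $x_j$ are pairwise distinct, $V$ is invertible over $GF(2^n)$, so the system has exactly one solution $(a_0, \ldots, a_{t-1})$. Hence, when the key is sampled uniformly from $GF(2^n)^t$, the probability that $f_k(x_j) = y_j$ for all $j$ is exactly $1 / 2^{nt} = 1/M^t$, matching the definition of $t$-wise independence.

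The only subtlety is ensuring that this construction applies uniformly over all choices of $t$ and $n$. This reduces to a standard fact from algebra: a representation of $GF(2^n)$ (via an irreducible polynomial of degree $n$ over $GF(2)$) exists for every $n \geq 1$ and admits efficient arithmetic, so sampling the key and evaluating $f_k$ can both be carried out in time $\poly(t, n)$. Since the theorem as stated concerns only the existence and key length, these computational aspects are not strictly required, but they come for free from the construction. I do not foresee a major obstacle; the Vandermonde argument is the heart of the proof and is completely standard.
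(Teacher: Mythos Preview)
Your proof is correct; this is exactly the standard polynomial-evaluation construction over $GF(2^n)$, and the Vandermonde argument establishes $t$-wise independence cleanly with key length $tn$. The paper itself does not prove this statement at all: it is quoted as a known result with a citation and no accompanying argument, so there is nothing to compare against beyond noting that your construction is the canonical one underlying that citation.
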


\begin{definition}
    A $\epsilon$-approximate unitary $t$-design is a family of unitaries $\{U_k\}$ over $\mathcal{H}([N])$ such that for all (possibly inefficient) $t$ non-adaptive query quantum algorithms $\A^{(\cdot)}$:
    $$\abs{\Pr_{k\gets [N]}[\A^{\U_k}\to 1] - \Pr_{U\gets \Haar{[N]}}[\A^{U}\to 1]} \leq \epsilon$$
\end{definition}

\begin{theorem}[\cite{designs}]\label{thm:effdesign}
    For all $n\in \N,\epsilon\in (0,1)$, there exists an $\epsilon$-efficient approximate $t$-design on $\mathcal{H}(\{0,1\}^n)$ with key length $O\left(nt+\log\frac{1}{\epsilon}\right)$.
\end{theorem}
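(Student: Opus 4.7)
The plan is to build the $t$-design from a random quantum circuit, combining a spectral-gap argument for convergence with derandomization to achieve the tight seed length. The starting point is the standard observation (Brandao--Harrow--Horodecki) that if $\mathcal{E}$ is a distribution on $\Unitary(2^n)$ whose $t$-th moment operator $M^{(t)}_{\mathcal{E}} = \E_{U\sim \mathcal{E}}[U^{\otimes t}\otimes (U^*)^{\otimes t}]$ has spectral gap $\Delta$ above the Haar projector, then composing $D$ independent samples from $\mathcal{E}$ yields an $\epsilon$-approximate $t$-design whenever $D = \Omega(\log(2^{nt}/\epsilon)/\Delta)$.

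First, I would take $\mathcal{E}$ to be one layer of a brickwork circuit in which each two-qubit gate is sampled from a fixed exact two-qubit $t$-design (for instance the two-qubit Clifford group, supplemented by a few random non-Clifford rotations to ensure universality on the $t$-th moment). Each layer costs $O(n)$ random bits. Using known spectral-gap estimates for such layered random circuits, $\Delta$ can be shown to be at least $1/\poly(n,t)$, so depth $D = \poly(n,t)\cdot \log(1/\epsilon)$ suffices and the naive seed length is $O(n\cdot D)$, which is still too large to match the stated bound.

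To collapse this to the target $O(nt + \log(1/\epsilon))$ I would apply two derandomization steps. First, rather than drawing a fresh gate independently at every layer, I would sample gate labels along a random walk on an expander graph whose stationary distribution matches the target gate distribution; standard expander Chernoff bounds translate the $\log(1/\epsilon)$ factor from multiplicative on the full seed to a single additive $O(\log(1/\epsilon))$ term, leaving only the base randomness for one well-mixed layer. Second, I would replace the independent gate choices within the remaining randomness budget by a $2t$-wise independent generator: by~\Cref{lem:twise}, this is indistinguishable from uniform against any $t$-query adversary acting on the circuit description, which is exactly the regime that governs the moment-operator analysis, and by~\Cref{thm:efftwise} such a generator has seed length $O(nt)$.

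The main obstacle is the spectral-gap bound. Showing $\Delta \geq 1/\poly(n,t)$ with a $t$-dependence mild enough to survive the combined derandomization is where the cited construction does its real work: one typically rephrases $I - M^{(t)}_{\mathcal{E}}$ as a frustration-free local Hamiltonian whose zero-energy space coincides with the Haar-invariant (permutation) subspace, and then applies the detectability lemma or Nachtergaele's martingale method to transfer a constant two-qubit gap to a $1/\poly(n,t)$ gap on the full $n$-qubit circuit. Once the gap is in hand, the two derandomization steps are modular, and summing the contributions yields the stated key length $O(nt + \log(1/\epsilon))$.
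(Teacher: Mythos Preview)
This theorem is not proved in the paper; it is imported wholesale from the cited reference. There is therefore no ``paper's own proof'' to compare against, and your proposal should be read as an attempted independent proof of a result the paper simply uses as a black box.

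On the substance of your sketch, there is a concrete gap in the second derandomization step. You invoke \Cref{lem:twise} to justify replacing independent gate choices by a $2t$-wise independent generator, arguing that this is indistinguishable to ``any $t$-query adversary acting on the circuit description.'' But \Cref{lem:twise} concerns an adversary making $t$ oracle queries to the function itself. The adversary in the $t$-design definition makes $t$ queries to the \emph{composed unitary} $U$, and each such query evaluates \emph{every} gate in the circuit. There is no reduction from $t$ queries to $U$ to $t$ queries to the gate-labelling function, so the lemma does not apply. What $2t$-wise independence actually buys you is that the $t$-th moment operator of a \emph{single layer} is preserved exactly; it says nothing about a depth-$D$ composition, whose moment operator has degree $2tD$ in the gate variables.

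The first derandomization step is also not sharp enough as stated. An expander walk gives $D$ correlated samples using roughly $r + O(D)$ bits, where $r$ is the randomness for the starting vertex. With $D = \poly(n,t)\cdot \log(1/\epsilon)$ you still pay $\poly(n,t)\cdot \log(1/\epsilon)$ bits from the $O(D)$ term, not the additive $O(\log(1/\epsilon))$ you claim. Moreover, expander-walk derandomization is standard for averaging-type quantities (Chernoff-style bounds), but applying it to a \emph{product} of moment operators so that the spectral-gap contraction survives is an additional argument you would need to supply. As written, the two steps neither individually nor jointly reach the target seed length $O(nt + \log(1/\epsilon))$.
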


\subsection{Pseudorandomness}

\subsubsection{Haar Measure}
\label{sec:haarmeasure}
\noindent Haar measure is a unique left-invariant (and right-invariant) measure on the unitary group. We denote the Haar measure on the $n$-qubit unitary group to be $\Haar_n$. 
\par We state a well known fact on 1-designs below. 

\begin{theorem}
Suppose $\rho_{\regA \regB}$ is an $n+m$ qubit state on two registers $\regA$ (first $n$ qubits) and $\regB$ (last $m$ qubits). Then the following holds: 
$$\E_{\substack{a \xleftarrow{\$} \{0,1\}^n\\ b \xleftarrow{\$} \{0,1\}^n}} \left[ \left(X^{a}_{\regA}Z^b_{\regA} \otimes I_{\regB} \right) \rho \left(X^{a}_{\regA}Z^b_{\regA} \otimes I_{\regB} \right) \right] = \frac{1}{2^n} I_{\regA} \otimes \Tr_{\regA}\left( \rho_{\regA \regB} \right) $$
\end{theorem}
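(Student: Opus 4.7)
The plan is to establish this via the standard Pauli twirl (quantum one-time pad) identity, treating register $\regB$ as a passive bystander throughout. The proof naturally splits into two steps: first prove the single-register version on $\regA$, then tensor in $\regB$ by linearity.

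First I would establish the core identity that for any linear operator $M$ on the $n$-qubit Hilbert space,
$$\frac{1}{4^n}\sum_{a,b\in\{0,1\}^n} (X^a Z^b)\, M\, (X^a Z^b)^{\dagger} \;=\; \frac{\Tr(M)}{2^n}\, I_n.$$
This is the standard fact that the $n$-qubit Pauli group is a unitary $1$-design. To verify it, I would expand $M$ in the matrix-unit basis $\{\ket{i}\bra{j}\}_{i,j}$ and compute
$$(X^a Z^b)\,\ket{i}\bra{j}\,(X^a Z^b)^{\dagger} \;=\; (-1)^{b\cdot(i\oplus j)}\,\ket{i\oplus a}\bra{j\oplus a}.$$
Averaging over $b$ gives a factor $2^n\delta_{i,j}$, so only the diagonal matrix units survive. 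Averaging over $a$ then spreads each $\ket{i}\bra{i}$ uniformly, producing $\sum_{k}\ket{k}\bra{k}=I_n$. Combining these gives the stated prefactor $\tfrac{\Tr(M)}{2^n}$.

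Next I would lift this to the bipartite setting. Decompose the joint state in the $\regA$-basis as
$$\rho_{\regA\regB} \;=\; \sum_{i,j\in\{0,1\}^n} \ket{i}\bra{j}_{\regA}\otimes\sigma^{(i,j)}_{\regB},$$
so that by definition of the partial trace, $\sum_{i} \sigma^{(i,i)}_{\regB} = \Tr_{\regA}(\rho_{\regA\regB})$. Since $I_{\regB}$ acts trivially, the averaging over $(a,b)$ acts block-by-block on the $\regA$ factor. Applying the single-register identity to each $\ket{i}\bra{j}_{\regA}$ replaces it by $\delta_{i,j}\,\tfrac{1}{2^n} I_{\regA}$, yielding
$$\E_{a,b}\!\left[(X^a Z^b\otimes I_{\regB})\,\rho_{\regA\regB}\,(X^a Z^b\otimes I_{\regB})^{\dagger}\right] \;=\; \frac{1}{2^n}\, I_{\regA}\otimes \sum_{i}\sigma^{(i,i)}_{\regB} \;=\; \frac{1}{2^n}\, I_{\regA}\otimes \Tr_{\regA}(\rho_{\regA\regB}),$$
which is exactly the claim.

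There is no real obstacle here — this is a routine finite-dimensional calculation. The only bookkeeping point worth flagging is that each Pauli string $X^a Z^b$ is Hermitian up to a global sign $(-1)^{a\cdot b}$, so the two sides of the sandwich $(X^a Z^b)\,\rho\,(X^a Z^b)$ carry matching signs that cancel; thus the expression as written agrees with the conjugation $(X^a Z^b)\,\rho\,(X^a Z^b)^{\dagger}$ used in the computation above, and one recovers the standard quantum one-time pad identity.
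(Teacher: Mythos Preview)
The paper does not prove this statement at all; it is introduced as ``a well known fact on 1-designs'' and left without proof. Your argument for the standard quantum one-time pad identity
\[
\frac{1}{4^n}\sum_{a,b}(X^aZ^b)\,M\,(X^aZ^b)^{\dagger}=\frac{\Tr(M)}{2^n}\,I
\]
is correct and is exactly the usual proof via expansion in matrix units, followed by linearity to pass to the bipartite setting.

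There is, however, a genuine error in your final paragraph. You claim that because $X^aZ^b$ is Hermitian up to the sign $(-1)^{a\cdot b}$, the expression $(X^aZ^b)\rho(X^aZ^b)$ agrees with $(X^aZ^b)\rho(X^aZ^b)^{\dagger}$. It does not: from $(X^aZ^b)^{\dagger}=(-1)^{a\cdot b}X^aZ^b$ one gets
\[
(X^aZ^b)\,\rho\,(X^aZ^b)=(-1)^{a\cdot b}\,(X^aZ^b)\,\rho\,(X^aZ^b)^{\dagger},
\]
with a single sign that does \emph{not} cancel. In fact the identity as literally printed in the paper is false: for $n=1$ and $\rho=\ketbra{0}$ one computes $\tfrac14\sum_{a,b}(X^aZ^b)\ketbra{0}(X^aZ^b)=\tfrac12\ketbra{0}\neq\tfrac12 I$, because the $(a,b)=(1,1)$ term contributes $-\ketbra{1}$ rather than $+\ketbra{1}$. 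The discrepancy is simply a missing adjoint in the paper's statement (equivalently, the right factor should read $Z^bX^a$). The correct response is to note the typo and prove the intended identity with the adjoint, which you have already done; do not attempt to argue that the two expressions coincide.
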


\noindent Using the Haar measure, we can correspondingly define a distribution on quantum states. We define the Haar distribution on $n$-qubit states, denoted by $\mu_n$, to be the following distribution: output $U\ket{0^n}$, where $U$ is sampled from the Haar measure $\Haar_n$.

\subsubsection{Pseudorandom States}
\label{sec:defs:prs} 
\noindent We recall the notion of pseudorandom state generators, a computational generalization of $t$-state designs.

\begin{definition}[Pseudorandom State Generator~\cite{JLS18}]
\label{def:prsg}
Let $\ell_k(\secparam),\ell_n(\secparam)$ be polynomially bounded functions. A $(\ell_k,\ell_n)$-\emph{pseudorandom state generator} (PRSG) is a polynomial-sized quantum algorithm $G$ that takes as input a classical string $k \in \{0,1\}^{\ell_k}$ (called the \emph{seed}) and outputs an $\ell_n$-qubit quantum state $G(k)=\rho_k$. It satisfies the following property: For any polynomial $t = t(\lambda)$ and any quantum polynomial-time algorithm $\adversary$ that receives $t$ copies of either $G(k)$ or a Haar random $\ell_n$-qubit state $\ket{\psi}$, we have:
\begin{align}
\left| \Pr_{k \leftarrow \{0,1\}^{\ell_k(\secparam)}} [\adversary(1^{\secparam},G(k)^{\otimes t}) = 1] - \Pr_{\ket{\psi} \leftarrow \mu_{\ell_n}} [\adversary(1^{\secparam},\ket{\psi}^{\otimes t}) = 1] \right| \leq \negl(\lambda)
\end{align}
where the probability is taken over the choice of the seed $k$, the Haar random state $\ket{\psi}$, and the internal randomness of $\adversary$.
\end{definition}

\noindent \cite{JLS18,BrakerskiS20,AGQY22} showed that pseudorandom state generators exist under the assumption of post-quantum assumptions. 
\par We consider a bounded copy variant of the above definition below. 

\begin{definition}[Bounded-Copy Pseudorandom State Generator~\cite{JLS18}]
\label{def:bounded:prsg}
Let $\ell_k,\ell_n$ be two polynomially bounded function and let $t = t(\lambda)$ be a polynomial. A $(\ell_k,\ell_n,t)$-\emph{pseudorandom state generator} (PRSG) is a polynomial-sized quantum algorithm $G$ that takes as input a classical string $k \in \{0,1\}^{\ell_k}$ (called the \emph{seed}) and a parameter $t$, and outputs an $\ell_n$-qubit quantum state $G(k,t)=\ket{\psi_k}$. It satisfies the following property: For any quantum polynomial-time algorithm $\adversary$, we have:
\begin{align}
\left| \Pr_{k \leftarrow \{0,1\}^{\ell_n}} [\adversary(1^{\secparam},G(k)^{\otimes t}) = 1] - \Pr_{\ket{\psi} \leftarrow \mu_{\ell_n}} [\adversary(1^{\secparam},\ket{\psi}^{\otimes t}) = 1] \right| \leq \negl(\lambda)
\end{align}
where the probability is taken over the choice of the seed $k$, the Haar random state $\ket{\psi}$, and the internal randomness of $\adversary$.
\end{definition}

\noindent Note that if $\ell_k \geq \ell_n\cdot t$ then it is possible to even achieve statistical security (i.e. $t$-state designs) in the above definition. However, we do not place any such restrictions in this work. In more detail, our results will hold for inherently cryptographic parameter regimes, e.g. $\ell_k \ll \ell_n$. \pnote{define stretch PRSGs here}

\subsubsection{Pseudorandom Unitaries}
\noindent In the same work,~\cite{JLS18} also defined pseudorandom unitaries, which are a computational generalization of $t$-unitary desgins. As in the case of~\Cref{sec:defs:prs}, we present two definitions of pseudorandom unitaries. In the first definition, the adversary can make a priori unboudned number of oracle queries whereas in the second definition, it is restricted to only make a bounded number of queries.

\begin{definition}[Pseudorandom Unitary]\label{def:pru}
    Let $\ell_k(\secparam),\ell_n(\secparam),\ell_a(\secparam)$ be polynomials. An $(\ell_k,\ell_n)$-pseudorandom unitary is an efficient family of unitaries $\{\PRU_\secparam\}_{\secparam\in \N}$ defined on registers $\regIn$ over $\{0,1\}^{\ell_n(\secparam)}$, $\regK$ over $\{0,1\}^{\ell_k(\secparam)}$, $\regAnc$ over $\{0,1\}^{\ell_a(\secparam)}$ for polynomials satisfying the following property:
    
    For $k\in \{0,1\}^{\secparam}$, let $\PRU_k(\cdot)$ be the CPTP map which on input $\rho_{In}$ outputs
    $$\PRU_k(\rho)\coloneqq \Tr_{\regK,\regAnc}\left(\PRU_{\secparam} (\ketbra{k}_{\regK} \otimes \rho_{\regIn}\otimes \ketbra{0}_{Anc}) \PRU_{\secparam}^{\dagger}\right)$$

    We say that a pseudorandom unitary is secure if for all non-uniform QPT oracle adversaries $\A$,
    $$\abs{\Pr_{k\gets \{0,1\}^{\ell_k(\secparam)}}\left[\A^{\PRU_k}(1^\secparam)\to 1\right] - \Pr_{U\gets \Haar(\{0,1\}^{\ell_n(\secparam)})}\left[\A^{U}(1^\secparam)\to 1\right]} \leq \negl(\secparam)$$
\end{definition}

\noindent~\cite{MPSY24,MH24} showed that pseudorandom unitaries exist under the assumption of post-quantum one-way functions.

\begin{definition}[(Non-adaptive) Bounded-Query Pseudorandom Unitary]\label{def:prubounded}
    Let $\ell_k,\ell_n,\ell_a,t$ be polynomials. A (non-adaptive) $(\ell_k,\ell_n,t)$-pseudorandom unitary is an efficient family of unitaries $\{\PRU_\secparam\}_{\secparam\in \N}$ defined on registers $\regIn$ over $\{0,1\}^{\ell_n(\secparam)}$, $\regK$ over $\{0,1\}^{\ell_k(\secparam)}$, $\regAnc$ over $\{0,1\}^{\ell_a(\secparam)}$ for polynomials satisfying the following property:
    
    For $k\in \{0,1\}^{\secparam}$, let $\PRU_k(\cdot)$ be the CPTP map which on input $\rho_{In}$ outputs
    $$\PRU_k(\rho)\coloneqq \Tr_{\regK,\regAnc}\left(\PRU_{\secparam} (\ketbra{k}_{\regK} \otimes \rho_{In}\otimes \ketbra{0}_{Anc}) \PRU_{\secparam}^{\dagger}\right)$$

    We say that a pseudorandom unitary is (non-adaptive) $t$-copy secure if for all (non-adaptive) non-uniform QPT oracle adversaries $\A$ making at most $t$ queries,
    $$\abs{\Pr_{k\gets \{0,1\}^{\ell_k(\secparam)}}\left[\A^{\PRU_k}(1^\secparam)\to 1\right] - \Pr_{U\gets \Haar(\{0,1\}^{\ell_n(\secparam)})}\left[\A^{U}(1^\secparam)\to 1\right]} \leq \negl(\secparam)$$
\end{definition}

\ifnum\llncs=0
\subsection{Uncloneable Cryptography}
\ifnum\llncs=1
\section{Preliminaries: Quantum Money and Copy-protection} 
\fi

\subsubsection{Quantum Money}
\noindent We first recall the definition of a quantum money mini scheme~\cite{AC12}. In this notion, there is a minting algorithm that produces a publicly verifiable quantum money state along with a serial number. Moreover, in terms of security, we require that the quantum money state cannot be cloned. 

\begin{definition}[Quantum Money Mini-Scheme]
\label{def:mini-scheme}
A \emph{quantum money mini-scheme} is a pair of QPT algorithms $(\mathsf{Mint}, \mathsf{Ver})$ where:
\begin{itemize}
\item $\mathsf{Mint}(1^{\secparam}) \to (\rho, s)$: It takes a security parameter $1^{\secparam}$ and outputs a quantum state $\rho$ and a classical serial number $s$.

\item $\mathsf{Ver}(s, \sigma) \to \{0,1\}$: It takes a serial number $s$ and a quantum state $\sigma$, and outputs 1 (accept) or 0 (reject).
\end{itemize}

The scheme must satisfy:

\begin{enumerate}
\item \textbf{Correctness:} For all $(\rho, s) \leftarrow \mathsf{Mint}(1^{\secparam})$:
\begin{align}
\Pr[1 \leftarrow \mathsf{Ver}(s, \rho)] = 1
\end{align}

\item \textbf{Security:} For any polynomial-time quantum adversary $\mathcal{A}$:
\begin{align}
\Pr\left[\begin{array}{c}
(\rho, s) \leftarrow \mathsf{Mint}(1^{\secparam}) \\
(\sigma_1, \sigma_2) \leftarrow \mathcal{A}(\rho, s) \\
1 \leftarrow \mathsf{Ver}(s, \sigma_1) \land 1 \leftarrow \mathsf{Ver}(s, \sigma_2) 
\end{array}\right] \leq \negl(n)
\end{align}
\end{enumerate}
\end{definition}

\noindent We now recall the definition of public-key quantum money. The main difference between the mini scheme and the definition below is that in the mini scheme, anyone can produce a money state whereas in the definition below, only the one who possesses the secret key can produce the state. Furthermore, public-key quantum money satisfies $t\to t+1$ copy security, in the sense that given $t$ money states no adversary can produce $t+1$ money states. However, since money states for a public-key quantum money protocol are allowed to be mixed, one should think of public-key quantum money as an "i.i.d." version of a mini scheme. Using digital signatures, any mini scheme can be upgraded into a public-key quantum money scheme. 

\begin{definition}[Public-Key Quantum Money]
\label{def:pkqm}
A \emph{public-key quantum money scheme} is a tuple of QPT algorithms $(\mathsf{Gen},\mathsf{Mint},\mathsf{Ver})$ where:
\begin{itemize}
\item $\mathsf{Gen}(1^{\secparam}) \to (\pk,\sk)$: It takes a security parameter $1^{\secparam}$ and outputs a public key $\pk$ and a secret key $\sk$.

\item $\mathsf{Mint}(\sk) \to (\rho, s)$: It takes a secret key $\sk$ and outputs a quantum state $\rho$ and a classical serial number $s$.

\item $\mathsf{Ver}(\pk, \sigma, s) \to \{0,1\}$: It takes as input the verification key $\pk$, a serial number $s$ and a quantum state $\sigma$, and outputs 1 (accept) or 0 (reject).
\end{itemize}

The scheme must satisfy:

\begin{enumerate}
\item \textbf{Correctness:} For all $(\pk,\sk) \leftarrow \mathsf{Gen}(1^{\secparam})$, $(\rho, s) \leftarrow \mathsf{Mint}(\sk)$:
\begin{align}
\Pr[1 \leftarrow \mathsf{Ver}(\pk,s, \rho)] = 1
\end{align}

\item \textbf{i.i.d. Security:} For any polynomial $t$ and any polynomial-time quantum adversary $\mathcal{A}$:
\begin{align}
\Pr\left[\begin{array}{c}
(\pk,\sk) \leftarrow \mathsf{Gen}(1^{\secparam}) \\
(\rho_1,s_1,\dots,\rho_t,s_t) \leftarrow \mathsf{Mint}(\sk)^{\otimes t} \\
(s_1,\sigma_1,\dots,s_{t+1},\sigma_{t+1}) \leftarrow \mathcal{A}(\rho_1,s_1,\dots,\rho_t,s_t) \\
\text{For all }i\in [t+1], 1\gets \verify(\pk, \sigma_i, s_i)
\end{array}\right] \leq \negl(\secparam)
\end{align}
where $\sigma_1,\dots,\sigma_{t+1}$ may possibly be entangled.
\end{enumerate}
\end{definition}

\paragraph{Multi-copy Security.} We consider a strengthing of the above definitions wherein the adversary can receive many copies of the quantum state. We present these notions below. 

\begin{definition}[Multi-copy secure mini scheme]
\label{def:multicopy:mini-scheme}
A \emph{multi-copy secure quantum money mini scheme}, defined by a pair of QPT algorithms $({\sf Mint},{\sf Ver})$ is a mini scheme and additionally, it satisfies the following properties:

\begin{enumerate}

\item \textbf{Purity}: the output of ${\sf Mint}$ is a pure state. Concretely, ${\sf Mint}$ proceeds in the following steps:
\begin{itemize}
    \item It first generates $s$ along with secret randomness $sk$. 
    \item Apply an isometry $U$ on $\ket{sk}$ to obtain $\ket{\psi_s}$. 
\end{itemize}
Output $(\ket{\psi_s},s)$.

\item \textbf{Multi-Copy Security:} For any polynomial-time quantum adversary $\mathcal{A}$:
\begin{align}
\Pr\left[\begin{array}{c}
(\ket{\psi_s}, s) \leftarrow \mathsf{Mint}(1^{\secparam}) \\
(\sigma_1,\ldots,\sigma_{t+1}) \leftarrow \mathcal{A}(\ket{\psi_s}^{\otimes t}, s) \\
\forall i \in [t+1],\ \mathsf{Ver}(s, \sigma_i) = 1
\end{array}\right] \leq \negl(n)
\end{align}
\end{enumerate}
\end{definition}
\noindent The purity condition in the above definition ensures that for any serial number $s$, the bank can generate multiple copies of $\ket{\psi_s}$. Concretely, the bank can store the secret information $sk$ and to compute $t$ copies of $\ket{\psi_s}$, it can compute $(U\ket{sk})^{\otimes t}$
\par Similar to the mini scheme, we can define a multi-copy security strengthening of~\Cref{def:pkqm} as well. However, since the approach upgrading a mini scheme to a public key quantum money scheme using digital signatures also works in the multi-copy regime, the major challenge is producing a multi-copy secure mini scheme, and so we will omit the definition of a multi-copy secure public key quantum money scheme.

\subsubsection{Copy-Protection}
We recall the definition of quantum copy-protection below. While Aaronson~\cite{Aar09} was the first to define copy-protection, we adopt the subsequent strengthenings of Aaronson's copy-protection definition. 

\begin{definition}[Quantum Copy Protection]
\label{def:qcopy}
A \emph{quantum copy protection scheme} for a family of functions $\mathcal{F}$ consists of two QPT algorithms $(\mathsf{CopyProtect},\allowbreak \mathsf{Eval})$:

\begin{itemize}
    \item $\mathsf{CopyProtect}(1^{\secparam}, f)$: It takes as input a security parameter $\secparam$ and a function $f \in \mathcal{F}$, and outputs a quantum state $\rho_f$ called a \emph{copy-protected program}.
    
    \item $\mathsf{Eval}(\rho_f, x)$: It takes as input a quantum state $\rho_f$ and an input $x$, and outputs $f(x)$.
\end{itemize}
\noindent We require the following properties:\\

\noindent \textbf{Correctness:} For any $f \in \mathcal{F}$ and any input $x$ in the domain of $f$:
\begin{align}
\Pr[\mathsf{Eval}(\mathsf{CopyProtect}(1^{\secparam}, f), x) \rightarrow f(x)] \geq 1 - \mathsf{negl}(\secparam)
\end{align}

\noindent \textbf{Security:} For any polynomial-time quantum adversary $(\mathcal{A}, \mathcal{B}, \mathcal{C})$ and any function $f \in \mathcal{F}$ with input length $n$:
\begin{align}
\Pr\left[\begin{array}{c}
\rho_f \leftarrow \mathsf{CopyProtect}(1^{\secparam}, f) \\
\sigma_{\regB \regC} \leftarrow \mathcal{A}(\rho_f) \\
x_B, x_C \leftarrow \{0,1\}^n \\
y_B \leftarrow \mathcal{B}(\sigma_{\regB}, x_B) \\
y_C \leftarrow \mathcal{C}(\sigma_{\regC}, x_C) \\
y_B = f(x_B) \text{ and } y_C = f(x_C)
\end{array}\right] \leq \mathsf{negl}(\secparam)
\end{align}

where $\sigma_{\regB \regC}$ is a bipartite quantum state shared between $\mathcal{B}$ and $\mathcal{C}$, and the probability is taken over the randomness of $\mathsf{CopyProtect}$, $\mathcal{A}$, and the choice of inputs $x_B, x_C$.
\end{definition}

\paragraph{Multi-copy security.} We strengthen the above definition in two ways. Firstly, we consider the case when the adversary receives many {\em i.i.d} copies of the copy protected state. Next, we consider the case when the adversary receives many {\em identical} copies of the copy protected state. 

\begin{definition}[i.i.d copy security]
A quantum copy protection scheme $(\mathsf{CopyProtect},\allowbreak \mathsf{Eval})$ for a family of functions $\mathcal{F}$ (\Cref{def:qcopy}) is said to satisfy multi-copy security if: 
\begin{itemize}
    \item \textbf{i.i.d copy security:} For any polynomial-time quantum adversary $(\mathcal{A}, \mathcal{B}_1,\ldots,\mathcal{B}_{t+1})$ and any function $f \in \mathcal{F}$ with input length $n$:
\begin{align}
\Pr\left[\begin{array}{c}
\rho_f^{\otimes t} \leftarrow (\mathsf{CopyProtect}(1^{\secparam}, f))^{\otimes t} \\
\sigma_{\regB_1 \cdots \regB_{t+1}} \leftarrow \mathcal{A}(\rho_f^{\otimes t}) \\
x_{B_1},\ldots,x_{B_{t+1}} \leftarrow \{0,1\}^n \\
\forall i \in [t+1], y_{B_i} \leftarrow \mathcal{B}_i(\sigma_{\regB_i}, x_{B_i}) \\
\forall i \in [t+1], y_{B_i} = f(x_{B_i})
\end{array}\right] \leq \mathsf{negl}(\secparam)
\end{align}

where $\sigma_{\regB_1 \cdots \regB_{t+1}}$ is a $t$-partite quantum state shared amongst $\regB_1,\ldots,\regB_{t+1}$, and the probability is taken over the randomness of $\mathsf{CopyProtect}$, $\mathcal{A}$, and the choice of inputs $x_{B_1},\ldots,x_{B_{t+1}}$.
\end{itemize}
\end{definition}

\noindent Prior works~\cite{LLLQ22,CG24b} showed the existence of copy-protection for some cryptographic functionalities satisfying i.i.d copy security.  
\par Similarly, we can define identical copy security as follows: 

\begin{definition}[Identical copy security]
A quantum copy protection scheme $(\mathsf{CopyProtect},\allowbreak \mathsf{Eval})$ for a family of functions $\mathcal{F}$ (\Cref{def:qcopy}) is said to satisfy multi-copy security if: 
\begin{itemize}
    \item \textbf{Purity}: the output of $\mathsf{CopyProtect}$ is a pure state. That is, $\mathsf{Copyprotect}(1^{\secparam},f)$ outputs $\ket{\psi_f}$. 
    \item \textbf{Identical copy security:} For any polynomial-time quantum adversary $(\mathcal{A}, \mathcal{B}_1,\ldots,\mathcal{B}_{t+1})$ and any function $f \in \mathcal{F}$ with input length $n$:
\begin{align}
\Pr\left[\begin{array}{c}
\ket{\psi_f} \leftarrow (\mathsf{CopyProtect}(1^{\secparam}, f)) \\
\sigma_{\regB_1 \cdots \regB_{t+1}} \leftarrow \mathcal{A}(\rho_f^{\otimes t}) \\
x_{B_1},\ldots,x_{B_{t+1}} \leftarrow \{0,1\}^n \\
\forall i \in [t+1], y_{B_i} \leftarrow \mathcal{B}_i(\sigma_{\regB_i}, x_{B_i}) \\
\forall i \in [t+1], y_{B_i} = f(x_{B_i})
\end{array}\right] \leq \mathsf{negl}(\secparam)
\end{align}

where $\sigma_{\regB_1 \cdots \regB_{t+1}}$ is a $t$-partite quantum state shared amongst $\regB_1,\ldots,\regB_{t+1}$, and the probability is taken over the randomness of $\mathsf{CopyProtect}$, $\mathcal{A}$, and the choice of inputs $x_{B_1},\ldots,x_{B_{t+1}}$.
\end{itemize}
\end{definition}
\fi
\section{Main Theorem - Simulating families of mixed states}

We state our main theorem. \pnote{need to explicitly state the value of $t$ below.} \pnote{discuss the main theorem and give some context for it.}

\begin{theorem}[Main Theorem]
    \label{thm:mainthm}
    Let $n_A,n_B,n_C,t\in \N$. Then there is a CPTP map $\Sim$ sending $t\cdot n_A$ to $t(n_A+n_B+n_C)$ qubits running in time $\poly(n_A,n_B,n_C)$ such that the following holds. 
    
    Let $N\in \N$ and suppose $\{\sigma_i\}_{i\in [N]}$ is a family of mixed with corresponding purifications $\{\ket{\phi_i}_{\regA,\regB}\}_{i\in [N]}$ on registers $\regA,\regB$ of dimensions $2^{n_A},2^{n_B}$ respectively. Let $q = 2t$. There exists a family of states $\{\ket{\psi_{f_1,f_2,f_3,f_4}}_{\regA,\regB}\}$ for $f_1:\{0,1\}^{n_C}\to [q],f_2:\{0,1\}^{n_C}\to \{0,1\}^{n_B},f_3:\{0,1\}^{n_C}\to \{0,1\}^{n_B}$ and $f_4:\{0,1\}^{n_C} \rightarrow [N]$ such that for all $t \leq \frac{q}{2}$, the following hold
    \begin{enumerate}
        \item $t$ copies of a random sample $\ket{\psi_k}^{\otimes t}$ can be efficiently approximated using $t$ i.i.d. copies of $\sigma_i$. That is, on input
        $$\E_{i_1,\dots,i_t}[\sigma_{i_1}\otimes \dots \otimes \sigma_{i_t}]$$
        $\Sim$ outputs a state $\rho$ such that
        $$\TD\left(\rho,\E_{f_1,f_2,f_3,f_4}[\ketbra{\psi_{f_1,f_2,f_3,f_4}}^{\otimes t}]\right) \leq \frac{t^2}{2^{n_C}}$$
        \item $\{\ket{\psi_k}\}_{k\in \mathcal{K}}$ is a purification of $\{\sigma_i\}_{i\in \{0,1\}^n}$. In particular, 
        $$\E_k[\Tr_{\regB}(\ketbra{\phi_k})] = \E_i[\sigma_i]$$
        For the case when $N=1$ and $\{\sigma_i\}$ consists of a single mixed state $\sigma$, we further have that for all $k\in \mathcal{K}$,
        $$\Tr_{\regB}[\ketbra{\phi_k}] = \sigma$$
        \item Given query access to $f_1,f_2,f_3,f_4$ as well as a circuit preparing $\{\ket{\phi_i}\}_{i\in [N]}$, $\ket{\psi_{f_1,f_2,f_3,f_4}}$ is preparable in time $\poly(\log_2 N,n_A,n_B,n_C)$.
    \end{enumerate}
\end{theorem}

We will construct the simulator $\Sim$ and the family $\{\ket{\psi_{f_1,f_2,f_3,f_4}}\}$ as follows:

\begin{definition}[Purification ensemble construction]
\label{cons:mainthm}
    Let $\{\sigma_i\}_{i\in [N]},\{\ket{\phi_i}_{\regA,\regB}\}_{i\in [N]},n_A,n_B,n_C$ and the support of $f_1,f_2,f_3,f_4$ be as in~\Cref{thm:mainthm}. Then we define
    $$\ket{\psi_{f_1,f_2,f_3,f_4}}_{\regA \regB \regC} =  \sum_{i\in \{0,1\}^{n_C}} \left(I_{\regA} \otimes X^{f_2(i)}_{\regB} Z^{f_3(i)}_{\regB} \otimes I_{\regC} \right) \frac{\omega_q^{f_1(i)}}{\sqrt{2^{n_C}}} \ket{\phi_{f_4(i)}}_{\regA \regB}\otimes\ket{i}_{\regC}$$
\end{definition}

\begin{definition}[Simulator construction]
    \label{cons:sim}
    Let $n_A,n_B,n_C\in \N$. For $i\in [t]$, let $\ket{\chi_i}$ be any state on $n_A+n_B+n_C$ qubits. On input $\ket{\chi_1},\dots,\ket{\chi_t}$, $\Sim$ will behave as follows
    \begin{itemize}
        \item First, sample distinct $r_1,\dots,r_t$ uniformly at random from $\{0,1\}^{n_B+n_C}$.
        \item Output
        $$\sum_{\pi\in Sym([t])} \bigotimes_{j=1}^t \ket{\chi_{\pi(i)}}_{\regA_j}\ket{r_{\pi(i)}}_{\regB_j \regC_j}$$
    \end{itemize}
\end{definition}

Note that properties $2$ and $3$ of $\ket{\psi_{f_1,f_2,f_3,f_4}}$ are self-evident. Thus, the proof of this theorem boils down to the following key lemma as well as the fact that $\sims$ can be implemented efficiently.

\begin{theorem}
\label{lem:main}
Let $N,n_A,n_B,n_C,t \in \N$ and set $q=2t$. Suppose $\{\ket{\phi_{j}}\}_{j \in [N]}$ is a family of quantum states on registers $\regA,\regB$, of dimension $2^{n_A}$ and $2^{n_B}$ respectively. Let $f_1:\{0,1\}^{n_C}\to [q],f_2:\{0,1\}^{n_C}\to \{0,1\}^{n_B},f_3:\{0,1\}^{n_C}\to \{0,1\}^{n_B}$ and $f_4:\{0,1\}^{n_C} \rightarrow [N]$. Let $\{\ket{\psi_{f_1,f_2,f_3,f_4}}\}$ be as in~\Cref{cons:mainthm}.

Let $\sims$ be as in~\Cref{cons:sim}.

\noindent Define $\rho_t$ as follows:
    $$\rho_t = \E_{f_1,f_2,f_3,f_4}\left[\ketbra{\psi_{f_1,f_2,f_3,f_4}}^{\otimes t}\right]$$

    \noindent Let 
    $$\rho' = \sims\left( \left(\E_j\left[\Tr_{\regB}(\ketbra{\phi_j}_{\regA \regB})\right]\right)^{\otimes t}\right)$$
    Then $\TD(\rho',\rho_t) \leq \frac{t^2}{2^{n_C}}$. 
\end{theorem}

\begin{proof}
\noindent Similar to the proof structure presented in the technical overview~(\Cref{sec:techoverview}), we divide the proof into two parts. In the first part, we perfom the analysis for the case when the controlled one-time pad is not applied. While in~\Cref{sec:techoverview} an intuitive proof via the compressed oracle method was presented, we present a direct proof below. In the second part, we consider the action of the controlled quantum one-time pad. 

\paragraph{Part I: Ignoring the controlled one-time pad.} Define the following state: 
$$\ket{\psi_{f_1,f_4}}_{\regA \regB \regC} = \frac{1}{\sqrt{2^{n_C}}} \sum_{i\in \{0,1\}^{n_C}}\omega_q^{f_1(i)} \ket{\phi_{f_4(k)}}_{\regA \regB}\otimes\ket{i}_{\regC}$$

\noindent Fix $\bm{i}=(i_1,\ldots,i_t) \in \{0,1\}^{n_C \cdot t}$ and $\bm{i'}=(i'_1,\ldots,i'_t) \in \{0,1\}^{n_C \cdot t}$.  

Define $\rho_t^{(f_1,f_4)}[\bm{i},\bm{i'}]$ as follows:
$$\rho_t^{(f_1,f_4)}[\bm{i},\bm{i'}] = \E_{f_1,f_4} \left[ \omega_q^{\sum_{j \in [t]}f_1(i_j)-f_1(i'_j)} \bigotimes_{j \in [t]} \ketbra{\phi_{f_4(i_j)}}{\phi_{f_4(i'_j)}}_{\regA_j \regB_j}\otimes\ketbra{i_j}{i'_j}_{\regC_j} \right]$$

\noindent Note that for any fixed $i_1,\ldots,i_t,i'_1,\ldots,i'_t$, we have that $\E_{f_1}\left[\omega_q^{\sum_j \left(f_1(i_j) - f_1(i'_j)\right)}\right]=1$ if $\type((i_1,\ldots,i_t))=\type((i'_1,\ldots,i'_t))$ and $\E_{f_1}\left[\omega_q^{\sum_j \left(f_1(i_j) - f_1(i'_j)\right)}\right]=0$, otherwise.

\noindent Thus, we have the following:
\begin{enumerate}
    \item For $\bm{i},\bm{i'}$ such that $\type((i_1,\dots,i_t))=\type((i_1',\dots,i_t'))$
    $$\rho_t^{(f_1,f_4)}[\bm{i},\bm{i'}]=\E_{f_4} \left[  \bigotimes_{j \in [t]} \ketbra{\phi_{f_4(i_j)}}{\phi_{f_4(i'_j)}}_{\regA_j \regB_j}\otimes\ketbra{i_j}{i'_j}_{\regC_j} \right]$$
    \item For $\bm{i},\bm{i'}$ such that $\type((i_1,\dots,i_t))\neq\type((i_1',\dots,i_t'))$
    $$\rho_t^{(f_1,f_4)}[\bm{i},\bm{i'}]=0$$
\end{enumerate}

\noindent We define $\distinct(n,t)$ to be the following set: $\{(i_1,\ldots,i_t)\ :\ \forall j \neq j',\ i_j \neq i_{j'}\}$. \\

\paragraph{Part II: Action of controlled quantum one-time pad.} Define $$U_{f_2,f_3,\bm{i}}=\left( \bigotimes_{j \in [t]} I_{\regA_j} \otimes X_{\regB_j}^{f_2(i_j)} Z_{\regB_j}^{f_3(i_j)} \otimes I_{\regC_j} \right).$$ Define $\{\ket{\chi_{h}^{\ell}}\}_{\ell}$ to be the eigenbasis in the spectral decomposition of $\Tr_{\regB}(\ketbra{\phi_h}_{\regA \regB})$. In more detail, define $\Tr_{\regB}(\ketbra{\phi_h}_{\regA \regB}) = \E_{\ell \leftarrow {\cal D}_h}\left[ \ketbra{\chi_{h}^{\ell}} \right]$ for some distribution ${\cal D}_h$. \par We define $P_{\pi}$ to be a permutation operator that permutes the blocks of qubits. That is, $P_{\pi}$ acts on all the registers $(\regA_1,\regB_1,\regC_1,\ldots,\regA_t,\regB_t,\regC_t)$ and permutes the contents of all the blocks (the $j^{th}$ block is comprised of $(\regA_j,\regB_j,\regC_j)$) according to the permutation $\pi$. Consider the following: 

\allowdisplaybreaks
\begin{eqnarray*}
 & \rho_t & \\
& = & \E_{\substack{f_2,f_3\\ \bm{i}=(i_1,\ldots,i_t)\\ \bm{i'}=(i'_1,\ldots,i'_{t})}}\left[ U_{f_2,f_3,\bm{i}}\ \rho_t^{(f_1,f_4)}[\bm{i},\bm{i'}]\  U_{f_2,f_3,\bm{i'}}^{\dagger} \right]  \\
    & = & \E_{\substack{f_2,f_3\\ \bm{i}=(i_1,\ldots,i_t)\\ \bm{i'}=(i'_1,\ldots,i'_{t})\\ \type(\bm{i})=\type(\bm{i'})}} \left[  U_{f_2,f_3,\bm{i}} \E_{f_4} \left[  \bigotimes_{j \in [t]} \ketbra{\phi_{f_4(i_j)}}{\phi_{f_4(i'_j)}}_{\regA_j \regB_j}\otimes\ketbra{i_j}{i'_j}_{\regC_j} \right]  U_{f_2,f_3,\bm{i'}}^{\dagger} \right] \\
    & \approx_{\frac{t^2}{2^{n_C}}} & \E_{\substack{f_2,f_3\\ \bm{i}=(i_1,\ldots,i_t) \in \distinct(n,t)\\ \bm{i'}=(i'_1,\ldots,i'_{t}) \in \distinct(n,t)\\ \type(\bm{i})=\type(\bm{i'})}} \left[  U_{f_2,f_3,\bm{i}} \E_{f_4} \left[  \bigotimes_{j \in [t]} \ketbra{\phi_{f_4(i_j)}}{\phi_{f_4(i'_j)}}_{\regA_j \regB_j}\otimes\ketbra{i_j}{i'_j}_{\regC_j} \right]  U_{f_2,f_3,\bm{i'}}^{\dagger} \right] \\
    & = & \E_{\substack{f_2,f_3\\ \bm{i}=(i_1,\ldots,i_t) \in \distinct(n,t)\\ \pi \in S_t}} \left[  U_{f_2,f_3,\bm{i}} \E_{f_4} \left[  \bigotimes_{j \in [t]} \ketbra{\phi_{f_4(i_j)}}{\phi_{f_4(i_j)}}_{\regA_j \regB_j}\otimes\ketbra{i_j}{i_j}_{\regC_j} \right]  U_{f_2,f_3,\bm{i}}^{\dagger} P_{\pi} \right] \\
    & = & \E_{\substack{f_4\\ \bm{i}=(i_1,\ldots,i_t) \in \distinct(n,t)\\ \pi \in S_t}} \left[  \left( \bigotimes_{j \in [t]} \Tr_{\regB_j}\left( \ketbra{\phi_{f_4(i_j)}}{\phi_{f_4(i_j)}}_{\regA_j \regB_j} \right)\otimes  \frac{I}{2^{|\regB_j|}} \otimes \ketbra{i_j}{i_j}_{\regC_j} \right) P_{\pi} \right] \\
    & = & \E_{\substack{f_4\\ \bm{i}=(i_1,\ldots,i_t) \in \distinct(n,t)\\ \bm{k}=(k_1,\ldots,k_t) \in \distinct(n,t) \\ \pi \in S_t}} \left[  \left( \bigotimes_{j \in [t]} \Tr_{\regB_j}\left( \ketbra{\phi_{f_4(i_j)}}{\phi_{f_4(i_j)}}_{\regA_j \regB_j} \right)\otimes  \ketbra{k_j}_{\regB_j} \otimes \ketbra{i_j}{i_j}_{\regC_j} \right) P_{\pi} \right] \\ 
     & = & \E_{\substack{f_4\\ \bm{i}=(i_1,\ldots,i_t) \in \distinct(n,t)\\ \bm{\ell}=(\ell_1,\ldots,\ell_t) \gets (\mathcal{D}_{h_1},\dots,\mathcal{D}_{h_t})\\ \bm{k}=(k_1,\ldots,k_t) \in \distinct(n,t) \\ \pi \in S_t}} \left[  \left( \bigotimes_{j \in [t]} 
     \ketbra{\chi_{f_4(i_j)}^{\ell_j}}_{\regA_j} \otimes  \ketbra{k_j}_{\regB_j} \otimes \ketbra{i_j}{i_j}_{\regC_j} \right) P_{\pi} \right] \\
     & = & \E_{\substack{f_4\\ \bm{i}=(i_1,\ldots,i_t) \in \distinct(n,t)\\ \bm{i'}=(i'_1,\ldots,i'_t) \in \distinct(n,t)\\ \bm{\ell}=(\ell_1,\ldots,\ell_t) \gets (\mathcal{D}_{h_1},\dots,\mathcal{D}_{h_t})\\ \bm{\ell'}=(\ell'_1,\ldots,\ell'_t) \gets (\mathcal{D}_{h_1},\dots,\mathcal{D}_{h_t})\\ \bm{k}=(k_1,\ldots,k_t) \in \distinct(n,t) \\ \bm{k'}=(k'_1,\ldots,k'_t) \in \distinct(n,t)}} \left[  \left( \bigotimes_{j \in [t]} 
     \ketbra{\chi_{f_4(i_j)}^{\ell_{j}}}{\chi_{f_4(i_j)}^{\ell'_{j}}}_{\regA_j} \otimes  \ketbra{k_j}{k'_j}_{\regB_j} \otimes \ketbra{i_j}{i'_j}_{\regC_j} \right) \right] \\
     & = & \E_{\substack{(h_1,\ldots,h_t) \xleftarrow{\$} [N] \\ \bm{\chi} \leftarrow \bigtimes_{j} \{\chi_{h_j}^{\ell}\}_{\ell}\\ u \xleftarrow{\$} \{0,1\}^{2^{n_B+n_C}}:\\ {\sf wt}(u)=t }} \left[  \ketbra{\setst_{\bm{\chi},t,u}} \right] = \rho' 
\end{eqnarray*}
\noindent We have shown so far that $\rho_t \approx_{\frac{t^2}{2^{n_C}}} \rho'$. \\  

\noindent \pnote{add an informal description of the simulator}

\paragraph{Efficient implementation of ${\sf Sim}$.} We will now show that there is an efficient algorithm $\sims$ that takes as input $\left( \ket{\chi_{h_1}^{\ell_1}},\ldots,\ket{\chi_{h_t}^{\ell_t}} \right)$, where $h_i$ are picked uniformly at random and $\ell_i$ are sampled from $\mathcal{D}_{h_i}$, and outputs the state $\rho'$. \\

\noindent $\sims$ does the following:
\begin{enumerate}
    \item The input state $\ket{\chi_{h_j}^{(\ell_j)}}$ is initialized in the register $\regD_j$. 
    \item It samples $i_1,\dots,i_t$ uniformly at random from $\{0,1\}^{n_B+n_C}$ subject to the condition that they are all distinct.
    \item It efficiently generates the state $\frac{1}{\sqrt{t!}} \sum_{\pi \in S_t} \ket{\pi} \ket{i_{\pi(1)}}_{\regB_1 \regC_1}\cdots \ket{i_{\pi(t)}}_{\regB_t\regC_t}$. Controlled on the first register containing $\pi$, it then prepares the following state:
    $$\frac{1}{\sqrt{t!}} \sum_{\pi \in S_t} \ket{\pi} \ket{\chi_{h_{\pi(1)}}^{(\ell_{\pi(1)})}}_{\regA_1} \ket{i_{\pi(1)}}_{\regB_1 \regC_1}\cdots \ket{\chi_{h_{\pi(t)}}^{(\ell_{\pi(t)})}}_{\regA_t}\ket{i_{\pi(t)}}_{\regB_t\regC_t}$$
    
    \item Finally it then uncomputes the first register using $(i_1,\ldots,i_t)$ to get the following state which is output by the algorithm: 
    $$\ket{\setst_{\bm{\chi},t,u}} = \frac{1}{\sqrt{t!}} \sum_{\pi \in S_t} \ket{\chi_{h_{\pi(1)}}^{(\ell_{\pi(1)})}}_{\regA_1} \ket{i_{\pi(1)}}_{\regB_1 \regC_1}\cdots \ket{\chi_{h_{\pi(t)}}^{(\ell_{\pi(t)})}}_{\regA_t}\ket{i_{\pi(t)}}_{\regB_t\regC_t},$$
    where $\bm{\chi}=\left( \ket{\chi_{h_1}^{(\ell_1)}},\ldots,\ket{\chi_{h_t}^{(\ell_t)}} \right)$ and $u \in \{0,1\}^{\cdot 2^{n_B+n_C}}$ such that $u_{\ell}=1$ if and only if $\ell \in \{i_1,\ldots,i_t\}$.   
\end{enumerate}

\end{proof}

\ifnum\llncs=0
\section{Multi-Copy Secure Unclonable Cryptography}

\newcommand{\design}{{\sf design}}
\newcommand{\pha}{{\sf phase}}
\newcommand{\sig}{{\sf sign}}
\newcommand{\fclass}{{\cal F}}
\newcommand{\copyprotect}{{\sf CopyProtect}}
\newcommand{\Eval}{{\sf Eval}}
\newcommand{\smcp}{{\sf cp}}
\newcommand{\alice}{{\cal A}}
\newcommand{\bob}{{\cal B}}
\newcommand{\charlie}{{\cal C}}
\newcommand{\bfadv}{{\bf \adversary}}

\subsection{Public-Key Quantum Money}

We will show
\begin{theorem}\label{thm:ismini}
    If quantum secure one-way functions exist and there exists a one-copy secure mini scheme, then there exists a multi-copy secure mini scheme.
\end{theorem}

We first construct a multi-copy secure mini scheme (\Cref{def:multicopy:mini-scheme}). The transformation from mini scheme (\Cref{def:mini-scheme}) to full fledged public-key money scheme (\Cref{def:pkqm}) using digital signatures preserves the multi-copy security. That is, assuming post-quantum secure one-way functions, there exists a {\em multi-copy} secure public-key money scheme assuming {\em multi-copy} secure mini schemes. 

\paragraph{Starting point: public key quantum money} We will begin with a public key quantum money scheme $\gen,\mint,\verify$. Note that we can construct a public-key quantum money scheme from a one-copy secure mini scheme as well a one-way function, so assuming one-way functions building this is free.

\paragraph{Multi-copy secure mini scheme} We will construct another mini scheme $(\mint',\allowbreak \verify')$ such that even given $t$ copies of the money state produced by $\mint'$, where $t$ is an arbitrary polynomial, any computationally bounded adversary cannot produce $t+1$ copies of the state that passes $\verify'$. To design this new mini scheme, we will use a post-quantum secure pseudorandom function $f:\{0,1\}^{\secparam} \times \{0,1\}^{n+1} \rightarrow \{0,1\}^m$.

\begin{itemize}
    \item $\mint'(1^{\secparam})$: it takes as input a security parameter $\secparam$,   
    \begin{itemize}
        \item It first executes $\gen(1^{\secparam})$ to obtain $(\pk,\sk)$.
        \item Let $\ket{\phi^{\sk}}_{\regM,\regS,\regB}$ be the purification of the state produced by $\mint(\sk)$ with money state in register $\regM$, serial number in register $\regS$ of dimension $2^{n_S}$, and purification register $\regB$ of dimension $2^{n_B}$. In particular, $\Tr_{\regB}(\ketbra{\phi^{\sk}})$ will be the same density matrix as $\mint(\sk)$, including both the money state and serial number.
        \item It samples three PRF keys $k_1,k_2,k_3 \xleftarrow{\$} \{0,1\}^{\secparam}$ for three pseudorandom functions $f^1:\{0,1\}^\secparam\to [2^{\secparam}],f^2:\{0,1\}^\secparam\to \{0,1\}^{n_B},f^3:\{0,1\}^\secparam \to \{0,1\}^{n_B}$ on appropriate domains.
        \item Let $\ket{\psi_{f_1,f_2,f_3}^{\sk}}$ be the construction~\Cref{cons:mainthm} applied to the purification $\{\ketbra{\phi^{\sk}}_{\regM,\regS,\regB}\}$ (we omit $f_4$ since it is required to be the constant function) with $q = 2^\secparam$.
        \item Finally, it outputs a pure money state $\ket{\$}=\ket{\psi_{f^1_{k_1},f^2_{k_2},f^3_{k_3}}^{\sk}}$ and a serial number $s = \pk$. This is efficiently generatable by~\Cref{thm:mainthm}, as $\mint'$ can efficiently compute oracle queries to $f^1_{k_1},f^2_{k_2},f^3_{k_3}$.
    \end{itemize}

    \item $\verify'$: it takes as input $(\pk,\ket{\$}_{\regM \regS \regB \regC})$ and does the following:
    \begin{itemize}
        \item Set $\rho_{\regM,\regS} = \Tr_{\regB,\regC}(\ketbra{\$})$.
        \item Coherently run $\verify(\ketbra{\pk} \otimes \rho_{\regM,\regS})$ and measure the result.
        \item Output $1$ if and only if $\verify$ accepts.
    \end{itemize}
\end{itemize}

We first remark that this is indeed a pure scheme. $\mint$ first samples $s=\pk$ along with secret randomness $\sk'=(\sk,k_1,k_2,k_3)$, and then outputs the state $\ket{\psi_{f_1,f_2,f_3}^{\sk}}$, which is the result of some isometry applied to $\sk'$. We proceed to prove~\Cref{thm:ismini} by showing correctness and multi-copy security hold.

\begin{theorem}
    Assuming $(\gen,\mint,\verify)$ satisfies correctness as a public key quantum money protocol, the above mini scheme $(\mint',\verify')$ satisfies correctness.
\end{theorem}

\begin{proof}
    Note that by~\Cref{thm:mainthm}, for all $k_1,k_2,k_3$, $\Tr_{\regB,\regC}(\ketbra{\psi^{\sk}_{k_1,k_2,k_3}}) = \Tr_{\regB}(\ketbra{\phi^{\sk}}) = \mint(\sk)$. And so correctness follows from correctness of public key quantum money.
\end{proof}

\begin{theorem}
    Assuming $(\gen,\mint,\verify)$ is a secure public key quantum money protocol, the above mini scheme $(\mint',\verify')$ is multi-copy secure.
\end{theorem}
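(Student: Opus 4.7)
First, I would replace the two pseudorandom functions $f(k_1,\cdot)$ and $f(k_2,\cdot)$ appearing in $\mint'$ with truly random functions, at the cost of a negligible change in the adversary's winning probability by post-quantum PRF security. Next, I would apply \Cref{thm:main} under the following identification: the index $i$ ranges over $s_1 \in \{0,1\}^{\secparam}$; the $\regA$ register of \Cref{thm:main} is bundled as our $\regA \regS_2 \regC$ (the money state together with the second half of the serial number and the signature slot); the $\regB$ register of \Cref{thm:main} is our $\regB$ (the junk of $\mint$); the $\regC$ register of \Cref{thm:main} is our $\regS_1$; $f_4$ is set to the identity; and the state family is
$$\ket{\phi_{s_1}} = \ket{s_2(s_1)}_{\regS_2}\ket{\psi_{s_1 \| s_2(s_1)}}_{\regA\regB}\ket{\sigma_{s_1 \| s_2(s_1)}}_{\regC},$$
where $s_2(s_1)$ and $\sigma_{s_1 \| s_2(s_1)}$ are the deterministic outputs of $U_{\mint}$ and $U_{\sig}$. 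The theorem then guarantees that $t$ copies of $\mint'$'s output lie within trace distance $t^2/2^{\secparam}$ of $\sims\!\left(\left(\E_{s_1}[\Tr_{\regB}\ketbra{\phi_{s_1}}]\right)^{\otimes t}\right)$, where each i.i.d.\ input sample to $\sims$ is distributed exactly as $(\rho_s,\ket{s_2}_{\regS_2},\ket{\sigma_s}_{\regC})$ obtained by running $\mint$ honestly and then signing with $\sk$.

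The reduction $\reduction$ to single-copy mini-scheme security then proceeds as follows. Upon receiving the challenge $(\rho_{\mathsf{chal}}, s_{\mathsf{chal}})$, $\reduction$ samples its own $(\sk,\vk) \leftarrow \signsetup(1^{\secparam})$, chooses $j^* \xleftarrow{\$} [t]$, generates $t-1$ honest samples via $\mint$ for positions $j \in [t]\setminus\{j^*\}$, signs every serial number using $\sk$, inserts the challenge (together with $\ket{s_2^{\mathsf{chal}}}$ and $\ket{\sign(\sk,s_{\mathsf{chal}})}$) at position $j^*$, runs $\sims$ on these $t$ samples, feeds the resulting state and $\vk$ to the multi-copy adversary $\adversary$, measures the serial-number register of each returned note $\sigma_i$, and outputs the $\regA$ parts (after tracing out $\regB$) of any two $\sigma_i$'s whose serial numbers both equal $s_{\mathsf{chal}}$.

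To lower-bound the success probability, I would argue as follows. By EUF-CMA unforgeability, every $\sigma_i$ accepted by $\verify'$ carries a serial number in $\{s_{\mathsf{chal}}\}\cup\{s_j:j\neq j^*\}$ (except with negligible probability), so pigeonhole forces at least one serial number to appear twice among the $t+1$ outputs. Since $\sims$ produces a state that is permutation-symmetric over its $t$ input slots, $\adversary$'s view is independent of $j^*$, so the duplicated serial number coincides with $s_{\mathsf{chal}}$ with probability at least $1/t$ over the random embedding. Combining the PRF loss, the $t^2/2^{\secparam}$ error from \Cref{thm:main}, the negligible forgery term, and the $1/t$ factor, an adversary winning the multi-copy game with non-negligible advantage $\varepsilon$ yields a reduction breaking single-copy security with advantage at least $\varepsilon/t - \negl(\secparam)$, contradicting the assumption. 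The main technical subtlety is the register assignment for \Cref{thm:main}: the signature register $\regC$ and the serial-number half $\regS_2$ must be placed on the $\regA$ side rather than the $\regB$ side, because the one-time pad inside $\mint'$ scrambles only $\regB$, whereas the theorem's simulator treats the whole $\regB$ side as randomized. A secondary subtlety is verifying that the destructive measurements of $\regS$ and $\regC$ inside $\verify'$ leave an $\regA$-part that still wins the single-copy game, which follows because $\verify'$ itself invokes $\verify$ on exactly this trace.
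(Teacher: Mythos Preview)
Your proposal is correct and follows essentially the same approach as the paper: replace the PRFs by random functions, invoke \Cref{thm:main} under the register identification you describe, and reduce to the underlying mini-scheme by randomly embedding the challenge among $t$ honest samples, using signature unforgeability plus pigeonhole to force a repeated serial number. The only cosmetic difference is that the paper makes the case split explicit---giving two separate reductions $\reduction_1$ (to the mini-scheme, when some output serial repeats) and $\reduction_2$ (to the signature scheme, when all $t+1$ output serials are distinct)---whereas you absorb the latter into a preliminary unforgeability claim; the two organizations are logically equivalent.
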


\begin{proof}
    Suppose there exists a QPT adversary $\A$ that violates the multi-copy security of $(\mint',\verify')$. That is, for random $k_1,k_2,k_3$, $\gen\to (\sk,\pk)$, 
    $$\BigPr{(1,\dots,1)\gets (\verify')^{\otimes (t+1)}(\pk,\rho)}{k_1,k_2,k_3\gets \{0,1\}^\secparam\\\mint(1^\secparam)\to (\sk,\pk)\\
    \sigma \gets\ketbra{\psi_{f^1_{k_1},f^2_{k_2},f^3_{k_3}}^{\sk}}^{\otimes t} \\\A\left(\pk,\sigma\right) \to \rho}\geq \frac{1}{\secparam^c}$$
    for some constant $c$. We will show how to convert $\A$ into a QPT adversary $\reduction$ that can violate the security of $(\gen,\mint,\verify)$. In particular, let $\Sim$ be the simulator from~\Cref{cons:sim} for $n_A = n_M+n_S,n_B=n_B,n_C=\secparam$. Then, $\reduction(\pk,\mint(\sk)^{\otimes t})$ will do the following
    \begin{enumerate}
        \item Run $\A(\Sim(\mint(\sk)^{\otimes t}),\pk) \to \rho_{\regM_1\regS_1\regB_1\regC_1\dots\regM_{t+1}\regS_{t+1},\regB_{t+1}\regC_{t+1}}$
        \item Measure registers $\regS_1,\dots,\regS_{t+1}$ in the standard basis
        \item Output $\Tr_{\regB_1 \regC_1\dots \regB_{t+1}\regC_{t+1}}(\rho)$
    \end{enumerate}.

    By pseudorandomness of $f^1,f^2,f^3$, we have
    $$\BigPr{(1,\dots,1)\gets (\verify')^{\otimes (t+1)}(\pk,\rho)}{f_1,f_2,f_3\xleftarrow{\$}\\\gen(1^\secparam)\to (\sk,\pk)\\
    \sigma \gets\ketbra{\psi_{f_1,f_2,f_3}^{\sk}}^{\otimes t} \\\A\left(\pk,\sigma\right) \to \rho}\geq \frac{1}{\secparam^c} - \negl(\secparam)$$

    By~\Cref{thm:mainthm}, we then have
    $$\BigPr{(1,\dots,1)\gets (\verify')^{\otimes (t+1)}(\pk,\rho)}{\gen(1^\secparam)\to (\sk,\pk)\\
    \sigma \gets\Sim(\mint(\sk)^{\otimes t}) \\\A\left(\pk,\sigma\right) \to \rho}\geq \frac{1}{\secparam^c} - \negl(\secparam)$$

    But this is exactly $\reduction$'s advantage in breaking $(\gen,\mint,\verify)$, and so we are done.
\end{proof}

\noindent Instantiating $(\mint,\verify)$ with Zhandry's quantum mini scheme~\cite{Zhandry19}, we have the following consequence: 

\begin{corollary}
Assuming post-quantum secure indistinguishability obfuscation and post-quantum secure injective one-way functions, there exist a multi-copy secure public-key quantum money mini scheme (\Cref{def:multicopy:mini-scheme}). 
\end{corollary}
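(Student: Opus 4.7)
The plan is to instantiate the preceding theorem's base mini scheme $(\mint,\verify)$ with Zhandry's public-key quantum money mini scheme from indistinguishability obfuscation and injective one-way functions, and then verify that all ingredients needed by the theorem (and by the mini-scheme-to-public-key transformation) reduce to the stated assumptions.

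First, I would check that Zhandry's scheme admits the structural decomposition of $\mint$ required just above the construction of $(\mint',\verify')$: namely, that minting can be realized by preparing a uniform superposition over $\regS_1$, applying a controlled unitary $U_{\mint}=\sum_{s_1}\ketbra{s_1}\otimes U_{s_1}$ that writes the remaining serial-number register $\regS_2$ and the money state $\ket{\psi_{s_1\|s_2}}_{\regA\regB}$, and then tracing out $\regB$. Since iO and injective OWFs imply post-quantum pseudorandom functions, I can set $\regS_1$ to hold a PRF seed and derive all randomness used by Zhandry's minting procedure (including the coset samplers and the serial number produced in $\regS_2$) from PRF evaluations on $s_1$. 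This coherent, seed-controlled realization of $\mint$ is unitary in $s_1$ up to the final partial trace, which is exactly the structural form the theorem requires. One-copy security of the resulting scheme is inherited from Zhandry's scheme because a random seed $s_1$ produces randomness indistinguishable from truly uniform coins to any QPT adversary.

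Second, I would apply the preceding theorem to obtain a multi-copy secure mini scheme $(\mint',\verify')$. The theorem's construction additionally uses (i) a post-quantum secure PRF, which exists from injective OWFs, and (ii) a deterministic post-quantum secure digital signature scheme, which also exists from post-quantum OWFs (and hence from the stated assumptions, since iO plus injective OWFs imply OWFs). The theorem then yields multi-copy security of $(\mint',\verify')$ against any QPT adversary given arbitrary polynomially many identical copies of the pure money state. Finally, to pass from a mini scheme to a full public-key quantum money scheme as in \Cref{def:pkqm}, I would apply the standard signature-based transformation noted at the opening of the section; this transformation only adds a classical signature on the serial number and so preserves multi-copy security, since the $(t{+}1)$-st forged copy must still contain a mini-scheme-valid money state for an attested serial number.

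The main obstacle is the first step: confirming that Zhandry's scheme really does admit the controlled-unitary decomposition of $\mint$ with the serial number coherently derived from $s_1$. Once this structural fit is verified, the rest is bookkeeping: collect the assumptions used (iO and injective OWFs for Zhandry, OWFs for the PRF and the signatures, all implied by the stated hypotheses) and invoke the theorem and the mini-to-public-key lift.
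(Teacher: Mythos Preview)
Your proposal is correct and matches the paper's approach exactly: the paper simply instantiates $(\mint,\verify)$ with Zhandry's mini scheme, having earlier noted (just before the construction of $(\mint',\verify')$) that a PRF keyed by $s_1$ can supply the randomness so that Zhandry's minting fits the required controlled-unitary template. Your discussion of lifting to a full public-key scheme is extra---the corollary as stated is only about the mini scheme---but otherwise you have reproduced precisely the paper's (very terse) argument, with more detail than the paper itself provides.
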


\subsection{Copy-Protection}

\paragraph{Starting point: i.i.d copy secure copy-protection scheme.} Suppose there exists a i.i.d copy secure copy-protection scheme $(\copyprotect,\Eval)$ for $\fclass$ with the following structure: the copy-protection algorithm $\copyprotect$, on input $(1^{\secparam},f)$, first samples $r \xleftarrow{\$} \{0,1\}^{\ell(\secparam)}$ uniformly at random, applies $U_{\smcp}^{(\secparam,f)}$ on $\ket{r}_{\regX}\ket{0}_{\regY}$ to obtain $\ket{\psi_{r}^{(\secparam,f)}}_{\regA \regB}$. It traces out the register $\regB$ and outputs the register $\regA$ as the copy-protected state. We will assume that without loss of generality, $\Eval$ first applies a unitary $U_{\Eval}$ followed by measuring the first $m$ qubits, where $m$ is the output length of $f$. 
\par Most of the copy-protection schemes proposed in the literature~\cite{CLLZ22,LLLQ22,AB24,CG24,ABZ25,KY25} proceed by first sampling classical randomness and then deterministically computing the copy-protected state. In particular, the i.i.d secure copy-protection schemes proposed in the literature~\cite{LLLQ22,CG24} present an instantiation of the above template.  

\paragraph{Construction.} In addition to $(\copyprotect,\Eval)$ will also use a post-quantum secure pseudorandom function $f:\{0,1\}^{\secparam} \times \{0,1\}^{n+1} \rightarrow \{0,1\}^m$.
\par We show that there exists a multi-copy secure copy-protection scheme for $\fclass$.
\begin{itemize}
    \item $\copyprotect'\left(1^{\secparam},F\right)$: it does the following: 
    \begin{itemize}
        \item It prepares a uniform superposition over $\ell$-bit strings: $\sum_{r \in \{0,1\}^{\ell}} \frac{1}{\sqrt{2^{\ell}}} \ket{r}_{\regR}$, 
        \item It samples two PRF keys $k_1,k_2 \xleftarrow{\$} \{0,1\}^{\secparam}$. It applies the unitary $U_{\pha} = \sum_{r} \omega_q^{f(k_1,r||0)} \ketbra{r}_{\regR}$, where $q = 2^{\omega(\log(\secparam))}$. The resulting state is \\$\sum_{r \in \{0,1\}^{\ell}} \frac{\omega_q^{f(k_1,r)}}{\sqrt{2^{\ell}}} \ket{r}_{\regR}$
        \item It copies $\regR$ onto a different register $\regX$. It initalizes $\regY$ with $\ket{0}$. The resulting state is $\sum_{r \in \{0,1\}^{\ell}} \frac{\omega_q^{f(k_1,r)}}{\sqrt{2^{\ell}}} \ket{r}_{\regR} \ket{r}_{\regX} \ket{0}_{\regY}$. 
        \item It applies $I_{\regR} \otimes U_{\regX \regY}^{(\secparam,F)}$ to obtain the state $\sum_{r \in \{0,1\}^{\ell}} \frac{\omega_q^{f(k_1,r)}}{\sqrt{2^{\ell}}} \ket{r}_{\regR} \ket{\psi_r^{(\secparam,F)}}_{\regA \regB}$.
        \item  It applies the unitary $U_{\design} = \sum_{r} \ketbra{r}_{\regR} \otimes I_{\regA} \otimes X^{f(k_2,r||0)}_{\regB} Z^{f(k_2,r||1)}_{\regB}$ to obtain the following state:
        $$\ket{\Psi_F} = \sum_{r \in \{0,1\}^{\ell}} \frac{\omega_q^{f(k_1,r)}}{\sqrt{2^{\ell}}} \left( I_{\regR} \otimes I_{\regA} \otimes X^{f(k_2,r||0)}_{\regB} Z^{f(k_2,r||1)}_{\regB} \right) \ket{r}_{\regR} \ket{\psi_r^{(\secparam,F)}}_{\regA \regB}$$
        \item Output $\ket{\Psi_F}_{\regR \regA \regB}$. 
    \end{itemize}
    \item $\Eval'(\ket{\Psi_F}_{\regR \regA \regB},x)$: 
    \begin{itemize}
        \item It applies $I_{\regR} \otimes (U_{\Eval})_{\regA} \otimes I_{\regB}$,
        \item It measures the first $m$ qubits of $\regA$ to obtain the outcome $y$,
        \item Output $y$.
    \end{itemize}
\end{itemize}

\noindent Assuming the post-quantum security of the pseudorandom function $f$, the correctness of the above construction follows from the correctness of the copy-protection scheme. 

\begin{theorem}
Assuming $(\copyprotect,\Eval)$ satisfies iid muli-copy copy-protection security, $f$ is a post-quantum secure pseudorandom function, $(\copyprotect',\Eval')$ satisfies multi-copy security. 
\end{theorem}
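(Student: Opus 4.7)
The plan is to proceed by a hybrid argument closely mirroring the quantum money proof above. The goal is to show that a QPT adversary $\bfadv = (\adversary, \bob_1, \ldots, \bob_{t+1})$ that breaks identical-copy security of $(\copyprotect', \Eval')$ with non-negligible probability $p$ yields a QPT adversary for the i.i.d.-copy security game of $(\copyprotect, \Eval)$, contradicting the hypothesis.

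I would begin by defining the following sequence of hybrids. $\hybrid_1$ is the real identical-copy experiment: $\adversary$ receives $\ket{\Psi_F}^{\otimes t}$ generated honestly, outputs a $(t+1)$-partite state, then each $\bob_i$ is run on its piece with a random input $x_i$. In $\hybrid_2$, replace $f(k_1, \cdot)$ with a truly random function; in $\hybrid_3$, replace $f(k_2, \cdot)$ with a truly random function. Both hops are computationally indistinguishable by post-quantum PRF security, so the success probability remains negligibly close to $p$. In $\hybrid_3$, the state $\ket{\Psi_F}$ has exactly the form of the state $\ket{\psi_{f_1,f_2,f_3,f_4}}$ from Theorem~\ref{thm:main}, instantiated with $n_A=\ell$, the family $\{\ket{\phi_r}_{\regA\regB}\} = \{\ket{\psi_r^{(\secparam,F)}}_{\regA\regB}\}$, $f_4$ the identity, $f_1$ the random phase function, and $f_2,f_3$ the two halves of the random function on $\{0,1\}^{\ell+1}$; the register $\regR$ plays the role of the auxiliary register $\regC$.

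Theorem~\ref{thm:main} then provides a simulator $\sims$ such that
$$\E[\ket{\Psi_F}\bra{\Psi_F}^{\otimes t}] \approx_{t^2/2^\ell} \sims\!\left(\copyprotect(1^\secparam, F)^{\otimes t}\right),$$
defining $\hybrid_4$ in which $\adversary$ receives the simulator's output. In $\hybrid_4$ the success probability is still negligibly close to $p$. I would then define the reduction $\reduction$ to i.i.d.-copy security: on input $t$ independent copies of $\copyprotect(1^\secparam, F)$, $\reduction$ runs $\sims$ to produce $\sigma$, then runs $\adversary(\sigma)$ and forwards the resulting $(t+1)$-partite state. The decoders $\bob_1,\ldots,\bob_{t+1}$ are inherited unchanged. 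Since $\sims$ is efficient, $\reduction$ is QPT, and the distribution of its output is within $t^2/2^\ell$ of $\hybrid_1$, so the composite adversary $(\reduction, \bob_1,\ldots,\bob_{t+1})$ wins the i.i.d.-copy game with probability $p - \negl(\secparam)$, contradicting i.i.d.-copy security of $(\copyprotect, \Eval)$.

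The main subtlety will be verifying that the Main Theorem is correctly instantiated on the construction's registers: one must check that applying the one-time pad on $\regB$ controlled on $\regR$ correctly randomizes the purification $\ket{\psi_r^{(\secparam,F)}}_{\regA\regB}$ into its mixed-state form, and that the values $f(k_2, r\|0), f(k_2, r\|1)$ across varying $r$ indeed instantiate the independent random functions $f_2, f_3$ required by Theorem~\ref{thm:main}. A minor point to address is that each $\bob_i$ in the reduction receives a piece of the adversary's output that may contain the extra $\regR$ and $\regB$ registers; this is harmless because the identical-copy security definition allows $\bob_i$ to be any QPT algorithm, so the additional registers are simply part of its input state. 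Finally, for the error $t^2/2^\ell$ to be negligible, we implicitly require that the randomness length $\ell$ of the underlying copy-protection scheme is super-logarithmic, which holds for all standard constructions.
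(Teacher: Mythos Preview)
Your proposal is correct and follows essentially the same approach as the paper's proof: replace the two PRF instances by truly random functions via two hybrids, invoke the Main Theorem (\Cref{thm:main}) to swap $t$ identical copies of $\ket{\Psi_F}$ for the simulator applied to $t$ i.i.d.\ copies of $\copyprotect(1^{\secparam},F)$, and then read off a reduction to i.i.d.-copy security. Your observation that $f_4$ is the identity here (with the consequence that the simulator's i.i.d.\ sampling differs from the construction's distinct-index superposition only by an additional $O(t^2/2^{\ell})$ term) and your remark that $\ell$ must be super-logarithmic are both more explicit than the paper, which simply asserts the hybrids are ``identically distributed'' at that step.
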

\begin{proof}
Suppose there exists a QPT adversary $\bm{\alice'}=(\alice',\bob'_1,\ldots,\bob'_{t+1})$ that violates the multi-copy security of $(\copyprotect',\Eval')$. That is, consider the following security experiment: 
\begin{itemize}
    \item $\alice'$ gets $t$ copies of $\ket{\Psi_f}_{\regR \regA \regB}$ and outputs a $(t+1)$-partite state on registers $\regZ_{\bob'_1},\ldots,\regZ_{\bob'_{t+1}}$. It sends the register $\regZ_{\bob'_i}$ to $\bob'_i$. 
    \item $\bob'_i$ then gets as input $x_i \xleftarrow{\$} \{0,1\}^n$, where $n$ is the input length of $f$. It then outputs $(y_1,\ldots,y_{t+1})$. 
\end{itemize}
\noindent We denote the probability that $(y_1,\ldots,y_{t+1})=(f(x_1),\ldots,f(x_{t+1}))$ to be $p$, where $p$ is non-negligible. Using this, we design a QPT adversary ${\bm \alice}=(\alice,\bob_1,\ldots,\allowbreak \bob_{t+1})$ that violates the iid multi-copy security of the copy-protection scheme $(\copyprotect,\allowbreak \Eval)$.  
\par We prove this by a hybrid argument. \\

\noindent $\hybrid_1$: This corresponds to the real experiment. That is, $\bm{\alice}$ receives as input $t$ copies of $\ket{\psi'}_{\regS \regA \regB \regC}$ and outputs $(y_1,\ldots,y_{t+1})$. We refer to the success probability of $\adversary$ as the probability that $(y_1,\ldots,y_{t+1})=(f(x_1),\ldots,f(x_{t+1}))$, which is $p$. \\

\noindent $\hybrid_2$:  Modify the generation of $\ket{\Psi_f}_{\regR \regA \regB}$ as follows: instead of applying the $U_{\design} = \sum_{r} \ketbra{r}_{\regR} \allowbreak \otimes I_{\regA} \otimes X^{f(k_2,r||0)}_{\regB} Z^{f(k_1,r||1)}_{\regB}$, instead apply the unitary $\sum_{r} \ketbra{r}_{\regR} \otimes I_{\regA} \otimes X^{f(r||0)}_{\regB} Z^{f(r||1)}_{\regB}$, where $f$ is a random function. 
\par From the post-quantum security of pseudorandom function $f$, the hybrids $\hybrid_1$ and $\hybrid_2$ are computationally indistinguishable. The success probability of $\adversary$ in this hybrid is negligibly close to $p$.  
\\

\noindent $\hybrid_3$: Modify the generation of $\ket{\Psi_f}_{\regR \regA \regB}$ as follows: instead of applying the $U_{\pha} = \sum_{s} \omega_q^{f(k_1,s)} \ketbra{s}$, instead apply the unitary $U_{\pha} = \sum_{s} \omega_q^{f(r)} \ketbra{s}$, where $f$ is a random function. 
\par From the post-quantum security of pseudorandom function $f$, the hybrids $\hybrid_2$ and $\hybrid_3$ are computationally indistinguishable. The success probability of $\adversary$ in this hybrid is negligibly close to $p$.\\

\noindent $\hybrid_4$: Let $\sims$ be the efficient algorithm from~\Cref{cons:sim}. Execute\\ $\sims\left( \Tr_{\regB}\left( \ketbra{\Psi_f}_{\regR \regA \regB} \right)^{\otimes t} \right)$ to obtain $\sigma$. Execute $\bm{\alice}(\sigma)$ to obtain $\rho$. 
\par The success probability of $\adversary$ in this hybrid is negligibly close to $p$. This follows from the fact that using~\Cref{thm:mainthm}, we have that the hybrids $\hybrid_3$ and $\hybrid_4$ are identically distributed. Note that the combined registers $(\regR,\regA)$ in the above hybrid will take the role of $\regA$ in~\Cref{thm:mainthm}. \\

\noindent We now design $\bm{\alice'}=(\alice,\bob_1,\ldots,\bob_t)$:
\begin{itemize}
    \item $\alice:$ Upon receiving $\rho^{\otimes t}_{\regA[1],\ldots,\regA[t]}$, first execute $\sims\left( \rho^{\otimes t}_{\regA[1],\ldots,\regA[t]} \right)$ to obtain $\sigma$. It then executes $\alice(\sigma)$ to obtain $(t+1)$ registers $\regZ_{\bob'_1},\ldots,\regZ_{\bob'_{t+1}}$. It sends $\regZ_{\bob'_i}$ to $\bob_i$. 
    \item $\bob_i$: upon receiving $x_i$, it runs $\bob'_i$ on $(\regZ_{\bob'_i},x_i)$ to obtain $y_i$. Output $y_i$. 
\end{itemize}
\noindent The probability that $\bm{\alice'}$ outputs $(y_1,\ldots,y_{t+1})=(f(x_1),\ldots,f(x_{t+1}))$ is negligibly close to $p$ and hence, non-negligible. This contradicts the iid multi-copy security of $(\copyprotect,\Eval)$. 
\end{proof}

Instantiating $(\copyprotect,\Eval)$ using the scheme by~\cite{CG24b}, we obtain the following corollary. 

\begin{corollary}
Assuming post-quantum sub-exponentially secure indistinguishability obfuscation and learning with errors, there exists identical-copy secure copy-protection schemes for digital signatures and pseudorandom functions. 
\end{corollary}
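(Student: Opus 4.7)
The plan is to invoke the preceding theorem as a black box after verifying that the copy-protection scheme of Coladangelo--Gunn~\cite{CG24b} satisfies the two hypotheses required: (i) it provides iid-copy security for the stated function classes under the stated assumptions, and (ii) its construction conforms to the structural template assumed earlier in the section, namely that $\copyprotect(1^\secparam,f)$ first samples classical randomness $r\gets\{0,1\}^{\ell(\secparam)}$ uniformly, then applies a \emph{deterministic} unitary $U_{\smcp}^{(\secparam,f)}$ to $\ket{r}_{\regX}\ket{0}_{\regY}$, and finally traces out $\regB$ to output the copy-protected state on $\regA$.

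First I would recall from~\cite{CG24b} that under post-quantum sub-exponentially secure indistinguishability obfuscation and learning with errors, there exist copy-protection schemes for digital signatures and for pseudorandom functions satisfying iid-copy security. Next I would check the template: the Coladangelo--Gunn constructions begin by sampling classical secrets (coset states' parameters and obfuscated circuits' PRF keys, all of which can be viewed as a single uniform string $r$) and then produce the copy-protected state by a fixed isometry acting on $\ket{r}$; the trailing junk can be collected in a register $\regB$ and traced out, exactly matching the template. Thus both hypotheses of the preceding theorem are met.

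For the second ingredient, post-quantum secure pseudorandom functions $f:\{0,1\}^\secparam\times\{0,1\}^{n+1}\to\{0,1\}^m$ are implied by post-quantum one-way functions, and in particular by LWE, so the PRF $f$ used inside $\copyprotect'$ exists under the stated assumptions. Applying the preceding theorem to $(\copyprotect,\Eval)$ instantiated with the Coladangelo--Gunn scheme yields a scheme $(\copyprotect',\Eval')$ satisfying identical-copy security for the same function class. The only subtlety I anticipate is confirming that the Coladangelo--Gunn scheme indeed generates its copy-protected state by a deterministic unitary on classical randomness (as opposed to using intermediate measurements); this can be handled by deferring any measurements to the end via standard purification, which does not affect functionality since the measurement outcomes are not used in subsequent evaluation. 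With these observations, the corollary follows immediately from the preceding theorem.
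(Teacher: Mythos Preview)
Your proposal is correct and follows exactly the paper's approach: the paper simply states that instantiating $(\copyprotect,\Eval)$ with the scheme of~\cite{CG24b} yields the corollary, and you do the same while being more explicit about why the hypotheses of the preceding theorem (iid-copy security, the structural template for $\copyprotect$, and the existence of a post-quantum PRF) are satisfied. Your additional care in verifying the template and noting that deferred measurement handles any intermediate measurements is a welcome elaboration but does not depart from the paper's argument.
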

\fi

\section{$t$-copy Pseudorandom States}
\label{sec:prs}

We will begin by remarking on a useful property of Haar random states. \pnote{is the reference correct below? I think it was done in BS19 and AGQY if I'm not mistaken. JLS doesn't consider binary phase}\enote{This is not binary phase. Binary phase is the distinct case.}
\begin{lemma}[See proof of Lemma 1~\cite{JLS18}]\label{lem:haartypelemma}
    Let $n,t\in \N$. For $\vec{r}=(r_1,\dots,r_t)\in (\{0,1\}^{n})^t$, define 
    $$\ket{perm_{\vec{r}}} \propto \sum_{\pi \in Sym(t)} \bigotimes_{j=1}^t \ket{r_{\pi(t)}}$$
    Then
    $$\TD\left(\E_{\ket{\phi}\gets \Haar(\{0,1\}^n)}\left[\ketbra{\phi}^{\otimes t}\right],\E_{\vec{r}\gets \{0,1\}^{n\cdot t}}\left[\ketbra{perm_{\vec{r}}}\right]\right) \leq O\left(\
\frac{t^2}{2^n}\right)$$
\end{lemma}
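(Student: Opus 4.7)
The plan is to recognize both sides of the claimed trace-distance bound as probability distributions over the \emph{same} orthonormal basis of the symmetric subspace (the normalized type states), and then compare the two probability vectors directly.

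First I would set up notation. For a type vector $T\in\types(2^n,t)$, let $\ket{T}$ denote the normalized symmetric type state, i.e.\ the uniform superposition over the $|T|=\binom{t}{T_1,\ldots,T_{2^n}}$ distinct basis vectors of $(\mathbb{C}^{2^n})^{\otimes t}$ with multiplicities given by $T$. These $\ket{T}$ form an orthonormal basis of the symmetric subspace, whose dimension is $d_{\mathrm{sym}}=\binom{2^n+t-1}{t}$. A direct counting argument shows that for any $\vec r$ with $\type(\vec r)=T$, the state $\ket{perm_{\vec r}}$ as defined in the lemma is, after normalization, exactly $\ket{T}$; in particular it depends only on the type of $\vec r$. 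Similarly, the standard identity $\E_{\ket\phi\gets\Haar}[\ketbra{\phi}^{\otimes t}]=\Pi_{\mathrm{sym}}/d_{\mathrm{sym}}$ lets me write the Haar average as a uniform mixture over the basis $\{\ket{T}\}$.

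Next I would express both sides in this basis. Writing $p_T^{\Haar}=1/d_{\mathrm{sym}}$ and $p_T^{\mathrm{perm}}=\Pr_{\vec r}[\type(\vec r)=T]=|T|/2^{nt}$, the trace distance is
\begin{equation*}
\TD\!\left(\E_{\ket\phi\gets\Haar}[\ketbra{\phi}^{\otimes t}],\,\E_{\vec r}[\ketbra{perm_{\vec r}}]\right)=\tfrac12\sum_{T}\bigl|p_T^{\Haar}-p_T^{\mathrm{perm}}\bigr|.
\end{equation*}
I would then split the sum according to whether $T$ is a ``distinct'' type (all $T_i\in\{0,1\}$, so $|T|=t!$) or a ``collision'' type (some $T_i\ge 2$). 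On the distinct types, both $p_T^{\Haar}$ and $p_T^{\mathrm{perm}}$ are constants independent of $T$, namely $1/d_{\mathrm{sym}}$ and $t!/2^{nt}$. A birthday-style computation gives
\begin{equation*}
\sum_{T\text{ distinct}}p_T^{\mathrm{perm}}=\prod_{j=0}^{t-1}\frac{2^n-j}{2^n}=1-O\!\left(\tfrac{t^2}{2^n}\right),\qquad \sum_{T\text{ distinct}}p_T^{\Haar}=\prod_{j=0}^{t-1}\frac{2^n-j}{2^n+j}=1-O\!\left(\tfrac{t^2}{2^n}\right),
\end{equation*}
so the collision types contribute at most $O(t^2/2^n)$ to each marginal, hence at most $O(t^2/2^n)$ to the $\ell_1$ difference.

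It remains to bound the contribution from distinct types. On those, both distributions are uniform over the same set of $\binom{2^n}{t}$ outcomes, but with slightly different normalizations; the $\ell_1$ discrepancy is exactly $\bigl|\sum_{T\text{ distinct}}p_T^{\Haar}-\sum_{T\text{ distinct}}p_T^{\mathrm{perm}}\bigr|$, which by the two estimates above is again $O(t^2/2^n)$. Adding the two contributions and halving yields the claimed bound. The main piece of bookkeeping is the normalization identity $\ket{perm_{\vec r}}=\ket{\type(\vec r)}$; everything else is a small, self-contained combinatorial estimate.
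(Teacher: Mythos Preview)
Your proof is correct. The paper does not actually prove this lemma; it simply cites \cite{JLS18} and uses the result as a black box. Your argument---diagonalizing both density operators in the orthonormal basis of type states, reducing the trace distance to a classical $\ell_1$ distance between the two induced probability vectors $(1/d_{\mathrm{sym}})_T$ and $(|T|/2^{nt})_T$, and then handling the distinct-type and collision-type contributions separately via birthday bounds---is exactly the standard approach and matches how this is done in the literature (including in the reference the paper cites). There is nothing to compare against in the paper itself.
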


Let $G$ be a psuedorandom state generator. We will assume without loss of generality that $G(k)$ acts as follows
\begin{enumerate}
    \item Apply a unitary $U_G$ to the state $\ket{k}\ket{0}$, producing a state $\ket{\phi_k}_{AB}\ket{0}_C$
    \item Output $\Tr_B(\ketbra{\phi}_{AB})$.
\end{enumerate}
We will say that $G$ produces $\ell_j$ bits of junk, where the register $B$ is over $\mathcal{H}(\{0,1\}^{\ell_j})$.

\begin{theorem}\label{thm:prsthm}
    Let $\{\ket{\phi_k}\}$ be a single-copy pseudorandom state generator with keys of length $\ell_k(\secparam)$ over states of length $\ell_n(\secparam)$ producing $\ell_j(\secparam)$ bits of junk.
    
    Let $t(\secparam),\ell'(\secparam)$ be any polynomials such that $\ell'=\omega(\log\secparam)$. Let $\{f_{1,k}:\{0,1\}^{\ell'} \to [t+1]\},\{f_{2,k}:\{0,1\}^{\ell'}\to \{0,1\}^{\ell_j}\},\{f_{3,k}:\{0,1\}^{\ell'}\to \{0,1\}^{\ell_j}\},\{f_{4,k}:\{0,1\}^{\ell'}\to \{0,1\}^{\ell_k}\}$ be four $2t$-wise independent hash function famillies with keys of length $\ell_{k_{f_1}},\dots,\ell_{k_{f_4}}$ respectively.

    Then there exists a $t$-time pseudorandom state generator with keys of length $\ell'_k = \ell_{k_{f_1}}+\ell_{k_{f_2}}+\ell_{k_{f_3}} + \ell_{k_{f_4}}$ over states of length $\ell'_n=\ell'+\ell_n + \ell_j$.
\end{theorem}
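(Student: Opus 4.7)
Define the $t$-copy PRSG $G'$ on key $\wt{k} = (k_{f_1}, k_{f_2}, k_{f_3}, k_{f_4})$ to output the state $\ket{\psi_{f_1, f_2, f_3, f_4}}$ from \Cref{thm:main}, instantiated with $f_i \coloneqq f_{i, k_{f_i}}$ and the pure-state family $\{\ket{\phi_k}_{\regA \regB}\}$ that underlies $G$. The overall plan is to establish indistinguishability of $t$ copies of $G'(\wt{k})$ from $t$ copies of a Haar random state on $\ell'_n = \ell' + \ell_n + \ell_j$ qubits via a chain of hybrids that walks from the real distribution, through the output of the simulator $\Sim$ of \Cref{thm:main} on iid samples of $G$, through the output of $\Sim$ on iid Haar inputs (replaced with uniform computational-basis inputs for efficiency), and finally to the Haar random state on the larger register.

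\textbf{Hybrids.} Let $H_0$ be the real distribution $\E_{\wt k}[\ketbra{G'(\wt k)}^{\otimes t}]$. Since preparing $t$ copies of $\ket{\psi_{f_1,\ldots,f_4}}$ is an $O(t)$-query algorithm on each $f_i$, \Cref{lem:twise} lets me replace each $2t$-wise independent $f_i$ with a uniformly random function while keeping the density matrix identical; this yields $H_1$. Invoking \Cref{thm:main} on the family $\{\ket{\phi_k}\}$ gives $H_1 \approx_{t^2/2^{\ell'}} H_2$, where $H_2 = \E_{k_1, \ldots, k_t}[\Sim(G(k_1) \otimes \cdots \otimes G(k_t))]$; the error is negligible because $\ell' = \omega(\log \secparam)$. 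I next replace each $G(k_i)$ with a Haar random state $\ket{\phi^*_i}$ via a standard $t$-step hybrid argument using $1$-copy security of $G$, obtaining $H_3 \coloneqq \E_{\vec \phi^*}[\Sim(\ketbra{\phi^*_1} \otimes \cdots \otimes \ketbra{\phi^*_t})]$ with $H_2$ and $H_3$ computationally indistinguishable up to distinguishing advantage bounded by $t \cdot \negl(\secparam)$. Finally, unfolding the simulator on uniform computational-basis inputs shows that $H_4 \coloneqq \E_{\vec a, \vec r}[\Sim(\ketbra{a_1} \otimes \cdots \otimes \ketbra{a_t})]$ equals the permutation-state average $\E_{\vec s \gets \{0,1\}^{\ell'_n \cdot t}}[\ketbra{perm_{\vec s}}]$ (setting $s_i = (a_i, r_i) \in \{0,1\}^{\ell_n + \ell_j + \ell'}$), and \Cref{lem:haartypelemma} yields $H_4 \approx_{O(t^2/2^{\ell'_n})} H_5$, where $H_5$ is $t$ copies of a Haar random state on $\ell'_n$ qubits.

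\textbf{Main obstacle.} The delicate step is the reduction from $H_2$ to $H_3$: a reduction to $1$-copy PRS security must efficiently prepare each intermediate hybrid, but the ``already replaced'' slots contain Haar random states that are not efficiently sampleable. The key technical observation that unblocks this, and simultaneously justifies the equality $H_3 = H_4$, is that the channel $\Sim$ is linear and both the Haar distribution on $\ell_n$ qubits and the uniform distribution on $\{0,1\}^{\ell_n}$ have expectation equal to the maximally mixed state $I/2^{\ell_n}$; hence $\Sim$ produces the same density matrix whether its input slots are sampled as Haar random states or as uniform computational-basis states. The reduction can therefore populate the ``already replaced'' slots with uniform bit strings and remain efficient, while the same linearity collapses $H_3$ to $H_4$ exactly. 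Chaining the bounds gives computational indistinguishability of $H_0$ from $t$ copies of a Haar random state on $\ell'_n$ qubits, proving that $G'$ is a $t$-copy PRSG with the claimed key and output lengths.
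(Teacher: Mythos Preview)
Your proposal is correct and follows essentially the same hybrid sequence as the paper: replace the $2t$-wise independent hashes by random functions, invoke \Cref{thm:main} to pass to the simulator, use $1$-copy PRS security to swap in Haar states, collapse Haar to uniform strings via the maximally-mixed observation, unfold the simulator to a permutation state, and finish with \Cref{lem:haartypelemma}. Your explicit treatment of the efficiency issue in the $H_2\to H_3$ reduction (populating already-replaced slots with uniform strings rather than Haar samples) is in fact more careful than the paper, which glosses over this point; otherwise the arguments are the same.
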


Applying~\Cref{thm:efftwise} and setting $\ell' = \ell_k$ gives
\begin{corollary}
    Let $\{\ket{\phi_k}\}$ be a single-copy pseudorandom state generator with keys of length $\ell_k(\secparam)$ over states of length $\ell_n(\secparam)$ producing $\ell_j(\secparam)$ bits of junk.
    
    Let $t$ be any polynomially bounded function. Then there exists a $t$-time pseudorandom state generator with keys of length $O(t\cdot (\ell_k+\ell_j))$ over states of length $\ell_k + \ell_n + \ell_j$.
\end{corollary}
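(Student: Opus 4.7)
The plan is to instantiate Theorem~\ref{thm:main} with the underlying one-copy family $\{\ket{\phi_k}_{\regA \regB}\}$, using $\{0,1\}^{\ell'}$ as the index register $\regC$, $q = t+1$, and $f_4:\{0,1\}^{\ell'}\to\{0,1\}^{\ell_k}$. The new PRSG on key $\wt{k}=(k_{f_1},k_{f_2},k_{f_3},k_{f_4})$ unitarily prepares the state $\ket{\psi_{f_{1,k_{f_1}},f_{2,k_{f_2}},f_{3,k_{f_3}},f_{4,k_{f_4}}}}$ defined in Theorem~\ref{thm:main}, on $\ell' + \ell_n + \ell_j$ qubits. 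Using standard phase/XOR oracle access, a single preparation uses one (parallel) quantum query to each of $f_1,f_2,f_3,f_4$ together with one application of the purification unitary $U_G$, so the construction is efficient and has the claimed key and output lengths.

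For pseudorandomness I would use the following sequence of hybrids against any QPT distinguisher $\adversary$ receiving $t$ copies of the output. Hybrid 0 is the real experiment; Hybrid 1 replaces each $f_{i,k_{f_i}}$ by a uniformly random function of the same type. Since generating $t$ copies makes at most $t$ queries to each $f_i$, Theorem~\ref{lem:twise} together with $2t$-wise independence renders Hybrids 0 and 1 identically distributed. Hybrid 2 applies Theorem~\ref{thm:main}, replacing $\E[\ketbra{\psi}^{\otimes t}]$ by $\sims(G(k_1),\ldots,G(k_t))$ for iid uniform $k_j$, at trace distance at most $t^2/2^{\ell'}$. Hybrid 3 invokes one-copy security of $G$ in a standard $t$-step hybrid over the inputs fed to $\sims$, replacing each $G(k_j)$ with a fresh Haar-random $\ket{\phi_j}_{\regA_j}$ at cost $t\cdot\negl(\secparam)$, since the distinguisher composed with $\sims$ is still QPT. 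Hybrid 4 replaces each $\ket{\phi_j}_{\regA_j}$ by a uniformly random basis state $\ket{s_j}_{\regA_j}$; because $\sims$ is a well-defined CPTP map on its $t$ inputs and $\E_{\ket{\phi}\gets\Haar_{\ell_n}}[\ketbra{\phi}]=I/2^{\ell_n}=\E_s[\ketbra{s}]$, this is an exact equality of output density matrices.

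Finally, I would observe that $\sims$ on pure inputs $\ket{s_j}_{\regA_j}$ with random labels $r_j\in\{0,1\}^{\ell_j+\ell'}$ on $\regB_j\regC_j$ outputs
$$\frac{1}{\sqrt{t!}}\sum_{\pi\in Sym([t])}\bigotimes_{j=1}^t \ket{s_{\pi(j)}}_{\regA_j}\ket{r_{\pi(j)}}_{\regB_j\regC_j},$$
which, after regrouping $u_j := (s_j,r_j)\in\{0,1\}^{n'}$ for $n'=\ell_n+\ell_j+\ell'$, is precisely $\ket{perm_{\vec{u}}}$ from Lemma~\ref{lem:haartypelemma} with $\vec{u}$ iid uniform. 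Lemma~\ref{lem:haartypelemma} then gives closeness $O(t^2/2^{n'})$ to $t$ copies of a Haar-random state on $n'$ qubits, which is exactly the output dimension of the new PRSG. Summing yields total error $t^2/2^{\ell'} + t\cdot\negl(\secparam) + O(t^2/2^{n'}) = \negl(\secparam)$ since $\ell'=\omega(\log\secparam)$.

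The main care point I anticipate is the query-complexity bookkeeping in Hybrid 0$\to$1: when the Paulis and the phase factor are implemented via phase/XOR oracles, I must check that the full state-generation-plus-distinguisher process makes at most $t$ queries to each $f_i$ so that Theorem~\ref{lem:twise} with $2t$-wise independence yields exact equivalence rather than merely approximate. A secondary but easy point is the combinatorial identification in the last hybrid, where I relabel $(s_j,r_j)$ as a single uniform $n'$-bit string so as to match the form of Lemma~\ref{lem:haartypelemma}. Everything else is a direct chaining of Theorem~\ref{thm:main}, the one-copy security of $G$, and Lemma~\ref{lem:haartypelemma}.
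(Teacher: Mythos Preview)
Your proposal is correct and follows essentially the same approach as the paper. In the paper, the corollary is obtained in one line from Theorem~\ref{thm:prsthm} by invoking Theorem~\ref{thm:efftwise} and setting $\ell'=\ell_k$; your hybrid sequence is precisely the paper's proof of Theorem~\ref{thm:prsthm} (your Hybrids~0--4 plus the final appeal to Lemma~\ref{lem:haartypelemma} match the paper's $\hybrid_1$--$\hybrid_7$), and the only thing you should make explicit is the choice $\ell'=\ell_k$ together with Theorem~\ref{thm:efftwise} to get the stated key length $O(t(\ell_k+\ell_j))$ and output length $\ell_k+\ell_n+\ell_j$.
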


\begin{proof}

The construction will be exactly~\Cref{cons:mainthm} where $f_1,\dots,f_4$ are instantiated by the $2t$-wise independent functions. Formally, $\wt{G}(k_{f_1},\dots,k_{f_2})$ will output
$$\frac{1}{\sqrt{2^{\ell'}}}\sum_{i \in \{0,1\}^{\ell'}}\omega_{t+1}^{f_{1,k_{f_1}}(i)}(I_{\regA} \otimes X^{f_{2,k_{f_2}}(i)}_{\regB}Z^{f_{3,k_{f_3}}(i)}_{\regB} \otimes I_{\regC}) \ket{\phi_{f_{4,k_{f_4}}(i)}}_{\regB}\otimes \ket{i}_{\regC}$$

\noindent We show that $\widetilde{G}$ is a secure pseudorandom generator by a hybrid argument. Let $\adv$ be a QPT adversary that takes as input $t$ copies of a state and it needs to distinguish whether this state is a PRS state or is it Haar random. Define $p_i$ the probability that $\adv$ outputs $1$ in $\hybrid_i$.\\

\noindent $\hybrid_1$: $\adv$ receives as input $\widetilde{G}\left( \widetilde{k} \right)^{\otimes t}$, where $\widetilde{k} \xleftarrow{\$} \{0,1\}^{\secparam'}$. \\

\noindent $\hybrid_2$: Sample $f_1,\dots,f_4$ uniformly random functions. $\adv$ receives as input
$$\frac{1}{\sqrt{2^{\ell'}}}\sum_{i \in \{0,1\}^{\ell'}}\omega_{t+1}^{f_{1}(i)}(I_{\regA} \otimes X^{f_{2}(i)}_{\regB}Z^{f_{3}(i)}_{\regB} \otimes I_{\regC}) \ket{\phi_{f_{4}(i)}}_{\regB}\otimes \ket{i}_{\regC}$$
Since $f_1,\dots,f_4$ are replacing $2t$-designs, by~\Cref{lem:twise} $p_2=p_1$.\\

\noindent $\hybrid_3$: Let $\Sim_{PRS}$ be the algorithm from~\Cref{cons:sim} for the family $\{\ket{\phi_k\}}$. Sample $k_1,\dots,k_t$. $\adv$ receives as input 
$$\Sim(\Tr_{\regB}(\ketbra{\phi_{k_1}}_{\regA \regB})\otimes \dots \otimes \Tr_{\regB}(\ketbra{\phi_{k_t}}_{\regA \regB}))$$
By~\Cref{thm:mainthm}, $\abs{p_2-p_1}\leq \negl(\secparam)$.\\

\noindent $\hybrid_4$: Sample $\ket{\phi_1},\dots,\ket{\phi_k}$ uniformly random states over $\mathcal{H}(\{0,1\}^{\ell_n})$. $\adv$ receives as input 
$$\Sim(\ketbra{\phi_1}\otimes \dots \otimes \ketbra{\phi_t})$$
Note that in $\hybrid_3$, the entire game gets access to exactly one copy of each $\Tr_{\regB}(\ketbra{\phi_{k_i}}_{\regA\regB})$. By appling single-copy pseudorandom state generator security security for each $i$, we get $\abs{p_4-p_3}\leq \negl(\secparam)$

\noindent $\hybrid_5$: Sample $r_1,\dots,r_k\gets \{0,1\}^{\ell_n}$. $\adv$ receives as input
$$\Sim(\ket{r_1}\otimes \dots \otimes \ket{r_t})$$
This follows immediately from the fact that the mixed state representing one copy of a Haar random state is exactly the maximally mixed state. And so $p_5=p_4$.\\

\noindent $\hybrid_6$: Sample $r_1',\dots,r_k' \gets \{0,1\}^{\ell'_n}$. $\adv$ receives as input
$$\propto\sum_{\pi \in Sym(\ell'_n)}\bigotimes_{j=1}^n\ket{r_{\pi(j)}'}$$
This state is exactly the state $\Sim$ produces on input $\ket{r_1}\dots\ket{r_t}$ for random $r_1,\dots,r_t$. Thus, $p_6=p_5$.\\

\noindent $\hybrid_7$: Sample $\ket{\psi}\gets \mathcal{H}(\{0,1\}^{\ell'_n})$. $\adv$ receives as input $\ket{\psi}^{\otimes t}$.
By~\Cref{lem:haartypelemma}, the state $\adv$ receives in $\hybrid_6$ and $\hybrid_7$ are negligibly close in trace distance. And so the probability that $\adv$ outputs $1$ in both games will be negligibly close. That is, $\abs{p_7-p_6}\leq \negl(\secparam)$.\\

Combining all these hybrids together, we get $\abs{p_7-p_1}\leq \negl(\secparam)$ and so $\wt{G}$ is a $t$-copy secure pseudorandom state generator.

\end{proof}

\section{Simulating non-adaptive queries to a family of unitaries}
\subsection{Notation}

\pnote{I think we need consistent notation for registers. Some places we use $K$, the other places we use ${\bf X}$ etc. I'll try to make it consistent below.}

\newcommand{\V}{\mathcal{V}}

\begin{definition}
    For a set $S$, we define $\mathcal{H}(S)$ to be the Hilbert space of dimension $|S|$ generated by $\ket{s}$ for $s\in S$.
\end{definition}

\begin{definition}
    Let $\V=\{V_i\}_{i\in [N]}$ be some family of isometries. Let $\regIn$ be an input register for $\V$, and let $\regOut$ be an output register. Moreover, let $\regK$ be a register on a Hilbert space of dimension $N$. Define the isometry $\Apply^\V_{\regK,\regIn}$ \pnote{why is the superscript ${\cal U}$ and not ${\cal V}$?}
    $$\Apply^\V_{\regK, \regIn} (\ket{k}_{\regK}\otimes \ket{x}_{\regIn}) \mapsto \ket{k}_{\regK} \otimes (V_k \ket{x})_{\regOut}$$
    \pnote{why are the registers $\regIn$ and $\regOut$ different? is it because $V_k$ is an isometry?}\enote{Yes}
\end{definition}

\begin{definition}
    For a function $f:\{0,1\}^n\to [q]$, define the unitary $S^f$ to be the map acting over $\mathcal{H}(\{0,1\}^{n})$ by
    $$S^f \ket{x} \mapsto \omega_q^{f(x)} \ket{x}$$
\end{definition}

\begin{definition}
    For any $\ell,t$, we define a projector $\Pi_{{\sf dist}}^{\ell,t}$ over $\mathcal{H}(\{0,1\}^{\ell})^{\otimes t}$ by
    $${\sf Im}(\Pi_{{\sf dist}}^{\ell,t}) = {\sf Span}(\{\ket{x_1,\dots,x_t}:x_1\neq \dots \neq x_t\}),$$
    where: ${\sf Im}(\Pi)$, for a projector $\Pi$, is defined to the set of all $\ket{u}$ such that $\Pi \ket{u} = \ket{u}$.
\end{definition}

\begin{definition}
    For a set $S$ and any $n\in \N$, we define
    $${\sf MS}^{S,n} = \{S' \subseteq_{{\sf ms}} S : |S'|= n\}$$
    to be the set of multisets containing at $n$ elements from $S$. We say that $A \subseteq_{{\sf ms}} B$ if $A \subseteq B$ and $A$ is a multiset.   

    Similarly, define
    $${\sf MS}^{S,\leq n} = \{S' \subseteq_{{\sf ms}} S : |S'|\leq n\}$$
    be the set of multisets containing at most $n$ elements from $S$.

    We will identify multisets with sorted lists of elements, possibly containing duplicates.
\end{definition}

\begin{definition}
    For registers $\regK,\regR,\regK',\regR'$ where $\regK,\regK'$ are over $\mathcal{H}(\{0,1\}^{\secparam})$, we define the unitary $\Select$ which swaps registers $\regR,\regR'$ if and only if the values in registers $\regK,\regK'$ are the same. That is,
    $$\Select\ket{k}_{\regK}\ket{x}_{\regR}\ket{k'}_{\regK'}\ket{x'}_{\regR'} \mapsto \begin{cases}
        \ket{k}_{\regK}\ket{x'}_{\regR}\ket{k'}_{\regK'}\ket{x}_{\regR'} & k=k'\\
        \ket{k}_{\regK}\ket{x}_{\regR}\ket{k'}_{\regK'}\ket{x'}_{\regR'} & k\neq k'
    \end{cases}$$
\end{definition}

\begin{definition}
    An oracle $\mathcal{O}$ is defined by an isometry acting over an input register $\regIn$ and an internal register $\regSt$. An oracle algorithm $\mathcal{A}^{\mathcal{O}}$ is a sequence of isometries $\A^1,\dots,\A^t$ acting on registers $\regX_1 \otimes \regIn_1,\dots,\regX_t \otimes \regIn_t$ with output register $\regY$. On any input state $\ket{\phi}_{\regX_1,\regIn_1,\regSt}$, the evaluation of $\mathcal{A}$ on $\ket{\phi}$ is the state
    $$(\A^{\mathcal{O}}\ket{\phi})_{\regY,\regSt} = \A^t_{\regX_t, \regIn_t}\cdot \mathcal{O}_{\regIn_{t-1},\regSt}\cdot \A^{t-1}_{\regX_{t-1},\regIn_{t-1}}\cdots \mathcal{O}_{\regIn_1,\regSt}\cdot \A^1_{\regX_1,\regIn_1}\ket{\phi}_{\regX_1,\regIn_1,\regSt}$$
\end{definition}

\subsection{Main Theorem - Unitary setting}

\begin{theorem}[Main Theorem For Unitary Setting]\label{thm:unisimmain}
    Let $\U=\{U_k\}_{k\in \{0,1\}^\secparam}$ be a collection of unitaries acting on some Hilbert space $\mathcal{H}^{\U}$. Let $t\in \N$.

    Let $\regIn=\regIn_1,\dots,\regIn_t$ be a register over $(\mathcal{H}^U)^{\otimes t}$. Let $\regK=\regK_1,\dots,\regK_t$ be a register over $(\mathcal{H}(\{0,1\}^{\secparam}))^{\otimes t}$.

    There exists a CPTP map $\Sim^t$ such that the following holds: Let $\rho_{\regIn,\regK}$ be any state such that $\Tr(((\Pi_{dist}^{\secparam,t})_{\regK} \otimes I_{\regIn})\rho_{\regK,\regIn}) = 1$. For any $f:\{0,1\}^{\secparam} \to [2t]$, define:  $$\sigma^f_{\regK,\regIn} = \left( \bigotimes_{i=1}^t S^{f}_{\regK_i}\cdot \Apply^{\U}_{\regIn_i}\right) \cdot \rho_{\regK,\regIn} \cdot \left(\bigotimes_{i=1}^t S^{f}_{\regK_i}\cdot \Apply^{\U}_{\regIn_i}\right)^\dagger$$

    Then 
    $$\E_{f}\left[ \sigma^f_{\regK,\regIn} \right] = \Sim^t(\rho_{\regK,\regIn})$$

    Furthermore, $\Sim^t$ can be efficiently implemented, in time $\poly(\secparam,t)$, by an algorithm of the following form: first it chooses distinct classical keys $k_1,\dots,k_t$, then it queries each $U_{k_i}$ exactly once. \pnote{this sentence is a bit ambiguous since it is not clear what making a query to the set $\U$ is. How about the following? Furthermore, $\Sim^t$ can be efficiently implemented, in time $\poly(\secparam,t)$, by an algorithm that has oracle access to $U_{k_1},\ldots,U_{k_t}$, where $k_i \xleftarrow{\$} \{0,1\}^{\secparam}$.}\enote{I agree that it is ambiguous, but your phrasing leaves out the fact that you can only query each $U_k$ once. Also, the simulator gets to choose $k_1,\dots,k_t$. Here's another attempt}
\end{theorem}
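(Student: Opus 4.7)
The plan is to follow the two-step structure from the proof of~\Cref{thm:main}: first expand $\E_f[\sigma^f]$ in the computational basis on $\regK$ to reduce it to a multiset projection composed with $\Apply^{\U}$, and then construct $\Sim^t$ as a multiset-dephasing channel followed by a routed $\Apply^{\U}$ built out of $\Select$ gates.

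First I would write $\rho = \sum_{\vec{k},\vec{k}',\vec{x},\vec{x}'}\alpha_{\vec{k},\vec{x}}\alpha^*_{\vec{k}',\vec{x}'}\ketbra{\vec{k},\vec{x}}{\vec{k}',\vec{x}'}$, where the $\Pi_{\sf dist}^{\secparam,t}$ support restricts the sums to $\vec{k},\vec{k}'$ with pairwise distinct entries. Applying $\Apply^{\U}$ and then $\bigotimes_i S^f_{\regK_i}$ produces a matrix element weighted by $\omega_{2t}^{\sum_k f(k)(c_k-c'_k)}$, where $c_k,c'_k\in\{0,1\}$ are the multiplicities of $k$ in $\vec{k},\vec{k}'$. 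Averaging over uniform $f:\{0,1\}^\secparam\to[2t]$ yields the indicator $\mathbf{1}[\mathrm{mset}(\vec{k})=\mathrm{mset}(\vec{k}')]$: each nonzero $c_k-c'_k$ has absolute value $1<2t$, so any nontrivial linear combination in the exponent is uniform mod $2t$ and averages to $0$. Letting $\Pi_{S,\regK}$ be the projector onto $\mathrm{span}\{\ket{\vec{k}}:\mathrm{set}(\vec{k})=S\}$, and using that $\Apply^{\U}$ is block-diagonal in the $\regK$ basis and therefore commutes with each $\Pi_{S,\regK}$, this shows
\[
\E_f[\sigma^f] \;=\; \Apply^{\U}\Bigl(\,\sum_{S\in {\sf MS}^{\{0,1\}^\secparam,t}}\Pi_{S,\regK}\,\rho\,\Pi_{S,\regK}\Bigr)(\Apply^{\U})^\dagger.
\]

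Next I would describe $\Sim^t$ in two pieces. The first piece realizes $\sum_S \Pi_{S,\regK}\,\rho\,\Pi_{S,\regK}$ by coherently computing the multiset of $\regK$ into a fresh register and then measuring (equivalently tracing out) that register; the classical outcome gives the $t$ distinct keys $k_1^*,\dots,k_t^*$ with $\{k_1^*,\dots,k_t^*\}=S$. The second piece implements $\Apply^{\U}$ on the post-measurement state using one query to each $U_{k_i^*}$: adjoin fresh registers $\regR_1,\dots,\regR_t$ in $\ket{0}$ and classical registers $\regK^*_i$ holding $k_i^*$; for each $i\in[t]$ and each $j\in[t]$, apply $\Select$ on $(\regK^*_i,\regR_i,\regK_j,\regIn_j)$, which swaps $\regIn_j\leftrightarrow\regR_i$ at the unique $j$ with $\regK_j=k_i^*$; then apply $U_{k_i^*}$ on $\regR_i$ and uncompute the routing. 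Discarding $\regK^*$ and $\regR$ gives $\Apply^{\U}\bigl(\sum_S \Pi_{S,\regK}\rho\Pi_{S,\regK}\bigr)(\Apply^{\U})^\dagger$, matching $\E_f[\sigma^f]$. The algorithm has the stated form: the classical keys $k_i^*$ are chosen (here from the multiset outcome), each $U_{k_i^*}$ is queried exactly once, and the overhead (sorting, $O(t^2)$ $\Select$ gates, uncomputation) is $\poly(\secparam,t)$.

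The main obstacle will be verifying that the $\Select$-based routing really implements $\Apply^{\U}$ on the full post-measurement subspace, including all $t!$ orderings of any fixed multiset $S$ and the coherences between them. The key observation is that every computational basis vector in the image of $\Pi_{S,\regK}$ places each $k_i^*$ at exactly one position of $\regK$, so for each $i$ exactly one $\Select$ out of the $t$ firings actually swaps, making the routing a genuine coherent permutation; hence $U_{k_i^*}$ acts on precisely the $\regIn_j$ with $\regK_j=k_i^*$ and the uncomputation returns every $\regR_i$ to $\ket{0}$ independently of the branch. This preserves all intra-multiset coherences and leaves no residual entanglement in the discarded registers, giving exact equality $\Sim^t=\E_f[\sigma^f]$ as CPTP maps.
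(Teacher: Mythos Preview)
Your proposal is correct and arrives at the same simulator the paper specifies, but your correctness argument takes a more direct route than the paper's.

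The paper establishes $\E_f[\sigma^f]=\Sim^t(\rho)$ by going through the compressed-oracle machinery: it purifies the random $f$, shows via a QFT conjugation that the purified oracle equals a type-vector-tracking isometry $\Sim^{s,t}_{\iso}$ (\Cref{thm:isosim}), compresses the type vector to a multiset to obtain $\ESim^{s,t}$ (\Cref{thm:isosimeff}), and finally checks by explicit computation that on distinct-key inputs the concrete $\Select$-based $\Sim^t_{\sf pure}$ coincides with $\ESim^{s,t}$, with deferred measurement linking $\Sim^t_{\sf pure}$ to $\Sim^t$ (\Cref{lem:pureuni,lem:puretounpure}). You instead expand $\rho$ in the computational basis on $\regK$, use the distinctness hypothesis to force multiplicities $c_k,c'_k\in\{0,1\}$, and average the phase $\omega_{2t}^{\sum_k f(k)(c_k-c'_k)}$ directly to get the multiset-dephasing projector; then you verify the $\Select$ routing implements $\Apply^{\U}$ on each multiset block. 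Your argument is shorter and self-contained for this single-round non-adaptive statement; the paper's detour through $\Sim^{s,t}_{\iso}$ with the round parameter $s$ buys a framework that also covers adaptive multi-round access (even though only $s=1$ is ultimately used here). Both proofs end with the same explicit check that the $\Select$-route-apply-unroute sequence acts as $\Apply^{\U}$ coherently across all $t!$ orderings within a multiset block.
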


Concretely, $\Sim^t$ will operate on ancilla registers $\regSt$ over $\mathcal{H}({\sf MS}^{\{0,1\}^\secparam,t})$ and $\regR = \regR_1,\dots,\regR_t$ over $(\mathcal{H}^U)^{\otimes t}$. $\Sim^t(\rho_{\regK,\regIn})$ will be defined as follows:
\begin{enumerate}
    \item Initialize register $\regSt$ to $\ket{\emptyset}$ and register $\regR=\regR_1,\dots,\regR_t$ to $\ket{\vec{0}}$.
    \item Apply $({\sf Cntrl}{\tiny -}\uplus)_{\regK,\regSt}$ \pnote{is this supposed to be $\regSt$?} defined by
    $$({\sf Cntrl}{\tiny -}\uplus)_{\regK,\regSt}\ket{\vec{k}}_{\regK}\ket{S}_{\regSt} \mapsto \ket{\vec{k}}_{\regK}\ket{S\uplus\{k_1,\dots,k_t\}}_{\regSt}$$
    \item Measure the register $\regSt$ in the standard basis. This produces a sorted list of classical keys $(k_1,\dots,k_t)$ on registers $\regSt_1,\dots,\regSt_t$.
    \item For each $i,j\in [t]$, run $\Select_{\regK_i,\regIn_i,\regSt_j,\regR_j}$. \pnote{is this supposed to be $\regSt_i,\regR_i$ instead of $\regSt,\regR$? I guess you are iterating over all $\regSt_j,\regR_j$, right? }\enote{You are correct}
    \item For each $i\in [t]$, apply $U_{k_i}$ to register  $\regR_i$.
    \item For each $i,j\in [t]$, run $\Select_{\regK_i,\regIn_i,\regSt_j,\regR_j}$ again. \pnote{is this supposed to be $\regSt_i,\regR_i$ instead of $\regSt,\regR$?}\enote{Resolved}
    \item Output the registers $\regK,\regIn$. 
\end{enumerate}

Note that when $N=2$ and $U_1$ is the identity, then $\Apply^\U_{\regK,\regIn}$ implements controlled access to $U_2$. That is, $\Apply^{\U}_{\regK,\regIn}=\ketbra{1}{1}_{\regK} \otimes I_{\regIn} + \ketbra{2}{2}_{\regK} \otimes (U_2)_{\regIn}$.   ~\cite{tang2025controlledunitarieshelpful} showed that for any fixed unitary $U$, $t$ queries to controlled access to $\omega_{2t}^\theta U$, for a random $\theta$, can be simulated by $t$ queries to $U$. This can be generalized, showing that for any family $\U$ and for a random $f$, $S^f_{\regK} \cdot \Apply^\U_{\regK,\regIn}$ can be simulated using only forward queries to $\U$. The key idea behind our proof is that $\Sim^t$ implements this simulator for the specific case of non-adaptive queries to (a superposition) of distinct keys.

\ifnum\llncs=0
\subsection{Simulating adaptive queries to families of isometries}

We will begin by developing a simulator $\Sim_{\iso}^{s,t}$ which will emulate a family of isometries $\V=\{V_k\}$ up to $s$ number of $t$-parallel queries.

\begin{theorem}\label{thm:isosim}
    Let $\V=\{V_k\}_{k\in [N]}$ be an arbitrary collection of unitaries. Let $s,t\in \N$. Define $q = 2st$. Let $\regK=\regK_1,\dots,\regK_t$ be registers over $\mathcal{H}(\{0,1\}^{\secparam})$. Let $\regIn=\regIn_1,\dots,\regIn_t$ and $\regOut=\regOut_1,\dots,\regOut_t$ be respectively $t$ input and output registers for $\V$.

    Let $\regSt'$ be a register over $\mathcal{H}([N]^q)$. We define an oracle $\Sim_{\iso}^{s,t}$ acting on registers $\regK,\regIn,\regOut$ and an internal state $\regSt'$ as follows. \pnote{instead of $y_j$, should it be $x_j$? what about $T + \type(\vec{k})$? is it component-wise addition?}
    \begin{equation}
        \Sim_{\iso}^{s,t} \ket{\vec{k}}_{\regK}\ket{\vec{x}}_{\regIn}\ket{T}_{\regSt'} \mapsto \ket{\vec{k}}_{\regK} \otimes \left(\bigotimes_{j=1}^t (V_{\regK_j} \ket{x_j})_{\regOut_j}\right) \otimes \ket{T + \type(\vec{k})}_{\regSt'}
    \end{equation}
    where $T+\type(\vec{k})$ represents component-wise addition.

    Let $\A^{(\cdot)}$ be any oracle algorithm making at most $s$ queries \pnote{is it parallel queries?} to its oracle with input register $\regX$ and output register $\regY$. Let $\mathcal{F}$ be the set of all functions $[N]\to [q]$. 
    
    Define the states \pnote{the oracle access below needs to be explicitly defined. For instance, in the oracle access to $\Sim$, the register $\regSt$ is not revealed to $\A$, right?}
    \begin{equation}
    \begin{split}
        \ket{\phi_f}_{\regY} = \A^{\left((S^f \otimes I)\cdot \Apply^\U \right)^{\otimes t}}\ket{0}_{\regX}\\
        \ket{\psi}_{\regY,\regSt'} = \A^{\Sim^{s,t}_{\iso}}\ket{0}_{\regX}\ket{\vec{0}}_{\regSt'}
    \end{split}
    \end{equation}

    Then
    \begin{equation}
        \E_{f\gets \mathcal{F}}\left[ \ketbra{\phi_f} \right] = \Tr_{\regSt'}\left(\ketbra{\psi}_{\regY,\regSt'} \right)
    \end{equation}
\pnote{read till here}
\end{theorem}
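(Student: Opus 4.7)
The plan is to apply Zhandry's compressed-oracle technique to purify the randomness in $f$ and change to the Fourier basis, after which the behavior of the random phase oracle becomes precisely the bookkeeping performed by $\Sim_\iso^{s,t}$.

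First, I would purify the expectation over $f$. Introduce a fresh register $\regF = (\regF^{(1)},\dots,\regF^{(N)})$, with each subregister over $\mathcal{H}(\{0,\dots,q-1\})$, and initialize it in $\frac{1}{\sqrt{q^N}}\sum_{f\in\calF}\ket{f}_\regF$. Implement $S^f$ coherently by $\mathcal{S} = \sum_{f,k}\omega_q^{f(k)}\ketbra{f}_\regF\otimes\ketbra{k}_\regK$. Tracing out $\regF$ at the end reproduces $\E_f[\ketbra{\phi_f}]$ exactly, so the adversary's view is unchanged.

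Next, apply $\mathsf{QFT}_q^{\otimes N}$ to $\regF$. The initial uniform superposition maps to $\ket{\vec 0}_\regF$, and the diagonal operator $\mathcal{S}$ is intertwined with the shift operator $\sum_k \ketbra{k}_\regK \otimes X^{(k)}_\regF$, where $X^{(k)}_\regF$ increments the $k$-th coordinate of $\regF$ by $1 \bmod q$. This is the standard QFT identity converting phase operators to shift operators. Because $\Apply^\U$ touches only $\regK,\regIn,\regOut$ and is disjoint from $\regF$, it commutes with this basis change, so again the adversary's output distribution is unaffected.

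For each of the $s$ parallel queries, $\Apply^\U$ applies $V_{k_j}$ to $\regIn_j$ producing $\regOut_j$, and then the $t$ pairwise-commuting shifts $X^{(k_j)}_\regF$ increment the corresponding coordinates of $\regF$. After all $s$ queries, $\regF$ holds $\sum e_k$ summed over the entire multiset of $st$ keys queried, while the outputs registers hold the isometry outputs. Since $q = 2st$ strictly exceeds $st$, no coordinate overflows $q-1$, so there is no modular wrap-around and $\regF$ faithfully records the cumulative type vector $\sum_{\text{queries}} \type(\vec k)$. Under the natural identification $\regF \leftrightarrow \regSt'$, this is precisely the joint state produced by $s$ applications of $\Sim_\iso^{s,t}$ acting on initial state $\ket{\vec 0}_{\regSt'}$, since each application of $\Sim_\iso^{s,t}$ applies exactly the same $V_{k_j}$'s to the inputs and adds $\type(\vec k)$ to $\regSt'$. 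Tracing out the internal register on both sides yields the claimed equality.

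The main technical point to check carefully is the QFT intertwining identity (routine, but needs to be stated for the multi-coordinate phase oracle) and verifying that $\Apply^\U$ commutes with $\mathsf{QFT}_q^{\otimes N}$ on $\regF$ (immediate, since the supports are disjoint). The bound $q = 2st$ is precisely what is needed to avoid any modular reduction: the total number of atomic queries across all $s$ parallel rounds is $st$, so each coordinate receives at most $st \leq q-1$ increments. Everything else is bookkeeping.
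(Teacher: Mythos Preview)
Your proposal is correct and follows essentially the same approach as the paper: purify the random $f$ in a uniform superposition, conjugate by the quantum Fourier transform on the function register to convert the phase oracle into a shift (type-recording) oracle, and observe this is exactly $\Sim_\iso^{s,t}$; since the basis change acts only on the internal register, tracing it out yields the claimed equality. Your write-up is in fact slightly cleaner in two respects: you correctly write $\mathsf{QFT}_q^{\otimes N}$ (the paper writes $\otimes t$, which appears to be a typo), and you make explicit why $q=2st$ prevents modular wraparound, though strictly speaking the equality here holds with modular addition regardless and the no-overflow observation is only needed later for the multiset compression.
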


\begin{proof}
    Recall that $\regSt'$ is a register over $\mathcal{H}([N]^q)$. We will identify $[N]^q$ with $\mathcal{F}$, the functions from $[N]\to[q]$. Let $\ket{P}_{\regY, \regSt'} = \frac{1}{\sqrt{|\mathcal{F}|}}\sum_{f\in \mathcal{F}} \ket{\phi_f}_{\regY}\ket{f}_{\regSt'}$ be a purification of
    \begin{equation}\label{eq:firsteq}
        \E_{f\gets \mathcal{F}}\left[ \ketbra{\phi_f} \right] = \Tr_{\regSt'}(\ketbra{P}_{\regY,\regSt'})
    \end{equation}

    Let us define an isometry $\PureV$ to act as follows
    $$\PureV\left(\ket{\vec{k}}_{\regK}\ket{\vec{x}}_{\regIn}\ket{f}_{\regSt'} \right) \mapsto \ket{\vec{k}}_\regK\otimes \left(\bigotimes_{j=1}^t(\omega_{q}^{f(\regK_j)} V_{\regK_j}\ket{y_j})_{\regOut_j}\right)\otimes \ket{f}_{\regSt'}$$

    In particular, \pnote{the normalization factor missing below}
    $$\ket{P} = \A^{\PureV}\ket{0}_{\regX} \otimes \left(\frac{1}{\sqrt{q^N}} \sum_{f\in [q]^N} \ket{f}_{\regSt'} \right)$$

    The proof immediately follows from the fact that $\Sim^{s,t}$ is exactly $\PureV$ conjugated by $QFT_q^{\otimes t}$ on register $\regSt$. Observe that for any $f,\vec{k}$,
    \begin{equation}
        \prod_{j=1}^s \omega_q^{f(\regK_j)} = \prod_{r=1}^N \omega_q^{f(r)\cdot \type(\vec{k})_r} = \omega_q^{f\cdot \type(\vec{k})}
    \end{equation}
    and so
    \begin{equation}
    \begin{split}
        \PureV_{\regK,\regIn,\regSt}\ket{\vec{k}}_{\regK}\ket{\vec{x}}_{\regIn}\ket{f}_{\regSt'} = \ket{\vec{k}}_\regK\otimes \omega_q^{f\cdot \type(\vec{k})}\left(\bigotimes_{j=1}^t( V_{\regK_j}\ket{y_j})_{\regOut_j}\right)\otimes \ket{f}_{\regSt'}\\
        =\ket{\vec{k}}_\regK\otimes \left(\bigotimes_{j=1}^t( V_{\regK_j}\ket{y_j})_{\regOut_j}\right)\otimes \omega_q^{f\cdot \type(\vec{k})}\ket{f}_{\regSt'}
    \end{split}
    \end{equation}

    We can then explicitly compute $(QFT_q^{\otimes t})_{\regSt}^{\dagger} \cdot \PureV_{\regK,\regIn,\regSt'} \cdot (QFT_q^{\otimes t})_{\regSt}$.

    Let $\ket{\vec{k}}_{\regK} \ket{\vec{x}}_{\regIn} \ket{T}_{\regSt}$ be any basis state. We then evaluate
    \begin{equation}
        \begin{split}
             (QFT_q^{\otimes t})_{\regSt}^{\dagger} \cdot \PureV_{\regK,\regIn,\regSt'} \cdot (QFT_q^{\otimes t})_{\regSt}\ket{\vec{k}}_{\regK} \ket{\vec{x}}_{\regIn} \ket{T}_{\regSt}\\
             = (QFT_q^{\otimes t})_{\regSt}^{\dagger} \cdot \PureV_{\regK,\regIn,\regSt'} \ket{\vec{k}}_{\regK}\ket{\vec{x}}_{\regIn}\sum_{f\in [N]^q}\frac{1}{\sqrt{q^T}}\omega_q^{f\cdot T}\ket{f}_{\regSt'}\\
            = (QFT_q^{\otimes t})_{\regSt}^\dagger \left(\ket{\vec{k}}_{\regK}\otimes \left(\bigotimes(V_{\regK_j}\ket{y_j})_{\regOut_j}\right)\otimes \sum_{f\in [N]^q}\frac{1}{\sqrt{q^T}}\omega_q^{f\cdot \type(\vec{k})}\cdot \omega_q^{f\cdot T}\ket{f}_{\regSt'}\right)\\
             =  (QFT_q^{\otimes t})_{\regSt}^\dagger \left(\ket{\vec{k}}_{\regK}\otimes \left(\bigotimes(V_{\regK_j}\ket{y_j})_{\regOut_j}\right)\otimes \sum_{f\in [N]^q}\frac{1}{\sqrt{q^T}}\omega_q^{f\cdot (\type(\vec{k})+ T)}\ket{f}_{\regSt'}\right)\\
             =  \ket{\vec{k}}_{\regK}\otimes \left(\bigotimes(V_{\regK_j}\ket{y_j})_{\regOut_j}\right)\otimes \ket{T+\type(\vec{k})}_{\regSt'}\\
            = \Sim^{s,t}_{\regK,\regIn,\regSt}\ket{\vec{k}}_{\regK} \ket{\vec{x}}_{\regIn} \ket{T}_{\regSt'}
        \end{split}
    \end{equation}

    Since the Fourier transform only acts on the state register, we can telescope terms to get
    \begin{equation}\label{eq:maineq}
        \begin{split}
            \ket{\psi}_{\regY,\regSt} = \A^{\Sim^{s,t}}\ket{0}_X\ket{\vec{0}}_{\regSt'}\\
            =(I_Y \otimes (QFT_q^{\otimes t})^{\dagger}_{\regSt'})\A^{\PureV} (I_{\regX} \otimes (QFT_q^{\otimes t})_{\regSt'}) \ket{0}_X\ket{\vec{0}}_{\regSt'}\\
            =(I_{\regY} \otimes (QFT_q^{\otimes t})^{\dagger}_{\regSt'})\A^{\PureV}  \ket{0}_X\otimes \sum_{f\in [N]^q}\ket{f}_{\regSt'}\\
            =(I_{\regY} \otimes (QFT_q^{\otimes t})^{\dagger}_{\regSt'}) \ket{P}
        \end{split}
    \end{equation}
    since $(QFT^{\otimes t})^{\dagger}_{\regSt'}$ only acts on the state register, we get
    \begin{equation}
        \Tr_{\regSt}(\ketbra{\psi}) =\Tr_{\regSt}(\ketbra{P}) = \E_{f\gets \mathcal{F}}[\ketbra{\phi_f}]
    \end{equation}
\end{proof}




A major downside of this simulator is that its internal state grows with $N$, which may be exponential. Here, we take a page from~\cite{zhandry2019record}, and observe that the sum of all values in the internal state is bounded by $s\cdot t$. Thus, it is sufficient to instead store a list of all values contained in the state, that is a multiset.

\begin{theorem}\label{thm:isosimeff}
    Let $\V,s,t,\regK,\regIn,\regOut,\Sim^{s,t}_{\iso}$ be as in~\Cref{thm:isosim}.

    Define $\regSt$ to be a register over $\mathcal{H}\left({\sf MS}^{[N],\leq st} \right)$.     
    
    We define an isometry $\ESim^{s,t}$ acting on registers $\regK,\regIn,\regOut,\regSt$ as follows.
    \begin{equation}
        \ESim^{s,t} \ket{\vec{k}}_{\regK}\ket{\vec{x}}_{\regIn}\ket{S}_{\regSt} \mapsto \ket{\vec{k}}_{\regK} \otimes \left(\bigotimes_{j=1}^t (V_{\regK_j} \ket{y_j})_{\regOut_j}\right) \otimes \ket{S \uplus \{\regK_1,\dots,\regK_t\}}_{\regSt}
    \end{equation}

    Then for all oracle algorithms $\A^{(\cdot)}$ making at most $s$ queries to its oracle with input $X$ and output register $Y$,
    let 
    $$\ket{\psi}_{Y,\regSt} = \A^{\Sim^{s,t}}\ket{0}_X\ket{\emptyset}_{\regSt'}$$
    $$\ket{\psi'}_{Y,\regSt'} = \A^{\ESim^{s,t}}\ket{0}_X\ket{\emptyset}_{\regSt}$$
    Then
    $$\Tr_{\regSt}(\ketbra{\psi}_{Y,\regSt'})=\Tr_{\regSt}(\ketbra{\psi'}_{Y,\regSt'})$$
\end{theorem}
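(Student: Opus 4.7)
The key observation is that the internal state of $\Sim_{\iso}^{s,t}$ carries redundant information: although $T \in [q]^N$ lives in a space of dimension $q^N$, the vector $T$ reached after $r$ queries satisfies $\sum_{i\in [N]} T_i = rt$, since each query adds $\type(\vec{k})$ and $\sum_i \type(\vec{k})_i = t$. In particular, starting from $\ket{\vec 0}_{\regSt'}$ and applying at most $s$ queries, the register $\regSt'$ only ever populates basis vectors $\ket{T}$ with $\sum_i T_i \leq st$. On this reachable subspace there is an obvious bijection with ${\sf MS}^{[N], \leq st}$: the vector $T$ corresponds to the multiset $M_T$ in which each element $r \in [N]$ appears $T_r$ times.

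The plan is to define an isometry $\Phi:\mathcal{H}([q]^N)_{\mathrm{reach}} \to \mathcal{H}({\sf MS}^{[N],\leq st})$ on basis states by $\Phi\ket{T} = \ket{M_T}$ (and extend linearly), and then check that $\Phi$ intertwines the two oracles. Concretely, I would verify by direct computation on basis states that
\[
(I_{\regK,\regIn,\regOut} \otimes \Phi_{\regSt' \to \regSt}) \cdot \Sim_{\iso}^{s,t} = \ESim^{s,t} \cdot (I_{\regK,\regIn,\regOut} \otimes \Phi_{\regSt'\to\regSt})
\]
when both sides act on inputs in the reachable subspace. This reduces to the identity $M_{T + \type(\vec k)} = M_T \uplus \{k_1,\dots,k_t\}$, which holds by definition of the type vector and the multiset union.

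Given this intertwining relation, the rest is a routine induction on the number of queries. Since the isometries $\A^1,\dots,\A^s$ applied by the algorithm act only on registers disjoint from $\regSt$ and $\regSt'$, they commute with $\Phi$, so after each of the $s$ query-plus-local-computation rounds the joint state on $(\regY,\regSt')$ in the $\Sim_{\iso}^{s,t}$ execution is mapped exactly to the joint state on $(\regY,\regSt)$ in the $\ESim^{s,t}$ execution by $I_{\regY}\otimes \Phi$. In symbols, $\ket{\psi'} = (I_{\regY}\otimes \Phi)\ket{\psi}$. Since $\Phi$ is an isometry on the reachable subspace, tracing out the state register is invariant under it, and so $\Tr_{\regSt'}(\ketbra{\psi}) = \Tr_{\regSt}(\ketbra{\psi'})$, which is the claimed equality.

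The main (minor) obstacle is purely bookkeeping: one must argue that at every intermediate step of the execution the $\regSt'$ register remains supported on the reachable subspace $\{T : \sum_i T_i \leq st\}$, so that $\Phi$ is genuinely a well-defined isometry when acting on the relevant states. This follows by induction: initially $\regSt'$ holds $\ket{\vec 0}$ (sum $0$), and each oracle call increases the sum by exactly $t$ on every basis state in the support, while the algorithm's own isometries $\A^i$ do not touch $\regSt'$ at all. Hence after at most $s$ queries every basis state in the support satisfies $\sum_i T_i \leq st$, and $\Phi$ is well-defined on the entire trajectory.
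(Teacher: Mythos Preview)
Your proposal is correct and is essentially the paper's own proof: the paper defines an isometry $\expand:\mathcal{H}({\sf MS}^{[N],\leq st})\to\mathcal{H}([q]^N)$ sending $\ket{S}\mapsto\ket{\vec v^S}$ (the inverse of your $\Phi$), argues by induction that the $\regSt'$ register stays in $\Img(\expand)$ after each query, verifies the intertwining relation $(I\otimes\expand^\dagger)\Sim^{s,t}(I\otimes\expand)=\ESim^{s,t}$ by the same multiset-union identity, and concludes by telescoping that $\ket{\psi'}=(I\otimes\expand^\dagger)\ket{\psi}$. The only cosmetic difference is that you run the isometry in the opposite direction.
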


\begin{proof}
    Given a multiset $S\in MS^{[N],\leq st}$, we can define a vector $\vec{v}^S$ by $\vec{v}^S_i = $the number of times $i$ appears in $S$. Let $\expand$ be the isometry mapping $\mathcal{H}({\sf MS}^{[N],\leq q}) \to \mathcal{H}([N]^{q})$ defined by
    $$\expand\ket{S} \mapsto \ket{\vec{v}^S}$$

    Define $\Pi_{\leq r}$ to be the projector with ${\sf Im}(\Pi_{\leq r})= {\sf Span}\{\vec{v}\in [N]^q:\sum_{i\in [N]} v_i \leq r\}$. We have that for all $r\leq st$, ${\sf Im}(\Pi_{\leq r}) \subseteq {\sf Im}(\expand) = {\sf Im}(\Pi_{\leq st})$ \pnote{should it be ${\sf Im}\left(\Pi_{\leq r}\right) \subseteq {\sf Im}(\expand) = {\sf Im}\left(\Pi_{\leq st}\right)$}. Each query to $\Sim^{s,t}$ maps a state in $I\otimes \Pi_{\leq r}$ to a state in $I\otimes \Pi_{\leq r+t}$, and so by induction after each query to $\Sim^{s,t}$, we have that the resulting state is contained in $I \otimes \Img(\expand)$.

    We will then see that for any input \pnote{is it ok if I format the equations below such that they are aligned from the left? I'll not do any edits if you prefer to keep it this way.}\enote{Of course its fine}
    \begin{equation}
        \begin{split}
            (I_{\regK,\regOut}\otimes \expand^{\dagger}_{\regSt'}) \Sim^{s,t} (I_{\regK,\regIn} \otimes  \expand_{\regSt})\ket{\vec{k}}_{\regK}\ket{\vec{x}}_{\regIn}\ket{S}_{\regSt}\\
            =(I_{\regK,\regOut}\otimes \expand^{\dagger}_{\regSt'}) \Sim^{s,t}\ket{\vec{k}}_{\regK} \ket{\vec{x}}_{\regIn}\ket{\vec{v}^S}_{\regSt'}\\
            =(I_{\regK,\regOut}\otimes \expand^{\dagger}_{\regSt'}) \ket{\vec{k}}_{\regK}\otimes \left(\bigotimes_{j=1}^t V_{\regK_j}\ket{x_j}\right)_{\regOut} \otimes \ket{\vec{v}^S + \type(\vec{k})}_{\regSt'}\\
            =\ket{\vec{k}}_{\regK}\otimes \left(\bigotimes_{j=1}^t V_{\regK_j}\ket{x_j}\right)_{\regOut} \otimes \ket{S\uplus \{\regK_1,\dots,\regK_t\}}_{\regSt}\\
            = \ESim^{s,t} \ket{\vec{x}}_{\regIn}\ket{S}_{\regSt}
        \end{split}
    \end{equation}

    By telescoping, we then get
    $$\ket{\psi'} = (I\otimes \expand^{\dagger}_{\regSt'})\ket{\phi}$$
    and since $\expand$ only acts on the state register the theorem follows.
\end{proof}

\begin{figure}
    \centering
    \begin{quantikz}[transparent]
    \lstick{$\rho_{\regSt'}$}\qw & \gate{\uplus} &     \qw    & \qw & \qw & \qw\\
    \lstick{$\rho_{K}$}\qw & \ctrl{-1} & \gate[wires=7,nwires={5}]{\Select} & \qw & \gate[wires=7,nwires={5}]{\Select} & \qw\\
    \lstick{$\rho_{In}$}\qw & \qw &     \qw    & \qw & \qw & \qw\\
    \lstick{$\ket{1}_{\regK_1}$}\qw & \qw &     \qw    & \qw & \qw & \qw\\
    \lstick{$\ket{0}_{R'_1}$}\qw & \qw & \qw & \gate{V_1} & \qw & \qw\\
        &     &     &   \myvdots   &   & \\
    \lstick{$\ket{N}_{\regK_N}$}\qw & \qw &     \qw    & \qw & \qw & \qw\\
    \lstick{$\ket{0}_{R'_N}$}\qw & \qw & \qw & \gate{V_N} & \qw & \qw\\
    \end{quantikz}
    \caption{The efficient implementation of the simulator for general isometries. Here the $\uplus$ gate represents adding the string on register $\regK$ to the multiset stored in $\regSt'$.}
    \label{fig:onetimeuni}
\end{figure}
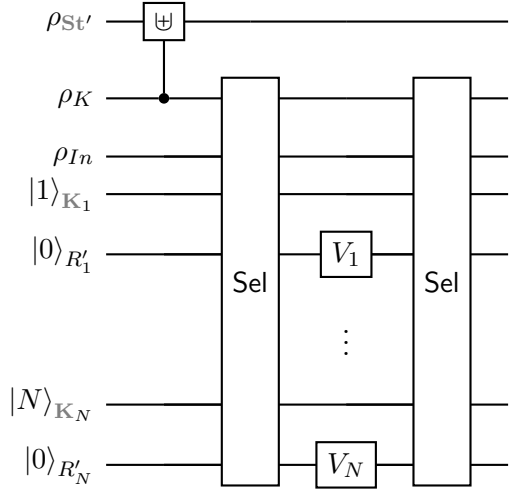

Note that when $N$ is polynomial, $\ESim^{s,t}$ already has an efficient implementation. We show the circuit for this implementation in~\Cref{fig:onetimeuni} for the $t=1$ case. Larger values of $t$ can be simulated by calling $\ESim^{st,1}$ $t$ times for each parallel query. Proof follows by explicit computation, but since we do not use this theorem we omit the details

\subsection{Proof of correctness of our simulator}
Instead of dealing with $\Sim^t$ directly, we will work with its purification $\Sim_{{\sf pure}}^t$ acting on registers $\regK,\regIn,\regSt$ defined as follows: \pnote{are we using $\regSt$ or $\regSt'$? in some places below, $\regSt$ is used. }
\begin{enumerate}
    \item Initialize register $\regR$ to $\ket{\vec{0}}$.
    \item Apply $({\sf Cntrl}{\tiny -}\uplus)_{\regK,\regSt}$ defined by
    $$({\sf Cntrl}{\tiny -}\uplus)_{\regK,\regSt}\ket{\vec{k}}\ket{S}\mapsto \ket{\vec{k}}\ket{S\uplus\{k_1,\dots,k_t\}}$$
    \item For each $i,j\in [t]$, run $\Select_{\regK_i,\regIn_i,\regSt_j,\regR_j}$. \pnote{$\regSt$ or $\regSt'$?}
    \item For each $i\in [t]$, run $\Apply_{\regSt_i,\regR_i}^{\mathcal{U}}$. \pnote{the $\Apply$ isometry is defined to be parameterized by a set of unitaries. Which set of isometries are associated with $\Apply$?}
    \item For each $i,j\in [t]$, run $\Select_{\regK_i,\regIn_i,\regSt_j,\regR_j}$ again. \pnote{$\regSt$ or $\regSt'$?}
    \item Output registers $\regK,\regIn,\regSt$.
\end{enumerate}
In particular, instead of measuring $\regSt$, it will coherently apply $U_{k_i}$ on the correct registers.

\begin{lemma}\label{lem:pureuni}
    Let $\U=\{U_k\}_{k\in \{0,1\}^{\secparam}}$ be a family of unitaries. Let $t\in \N$. Let $\regK=\regK_1,\dots,\regK_t,\regIn=\regIn_1,\dots,\regIn_t$ be registers over $\mathcal{H}(\{0,1\}^{\secparam})^{\otimes t}$ and $(\mathcal{H}^\U)^{\otimes t}$ respectively \pnote{maybe we should use a different notation compared to $(\mathcal{H}^\U)$?}. For all states $\rho_{\regK,\regIn}$ such that ${\sf Tr}((\Pi_{{\sf dist,\regK}}^{\secparam,t}\otimes I_{\regIn})\rho_{\regK,\regIn}) = 1$ \pnote{on which registers would $\Pi_{{\sf dist}}$ act upon? the superscripts should be $(\secparam,t)$?},
    $$\Sim^t_{{\sf pure}}(\rho \otimes \ketbra{\emptyset})  = \ESim^{s,t}(\rho \otimes \ketbra{\emptyset})$$
\end{lemma}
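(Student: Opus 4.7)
The plan is to verify by direct computation that $\Sim^t_{{\sf pure}}$ and $\ESim^{s,t}$ act identically on the subspace where the keys in $\regK$ are pairwise distinct, then extend by linearity to $\rho$ (which is supported on this subspace by the hypothesis ${\sf Tr}((\Pi_{{\sf dist}}^{\secparam,t})_{\regK}\rho)=1$). It therefore suffices to fix an arbitrary basis vector $\ket{\vec{k}}_{\regK}\ket{\vec{x}}_{\regIn}\ket{\emptyset}_{\regSt}\ket{\vec{0}}_{\regR}$ with distinct entries $\vec{k}=(k_1,\ldots,k_t)$ and trace the state through the six steps of $\Sim^t_{{\sf pure}}$.

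After the $({\sf Cntrl}{\tiny -}\uplus)$ gate, register $\regSt$ holds the multiset $\{k_1,\ldots,k_t\}$, which by distinctness is represented as the sorted tuple $(k_{\pi(1)},\ldots,k_{\pi(t)})$, where $\pi\in Sym([t])$ is the unique permutation that sorts $\vec{k}$. In the first batch of Selects, $\Select_{\regK_i,\regIn_i,\regSt_j,\regR_j}$ actually swaps precisely when $\regK_i=\regSt_j$, which by distinctness happens only for $i=\pi(j)$; this single non-trivial swap moves $x_{\pi(j)}$ from $\regIn_{\pi(j)}$ into $\regR_j$. Distinctness is crucial here: for different $j$ the active swaps act on disjoint pairs $(\regIn_{\pi(j)},\regR_j)$, so all $t^2$ Selects commute and the batch is unambiguously defined. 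After this batch $\regR$ contains $(x_{\pi(1)},\ldots,x_{\pi(t)})$ and $\regIn$ has been cleared to $(0,\ldots,0)$.

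Step $4$ coherently applies $U_{\regSt_i}=U_{k_{\pi(i)}}$ to $\regR_i$, producing $U_{k_{\pi(i)}}\ket{x_{\pi(i)}}$ in $\regR_i$. Step $5$ replays the Select batch, which by the same distinctness argument swaps $\regR_j$ back into $\regIn_{\pi(j)}$, so $\regIn_{\pi(j)}$ ends up holding $U_{k_{\pi(j)}}\ket{x_{\pi(j)}}$ and $\regR_j$ returns to $\ket{0}$. Re-indexing by $i=\pi(j)$, the final state equals
\begin{equation*}
\ket{\vec{k}}_{\regK}\otimes \bigotimes_{i=1}^t \left(U_{k_i}\ket{x_i}\right)_{\regIn_i}\otimes \ket{\{k_1,\ldots,k_t\}}_{\regSt}\otimes \ket{\vec{0}}_{\regR},
\end{equation*}
which is exactly $\ESim^{s,t}\ket{\vec{k}}_{\regK}\ket{\vec{x}}_{\regIn}\ket{\emptyset}_{\regSt}$ tensored with the untouched $\ket{\vec{0}}_{\regR}$. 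Extending by linearity then gives the lemma.

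The main obstacle is essentially bookkeeping: tracking the sorting permutation $\pi$ as it is introduced by the multiset representation in step $2$, used to route data via Select in step $3$, acted upon by the controlled-$U$ in step $4$, and finally cancelled by the second Select batch in step $5$. This cancellation, together with the decoupling of the $\regR$ ancilla back to $\ket{\vec{0}}$, both rely crucially on $\vec{k}$ being distinct so that exactly one Select fires per index; without this hypothesis, several Selects would collide on shared registers and $\regR$ would remain entangled with the output, breaking the equality with $\ESim^{s,t}$.
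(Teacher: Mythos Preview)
Your proof is correct and follows essentially the same approach as the paper's own proof: both fix a computational basis state $\ket{\vec{k}}_{\regK}\ket{\vec{x}}_{\regIn}\ket{\emptyset}_{\regSt}$ with distinct $k_i$, introduce the sorting permutation (the paper calls it $i_j$, you call it $\pi(j)$), and trace the state step-by-step through $\Sim^t_{\sf pure}$ to land on the output of $\ESim^{s,t}$ tensored with $\ket{\vec 0}_{\regR}$. Your additional commentary on why distinctness is needed (so that exactly one $\Select$ fires per index and the $\regR$ ancilla decouples) is a welcome elaboration but not a different argument.
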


\begin{proof}
    This follows by simple computation. Let $\ket{\phi_i}$ be the state after step $i$ when running $\Sim_{pure}^t$ with initial state $\ket{\phi_1} = \ket{\vec{k}}_{\regK}\ket{\vec{x}}_{\regIn}\ket{\emptyset}_{\regSt'}\ket{\vec{0}}_{\regR}$. Let $i_j$ be the index of the $j$th largest element in $\vec{k}$, which is unique by assumption. 

    \begin{equation}
        \begin{split}
            \ket{\phi_1} = \ket{\vec{k}}_{\regK}\ket{\vec{x}}_{\regIn}\ket{\emptyset}_{\regSt}\\
            \ket{\phi_2} = \ket{\vec{k}}_{\regK}\ket{\vec{x}}_{\regIn}\ket{Sort(\vec{k})}_{\regSt}\\
            \ket{\phi_3} = \ket{\vec{k}}_{\regK}\ket{\vec{x}}_{\regIn}\ket{k_{i_1},\dots,k_{i_t}}_{\regSt}\ket{\vec{0}}_{\mathbf{R'}}\\
            \ket{\phi_4} = \ket{\vec{k}}_{\regK}\ket{\vec{0}}_{\regIn}\ket{k_{i_1},\dots,k_{i_t}}_{\regSt}\ket{x_{i_1},\dots,x_{i_t}}_{R'}\\
            \ket{\phi_5} = \ket{\vec{k}}_{\regK}\ket{\vec{0}}_{\regIn}\ket{k_{i_1},\dots,k_{i_t}}_{\regSt}\left(\bigotimes_{j=1}^t V_{k_{i_j}}\ket{x_{i_j}}_{R'_j}\right)\\
            \ket{\phi_6} = \ket{\vec{k}}_{\regK}\left(\bigotimes_{j=1}^t V_{k_j}\ket{x_{j}}_{In_j}\right)\ket{\vec{0}}_{\regIn}\ket{\regK_{i_1},\dots,\regK_{i_t}}_{\regSt}\ket{\vec{0}}_{R'}
        \end{split}
    \end{equation}
    It is clear that $\ket{\phi_6} = \left(\ESim^{s,t} \ket{\phi_1}\otimes \ket{\vec{0}}_{R'}\right)$. 
\end{proof}

\begin{lemma}\label{lem:puretounpure}
    For all states $\rho_{\regK,\regIn}$ such that $Tr((\Pi_{dist,\regK}^{\lambda,t}\otimes I_{\regIn})\rho) = 1$,
    $$\Sim^t(\rho)  = \Tr_{\regSt}(\Sim^{t}_{{\sf pure}}(\rho \otimes \ketbra{\emptyset}_{\regSt}))$$
\end{lemma}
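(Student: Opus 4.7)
The plan is to derive the lemma from the principle of deferred measurement. A step-by-step comparison shows that $\Sim^t$ and $\Sim^t_{\sf pure}$ are defined identically except for the following substitution: after the ${\sf Cntrl}{\tiny -}\uplus$ operation, $\Sim^t$ measures $\regSt$ in the computational basis to obtain a classical tuple $(k_1,\dots,k_t)$ and then runs the remaining $\Select$, $U_{k_i}$, $\Select$ subroutines classically controlled on this outcome, whereas $\Sim^t_{\sf pure}$ leaves $\regSt$ coherent and realizes the same controls quantumly, with the classical application of $U_{k_i}$ replaced by $\Apply^{\U}_{\regSt_i,\regR_i}$. Tracing out $\regSt$ in the statement of the lemma compensates for the fact that $\Sim^t_{\sf pure}$ outputs $\regSt$ while $\Sim^t$ does not, so the two output systems coincide.

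The key structural observation is that each gate applied after ${\sf Cntrl}{\tiny -}\uplus$ in $\Sim^t_{\sf pure}$ is block-diagonal in the computational basis of $\regSt$ and leaves $\regSt$ unchanged: each $\Select_{\regK_i,\regIn_i,\regSt_j,\regR_j}$ only reads $\regSt_j$ to decide whether to swap $\regIn_i$ with $\regR_j$, and by definition $\Apply^{\U}_{\regSt_i,\regR_i} = \sum_k \ketbra{k}{k}_{\regSt_i} \otimes (U_k)_{\regR_i}$ is diagonal on $\regSt_i$. Composing these gates yields an isometry of the form $V = \sum_{\vec{s}} \ketbra{\vec{s}}{\vec{s}}_{\regSt} \otimes W_{\vec{s}}$, where each $W_{\vec{s}}$ acts only on the remaining registers and implements precisely the classical post-measurement sequence of $\Sim^t$ corresponding to outcome $\vec{s}$.

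For any such $V$ and any state $\tau$ on $\regSt$ and the remaining registers, a direct calculation gives $\Tr_{\regSt}(V \tau V^{\dagger}) = \sum_{\vec{s}} W_{\vec{s}} \bigl(\bra{\vec{s}}_{\regSt} \tau \ket{\vec{s}}_{\regSt}\bigr) W_{\vec{s}}^{\dagger}$, which is exactly the mixed state produced by measuring $\regSt$ in the standard basis, applying $W_{\vec{s}}$ controlled on the outcome, and then discarding the (now classical) $\regSt$. Instantiating this identity with $\tau$ equal to the joint state immediately after ${\sf Cntrl}{\tiny -}\uplus$ yields the claimed equality. The distinct-key hypothesis plays no role in this argument (it is used only in \Cref{lem:pureuni}) and is carried along purely for uniformity with the rest of the section; the entire content of the proof is the deferred-measurement principle, and I do not anticipate any real obstacle beyond writing down the block-diagonal decomposition cleanly.
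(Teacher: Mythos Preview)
Your proposal is correct and matches the paper's own proof: the paper simply invokes the principle of deferred measurement together with the observation that the standard-basis measurement on $\regSt$ commutes with $\Select_{\regK_i,\regIn_i,\regSt_j,\regR_j}$ and $\Apply^{\U}_{\regSt_i,\regR_i}$, which is exactly your block-diagonal argument spelled out in more detail. Your remark that the distinct-key hypothesis is not actually used here is also accurate.
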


\begin{proof}
    This follows immediately from the principle of deferred measurement and the fact that measuring in the standard basis on register $\regSt$ commutes with $\Select_{\regK_i,\regIn_i,\regSt,\regR}$ and $\Apply^\U_{\regSt_i,\regR_i}$
\end{proof}

~\Cref{thm:unisimmain} then follows directly from~\Cref{lem:pureuni,lem:puretounpure,thm:isosim,thm:isosimeff}.
\else
A full proof of this theorem is deferred to~\Cref{sec:unitaryapp}.
\fi
\section{$t$-copy Pseudorandom Unitaries}

\begin{definition}
    We say that a pseudorandom unitary is pure if for all keys $k\in \{0,1\}^{\ell_k(\secparam)}$, for all pure states $\ket{\phi}$ over $\{0,1\}^{\ell_n(\secparam)}$, there exists a pure state $\ket{\psi}$ such that 
    $$\PRU_\secparam \ket{k}_{\regK}\ket{\phi}_{\regIn}\ket{0}_{\regAnc} = \ket{k}_{\regK}\ket{\psi}_{\regIn}\ket{0}_{\regAnc}$$

    Note that when $\PRU_{\secparam}$ is pure, $\PRU_k$ is a unitary. Recall the map $\Apply^{\PRU} \ket{k}\ket{x}\mapsto \ket{k}\PRU_k \ket{x}$. When $\PRU$ is pure, this map is an efficiently implementable unitary.
\end{definition}

\begin{note}
    As far as the authors are aware, all constructions of PRUs in the literature are pure~\cite{MH24}. \pnote{why almost all?}
\end{note}

\begin{theorem}\label{thm:mainunitarythm}

    Let $\PRU$ be a pure $1$-time pseudorandom unitary with keys of length $\ell_k(\secparam)$ over states of length $\ell_n(\secparam)$.
    
    Let $t(\secparam),\ell'(\secparam)$ be any polynomials such that $\ell'=\omega(\log\secparam)$ and $\ell' \leq \frac{\ell_n}{2}$. Let $\{f_k:\{0,1\}^{\ell'} \to \{0,1\}^{\ell_k}\},\{g_k:\{0,1\}^{\ell'}\to [2t]\}$ be two $\negl(\secparam)$-approximate $2t$-wise independent hash functions with keys of length $\ell_{k_f},\ell_{k_g}$ respectively. Let $\{U_k\}$ be a $\negl(\secparam)$-approximate $t$-design on $\ell'$ qubits with keys of length $\ell_{k_U}$.

    Then there exists a non-adaptive, pure, $t$-time pseudorandom unitary with keys of length $\ell'_k = \ell_{k_f}+\ell_{k_g}+\ell_{k_U}$ over states of length $\ell'+\ell_n$.
\end{theorem}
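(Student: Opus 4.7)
The plan is to construct, in analogy with~\Cref{thm:prsthm}, the $t$-copy PRU $\wt{\PRU}_{\wt{k}}$ with key $\wt{k}=(k_U,k_g,k_f)$ acting on registers $\regK$ of $\ell'$ qubits and $\regIn$ of $\ell_n$ qubits as the composition
$$\wt{\PRU}_{\wt{k}} = \left(\sum_{k\in\{0,1\}^{\ell'}} \ketbra{k}_{\regK}\otimes (\PRU_{f_{k_f}(k)})_{\regIn}\right) \cdot \left((S^{g_{k_g}})_{\regK}\otimes I_{\regIn}\right)\cdot \left((U_{k_U})_{\regK}\otimes I_{\regIn}\right).$$
Purity of $\PRU$ ensures that each $\PRU_{k'}$ is a unitary, so the whole object is an efficiently implementable pure unitary, and the key/state lengths match the theorem statement.

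The security proof then proceeds by a hybrid argument on an arbitrary non-adaptive $t$-query adversary $\adv$. First, using~\Cref{lem:twise} and the $t$-design property of $\{U_k\}$, I would replace $f_{k_f}, g_{k_g}, U_{k_U}$ by a truly random $f:\{0,1\}^{\ell'}\to\{0,1\}^{\ell_k}$, a truly random $g:\{0,1\}^{\ell'}\to[2t]$, and a Haar-random unitary $U$ on $\ell'$ qubits; each swap is invisible up to $\negl(\secparam)$. After the $U^{\otimes t}$ layer, the reduced state on $\regK_1,\ldots,\regK_t$ is --- up to total variation $O(t^2/2^{\ell'})$ by a standard birthday collision bound for Haar-random states --- supported on the distinct-key subspace $\Img(\Pi_{{\sf dist}}^{\ell',t})$, which lets me invoke~\Cref{thm:unisimmain} to replace the remaining layer by the simulator $\Sim^t$, which internally samples distinct $k_1,\ldots,k_t\gets \{0,1\}^{\ell'}$ and makes one forward query to each of $\PRU_{f(k_1)},\ldots,\PRU_{f(k_t)}$. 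Since $f$ is uniformly random and the $k_i$'s are distinct, $f(k_1),\ldots,f(k_t)$ are independent uniform PRU keys, and a $t$-step hybrid applying $1$-time pseudorandom security of $\PRU$ replaces each $\PRU_{f(k_i)}$ by an independent Haar-random unitary $V_i$ on $\{0,1\}^{\ell_n}$ at total cost $\negl(\secparam)$.

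The remaining, most delicate step is to show that the resulting simulated channel --- a shared Haar-random $U$ applied to each $\regK_i$, then $\Sim^t$ with $V_1,\ldots,V_t$ independently Haar-random --- is statistically close to $t$ parallel non-adaptive queries to a single Haar-random unitary $V$ on $\ell'+\ell_n$ qubits. I expect to prove this via the path-recording formalism of~\cite{MH24}: on the real side one purifies the twirl of $V^{\otimes t}$ for one shared $V$ on $\ell'+\ell_n$ qubits, while on the simulator side each query is routed through the $\Select$ gadget of~\Cref{thm:unisimmain} into a distinct bucket indexed by $k_j$ on which a fresh $V_j$ acts. The hard part will be comparing these two transcripts: conditioned on the $k_j$'s being distinct (a $1-O(t^2/2^{\ell'})$ event, where $\ell'=\omega(\log\secparam)$ and $\ell'\leq \ell_n/2$ ensure enough room for the path-recording buckets), a careful coupling of the two path-recording oracles should show that the record produced on the simulator side after $t$ queries is identically distributed to the record on the real side up to exponentially small error. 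Composing all the bounds yields the claimed non-adaptive $t$-time pseudorandom security with key length $\ell'_k=\ell_{k_f}+\ell_{k_g}+\ell_{k_U}$.
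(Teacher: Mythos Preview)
Your construction and the bulk of your hybrid chain match the paper's exactly: the same $\wt{\PRU}_{\wt{k}}$, the same derandomization of $f,g,U$, the same $\Pi_{\sf dist}$ projection via a collision bound, the same invocation of~\Cref{thm:unisimmain} to pass to $\Sim^t$, and the same $t$-fold reduction to $1$-query PRU security.

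Where you diverge is the final step. You propose a direct path-recording coupling between ``Haar $U$ on $\regK$, then $\Sim^t$ fed independent Haar unitaries $V_1,\dots,V_t$'' and ``$t$ parallel queries to a single Haar on $\ell'+\ell_n$ qubits.'' The paper instead applies~\Cref{thm:unisimmain} a \emph{second time, in reverse}: it rewraps the fresh Haar unitaries into a Haar-indexed family $\{U'_k\}_{k\in\{0,1\}^{\ell'}}$, undoes the simulator to recover the explicit form $\Apply^{\{U'_k\}}\cdot S^{g}\cdot U$ (still with the $\Pi_{\sf dist}$ projection, which is then removed by another collision bound), absorbs $S^g$ into $\{U'_k\}$ by unitary invariance of Haar, and only then invokes a clean standalone lemma (\Cref{lem:idealunitary}) stating that $\Apply^{\{U'_k\}}\cdot U$ is statistically close to a single Haar unitary. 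Your direct approach would also work, but the paper's symmetric use of the simulator theorem modularizes the path-recording analysis into a statement about the ``ideal'' object $\Apply^{\{U'_k\}}\cdot U$ rather than about the internals of $\Sim^t$, which is tidier and easier to verify.
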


Setting $\epsilon = \frac{1}{2^{\ell_k}}$,~\Cref{thm:efftwise,thm:effdesign} give 
\begin{corollary}
    Let $\PRU$ be a pure $1$-time pseudorandom unitary with keys of length $\ell_k(\secparam)$ over states of length $\ell_n(\secparam)$. Let $t$ be any polynomial. Let $\ell'$ be any polynomial such that $\ell'=\omega(\log \secparam)$ and $\ell' \leq \ell_n/2$.

    Then there exists a non-adaptive, pure, $t$-time pseudorandom unitary with keys of length $O(t\cdot (\ell_k+\ell'))$ over states of length $\ell_n+\ell'$.
\end{corollary}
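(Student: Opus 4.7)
The construction will follow the template from the technical overview: for seed $\wt{k}=(k_f,k_g,k_U)$, define the unitary
\[
\wt{\PRU}_{\wt{k}} = (U_{k_U}^{\dagger}\otimes I_{\regIn})\cdot \Bigl(\sum_{i\in\{0,1\}^{\ell'}}\omega_{2t}^{g_{k_g}(i)}\ketbra{i}_{\regC}\otimes \PRU_{f_{k_f}(i)}\Bigr)\cdot (U_{k_U}\otimes I_{\regIn})
\]
acting on an auxiliary $\ell'$-qubit register $\regC$ and the original $\ell_n$-qubit register $\regIn$. Purity of $\PRU$ guarantees that each $\PRU_{f_{k_f}(i)}$ is a genuine unitary, so $\wt{\PRU}_{\wt{k}}$ is both unitary and pure; it is efficiently implementable by a single call to the controlled-$\PRU$ map $\Apply^{\PRU}$ sandwiched between $U_{k_U}$ and $U_{k_U}^{\dagger}$. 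The key length is $\ell_{k_f}+\ell_{k_g}+\ell_{k_U}$ and the state register has $\ell'+\ell_n$ qubits, matching the bounds in the theorem.

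Security will follow a sequence of hybrids mirroring the proof of~\Cref{thm:prsthm}. Fix any non-adaptive QPT distinguisher $\adv$ making $t$ parallel queries. First I swap the $2t$-wise independent $f_{k_f}$ and $g_{k_g}$ for truly random functions $f,g$; since each parallel query triggers each function at $t$ positions (in superposition), \Cref{lem:twise} makes this swap statistically identical. Next, replace the approximate $t$-design $U_{k_U}$ by a Haar random $U$ on $\ell'$ qubits, at cost $\negl(\secparam)$. At this point I invoke~\Cref{thm:unisimmain}: the $t$-parallel action of $(I\otimes U^{\dagger})\cdot S^g\cdot \Apply^{\PRU}\cdot (I\otimes U)$ on $\adv$'s input is statistically close (up to the distinctness error $O(t^2/2^{\ell'})=\negl(\secparam)$, absorbed because a Haar random $U$ spreads the $t$ $\regC$-register values nearly uniformly over $\{0,1\}^{\ell'}$) to the output of the efficient simulator $\Sim^t$, which samples distinct random keys $r_1,\dots,r_t$ and invokes each $\PRU_{r_i}$ exactly once.

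Having reduced to a world in which each $\PRU_{r_i}$ is queried at most once, a $t$-step hybrid applying one-time PRU security to each $r_i$ in turn replaces $\PRU_{r_i}$ by an independent Haar random unitary $V_i$ on $\ell_n$ qubits, at total cost $t\cdot\negl(\secparam)$. I am left with an idealized experiment: conjugate $\regC$ by a Haar random $U$, run the $\Sim^t$ circuit with the truly random $V_i$'s, then conjugate $\regC$ back by $U^{\dagger}$. The final hybrid asks to show that this experiment is $\negl(\secparam)$-close to $t$ non-adaptive parallel queries to a single Haar random unitary on $\ell'+\ell_n$ qubits.

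The main obstacle is exactly this last hybrid. Intuitively, the simulator's controlled-swap structure, once dressed with Haar random $V_i$'s on the target register and with a Haar random basis rotation $U$ on the index register, should emulate a Haar random unitary on the joint $(\ell'+\ell_n)$-qubit register. To prove this rigorously I plan to use the path-recording framework of~\cite{MH24}: represent both the idealized simulator and $t$ parallel queries to a true Haar random unitary by their path-recording purifications and show the two purifications are $\negl(\secparam)$-close. The key combinatorial observation is that, since the $r_i$'s are distinct and the $V_i$'s are independent Haar, the simulator's record register effectively logs at most $t$ distinct index-input-output triples, matching the structure of the path-recording register of a Haar random unitary after $t$ parallel queries; the conjugation by $U$ then glues this index to a uniformly random orthonormal basis on $\regC$, completing the analogy. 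Composing all the hybrids bounds $\adv$'s distinguishing advantage by $\negl(\secparam)$, which completes the proof.
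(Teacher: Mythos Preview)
Your overall hybrid structure closely tracks the paper's proof of~\Cref{thm:mainunitarythm}, but there are two divergences worth noting.

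First, your construction conjugates by the design unitary, applying $U_{k_U}^{\dagger}$ at the end, whereas the paper's construction is simply $\Apply^{f,\PRU}\cdot S^{g}\cdot U_{k_U}$ with no trailing $U^{\dagger}$. Your version is not obviously wrong, but the extra $U^{\dagger}$ changes the shape of the final ideal oracle you must analyze; the paper's choice makes that final oracle exactly $\Apply^{\{U'_k\}}\cdot U$ for independent Haar-random $U$ and $\{U'_k\}$, which is dispatched by a clean standalone statement (\Cref{lem:idealunitary}) proved via path recording.

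Second, and more substantively, you propose to analyze the simulator $\Sim^t$ (instantiated with Haar-random $V_i$'s, sandwiched between $U$ and $U^{\dagger}$) directly against a Haar-random unitary. The paper sidesteps this entirely by invoking~\Cref{thm:unisimmain} \emph{twice}: once forward to pass from $S^g\cdot\Apply^{\PRU^f}$ to the simulator (your hybrid), and once \emph{in reverse} (the paper's $\Phi_7\to\Phi_8$) to pass from the simulator with a Haar-random family $\{U'_k\}$ back to $S^g\cdot\Apply^{\{U'_k\}}$. After that, $S^g$ disappears by unitary invariance of the Haar measure (since $\Apply^{\{U'_k\}}\cdot S^g=\Apply^{\{\omega^{g(k)}U'_k\}}$), and~\Cref{lem:idealunitary} finishes the job. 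This two-pass use of the simulation theorem is the key maneuver you are missing: it keeps the proof modular and avoids any direct path-recording analysis of the simulator's controlled-swap internals. Your sketch of that direct analysis is plausible but would amount to re-deriving~\Cref{lem:idealunitary} with the simulator's $\regSt$ register still in the picture, which is more work than needed and is left quite vague in your proposal.
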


We will first introduce some information-theoretic auxiliary lemmas, which we will prove using the path-recording method introduced by~\cite{MH24}. The proofs will be deferred to~\Cref{sec:auxlemmmas}.
\ifnum\llncs=1
\begin{theorem}\label{lem:nocoll}
    Let $\regIn=\regIn_1,\dots,\regIn_t$ be a register on $\mathcal{H}(\{0,1\}^n)^{\otimes t}$ for any $t,n\in \N$. $\rho_{\regIn_1,\dots,\regIn_t}$ be any state. Then
    $$\Tr(\Pi_{{\sf dist}}^{n,t}\E_{U\gets \Haar(\{0,1\}^n)}[U^{\otimes t}\rho U^{\dagger,\otimes t}]) \geq 1 - \frac{2t^2}{2^n}$$
\end{theorem}

\begin{lemma}\label{lem:idealunitary}
    Let $\mathcal{O}_1$ be defined by the following process:
    \begin{enumerate}
        \item On initialization, sample $U\gets \Haar(\{0,1\}^\ell)$, for each $k\in \{0,1\}^\ell$, sample $U'_k\gets \Haar(\{0,1\}^n)$. 
        \item When queried on registers $\mathbf{K},\regIn$ over $\mathcal{H}(\{0,1\}^\ell),\mathcal{H}(\{0,1\}^n)$ respectively, apply $\Apply^{\{U'_k\}}_{\mathbf{K},\regIn}\cdot U_{\mathbf{K}}$
    \end{enumerate}
    Let $\mathcal{O}_2$ be a Haar random unitary. Then for all non-adaptive $t$ query quantum algorithms $\A^{(\cdot)}$,
    $$\abs{\Pr[\A^{\mathcal{O}_1} \to 1]-\Pr[\A^{\mathcal{O}_2}\to 1]} \leq 2^\ell\frac{2t^2}{2^n} + \frac{2t^2}{2^\ell} + \frac{2t^2}{2^{\ell + n}}$$
\end{lemma}

\else

\ifnum\llncs=0
\subsection{Auxiliary Lemmas}\label{sec:auxlemmmas}
\else
\section{Auxiliary Lemmas}\label{sec:auxlemmmas}
\fi

\begin{definition}
    Define $R^{inj}_{n}\subseteq \mathcal{P}(\{0,1\}^{2n})$ to be the set of injective relations over $\{0,1\}^n$. Formally,
    $$R^{inj}_n = \{R\subseteq \{0,1\}^{2n}:\forall(x,y)\neq(x',y')\in R, y\neq y'\}$$
    
    We define $R^{inj}_{n,t} \subseteq R^{inj}_n$ to be
    $$R_{n,t}^{inj} = \{R\in R^{inj}_{n,t}:|R|\leq t\}$$
\end{definition}

\begin{definition}[Forward query path recording oracle]
    Let $n\in \N$ and let $t_{max} \leq 2^n$. Let $V_{n}$ be the partial isometry over $\mathcal{H}(\{0,1\}^n)\otimes \mathcal{H}(R^{inj}_n)$ defined as follows: for $x\in \{0,1\}^n,D\in R^{inj}_{n,t_{max}-1}$
    $$V\ket{x}\ket{D} \mapsto \frac{1}{\sqrt{2^n-|Im(D)|}}\sum_{y\in \{0,1\}^n\setminus Im(D)}\ket{y}\ket{D\cup \{(x,y)\}}$$
\end{definition}

\begin{theorem}[Theorem 5~\cite{MH24}]\label{thm:pro}
    Let $\mathcal{A}^{(\cdot)}$ be any $t$ query algorithm operating on registers $AB$, where register $A$ is over $\mathcal{H}(\{0,1\}^n)$. Let $V_{n}$ operate on registers $AR$. Then,
    $$TD\left(\E_{U\gets \Haar(S)}\left[\ketbra{\A^{U}}\right] , \Tr_{R}\left(\ketbra{\A^{V_n}_{ABR}}\right)\right)\leq \frac{2t(t-1)}{2^n+1}$$
\end{theorem}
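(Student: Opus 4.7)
The plan is to prove this through a two-step comparison, following the path-recording strategy. Introduce an idealized oracle $\widetilde{V}_n$ defined by $\widetilde{V}_n \ket{x}\ket{D}\mapsto 2^{-n/2}\sum_{y\in\{0,1\}^n}\ket{y}\ket{D\cup\{(x,y)\}}$, which drops both the injectivity constraint and the renormalization present in $V_n$ (here $D$ is interpreted as a multi-relation). The first step is to show that when $\A$ queries $\widetilde{V}_n$ a total of $t$ times and the relation register $R$ is traced out, the resulting density matrix is \emph{exactly} $\E_{U\gets \Haar}[\ketbra{\A^U}]$. This is the representation-theoretic heart of the argument: after an appropriate change of basis on $R$ (analogous to the quantum Fourier transform in Zhandry's compressed oracle), the amplitudes of the path-recording state encode the symmetric-subspace action of $\E_U[U^{\otimes t}(\cdot) U^{\dagger\otimes t}]$, and the equality can be verified either by Schur--Weyl duality on $\mathrm{Sym}^t(\mathbb{C}^{2^n})$ or by matching the explicit Weingarten expansion of the Haar integral against the explicit form of the $\widetilde{V}_n$ state on an arbitrary input.

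The second step bounds the discrepancy between $V_n$ and $\widetilde{V}_n$. On any basis state $\ket{x}\ket{D}$ with $|D|=k$, a direct calculation shows that $\|(V_n-\widetilde{V}_n)\ket{x}\ket{D}\|$ is $O(\sqrt{k/2^n})$, with contributions from (i) the normalization mismatch $1/\sqrt{2^n-k}$ versus $1/\sqrt{2^n}$ on the allowed outputs and (ii) the extra amplitude $\widetilde{V}_n$ places on $y\in\mathrm{Im}(D)$. Since the $k$-th query is applied to a state whose $R$-register is supported on relations of size at most $k-1$, a hybrid argument accumulates these per-step errors across the $t$ queries. Importantly, the ``collision'' error branches at distinct query steps live in essentially orthogonal subspaces of $R$ (indexed by \emph{which} earlier pair was hit), so their contributions combine in squared norm rather than in norm; this refinement is what sharpens the naive $O(t^{3/2}/\sqrt{2^n})$ bound into the stated $\tfrac{2t(t-1)}{2^n+1}$.

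The main obstacle is the first step: establishing that the idealized path-recording oracle $\widetilde{V}_n$ exactly reproduces the Haar average. This demands careful bookkeeping of how the permutation symmetries on the $t$ query--response pairs stored in $R$ line up with the symmetric-subspace structure governing $\E_U[U^{\otimes t}\rho U^{\dagger\otimes t}]$, and in particular requires choosing the right basis on $R$ so that partial trace implements the needed sum over $S_t$. Once this exact identity is in place, the error analysis of Step~2 is a comparatively routine hybrid argument sharpened by the orthogonality observation about collision branches.
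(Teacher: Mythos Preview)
The paper does not prove this theorem; it is quoted as Theorem~5 of~\cite{MH24} and used as a black box. So there is no paper's own proof to compare against.

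That said, your Step~1 contains a genuine error. The idealized oracle $\widetilde{V}_n$ you define (multi-relation register, no injectivity constraint, uniform normalization $2^{-n/2}$) does \emph{not} exactly reproduce the Haar average after tracing out $R$. Consider two queries on the same input $\ket{x}\otimes\ket{x}$. The Haar-averaged output is $\tfrac{2}{2^n(2^n+1)}\Pi_{\mathrm{sym}}$, uniform over the symmetric subspace. Your $\widetilde{V}_n$, after tracing out the multiset register, instead gives
\[
2^{-2n}\sum_a\ketbra{a,a}\;+\;2\cdot 2^{-2n}\sum_{a<b}\ketbra{\psi_{ab}^+},\qquad \ket{\psi_{ab}^+}=\tfrac{1}{\sqrt 2}(\ket{a,b}+\ket{b,a}),
\]
in which the diagonal terms $\ketbra{a,a}$ carry half the weight of the off-diagonal ones. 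The two matrices differ by $O(2^{-n})$ in trace distance, but they are not equal. Multiset recording enforces permutation symmetry on the outputs, yet it does not introduce the $1/(2^n+1)$ factor coming from the symmetric-subspace dimension, so no choice of basis on $R$ will make the partial trace land exactly on the Haar twirl.

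Because Step~1 is only approximate, your two-step decomposition no longer isolates all of the error into Step~2; both steps contribute, and you have not budgeted for the Step~1 discrepancy (nor shown that the two errors do not compound adversely). The proof in~\cite{MH24} avoids this by working directly with the injective $V_n$: it shows that $V_n$ enjoys a right-invariance property under the action of any fixed unitary on the purifying register, and then bounds the distance to Haar by controlling the norm deficit of $V_n$ as a partial isometry on relations of growing size. There is no intermediate oracle that matches Haar exactly.
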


We will also use a modified version of the path recording oracle which always outputs distinct prefixes.

\begin{definition}[Modified path recording oracle]
    Let $\ell,n\in \N$ and let $t_{max} \leq 2^n$. Let $V_{\ell,n}$ be the partial isometry over $\mathcal{H}(\{0,1\}^\ell)\otimes \mathcal{H}(\{0,1\}^n)\otimes \mathcal{H}(R^{inj}_n)$ defined as follows: \pnote{is it supposed to be $D$ instead of $D_1$ below?} for $x\in \{0,1\}^n,D\in R^{inj}_{n,t_{max}-1}$
    $$V\ket{a,x}\ket{D} \mapsto \propto\sum_{\substack{b\in \{0,1\}^\ell,(b,\cdot)\notin {\sf Im}(D)\\y\in \{0,1\}^n}}\ket{b,y}\ket{D\cup \{((a,x),(b,y))\}}$$
\end{definition}

\begin{theorem}[Follows from Theorem 9~\cite{MH24}]\label{thm:promod}
    Let $\mathcal{A}^{(\cdot)}$ be any $t$ query algorithm operating on registers $AB$, where register $A$ is over $\mathcal{H}(\{0,1\}^n)$. Let $V_{n}$ operate on registers $AR$. Then,
    $$TD\left(\E_{U\gets \Haar(S)}\left[\ketbra{\A^{U}}\right] , \Tr_{R}\left(\ketbra{\A^{V_n}_{ABR}}\right)\right)\leq \frac{2t(t-1)}{2^n+1}$$
\end{theorem}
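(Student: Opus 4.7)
The plan is to derive the bound by reducing $V_{\ell,n}$ (the modified path-recording oracle) to the unmodified path-recording oracle $V_{\ell+n}$ over $(\ell+n)$-bit strings and then invoking~\Cref{thm:pro}. The key structural observation is that $V_{\ell,n}$ is a strict restriction of $V_{\ell+n}$: both produce a uniform superposition over pairs $(b,y)\in \{0,1\}^{\ell+n}$ to append to the relation $D$, but $V_{\ell+n}$ only forbids the full pair $(b,y)$ from appearing in $\Img(D)$, whereas $V_{\ell,n}$ forbids the first component $b$ from appearing as a first component of any previous output. Thus, conditioned on the event that $V_{\ell+n}$ never produces a repeated first component, its action agrees with that of $V_{\ell,n}$.

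First I would introduce an intermediate partial isometry $V'$ which behaves like $V_{\ell+n}$ but aborts (or outputs an orthogonal ``fail'' flag) as soon as a first-component collision would be recorded. On the ``good'' subspace $V'$ and $V_{\ell,n}$ are identical (up to a renormalization factor that accounts for the slightly different denominators $2^{\ell+n}-|\Img(D)|$ vs.\ $(2^\ell-|\pi_1(\Img(D))|)\cdot 2^n$, which agree whenever $\pi_1$ is injective on $\Img(D)$). A direct amplitude-weight calculation, summed over the $t$ queries, then bounds the weight on the ``fail'' branch — and hence the trace distance between $\Tr_R(\ketbra{\A^{V_{\ell,n}}})$ and $\Tr_R(\ketbra{\A^{V_{\ell+n}}})$ — by a birthday-style quantity of the form $O(t^2/2^\ell)$.

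Next, applying~\Cref{thm:pro} to $V_{\ell+n}$ operating on $(\ell+n)$-bit strings gives
$$\TD\bigl(\E_{U\gets\Haar(\{0,1\}^{\ell+n})}[\ketbra{\A^{U}}],\ \Tr_R(\ketbra{\A^{V_{\ell+n}}})\bigr)\leq \frac{2t(t-1)}{2^{\ell+n}+1}.$$
The claim would then follow by the triangle inequality, combining this bound with the hybrid bound from the previous step. (Note: as the bound in the statement is written $\tfrac{2t(t-1)}{2^n+1}$ with no $\ell$-dependence, I suspect a typographical issue in the statement; my proof would naturally yield a bound of the shape $\tfrac{2t(t-1)}{2^{\ell+n}+1}+O(t^2/2^\ell)$.)

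The main obstacle is the careful bookkeeping of partial-isometry normalizations in the reduction step: the denominators in $V_{\ell,n}$ and $V_{\ell+n}$ do not literally match, so one must argue that the discrepancy either cancels on the good subspace or is absorbed into the failure amplitude. The cleanest way to handle this is to project onto the event $\{b_1,\dots,b_i \text{ distinct}\}$ inductively after each query, show that both oracles agree on this projected subspace up to a factor of $\sqrt{1-(i-1)/2^\ell}$, and track the lost amplitude as the failure weight; standard ``gentle measurement'' style inequalities then convert this amplitude loss into the trace-distance bound.
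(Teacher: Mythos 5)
The paper provides no proof for this theorem — it simply cites Theorem 9 of~\cite{MH24}. Your attempt to actually reconstruct a proof by reducing $V_{\ell,n}$ to the unmodified path-recording oracle $V_{\ell+n}$ and invoking~\Cref{thm:pro} is a genuinely different route, and you correctly spotted that the statement as printed is a copy-paste of~\Cref{thm:pro} (it should be about $V_{\ell,n}$ and Haar unitaries on $\{0,1\}^{\ell+n}$, with a bound that carries $\ell$-dependence).

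That said, there is a concrete gap in your sketch. You assert that the normalizations of $V_{\ell,n}$ and $V_{\ell+n}$ ``agree whenever $\pi_1$ is injective on $\Img(D)$.'' They do not: with $|\Img(D)|=k$ and distinct first components, $V_{\ell+n}$ normalizes by $2^{\ell+n}-k$ while $V_{\ell,n}$ normalizes by $(2^\ell-k)\cdot 2^n = 2^{\ell+n}-k\cdot 2^n$, which are unequal for every $k>0$. More importantly, the two images are genuinely different — $V_{\ell,n}$ excludes about $k\cdot 2^n$ outputs per query, $V_{\ell+n}$ only $k$ — so the two oracles do not coincide on any nontrivial ``good'' subspace. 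One must instead view $V_{\ell,n}\ket{a,x}\ket{D}$ as the renormalized projection of $V_{\ell+n}\ket{ax}\ket{D}$ onto the span of outputs whose first component avoids $\pi_1(\Img(D))$; the per-query norm of the resulting discrepancy is $O(\sqrt{k/2^\ell})$, and summing over $t$ queries with $k$ up to $t-1$ gives $O(t^{3/2}/\sqrt{2^\ell})$, not the $O(t^2/2^\ell)$ you claim. Whether the sharper $O(t^2/2^\ell)$ can be recovered by a more careful argument (e.g., working directly with the relation states and fidelity à la MH24) would need to be checked; the direct hybrid argument does not obviously give it. The paper's choice to cite Theorem~9 of~\cite{MH24} directly — where the modified oracle is analyzed without going through $V_{\ell+n}$ — sidesteps this loss entirely.
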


\ifnum\llncs=0
\begin{theorem}\label{lem:nocoll}
    Let $\regIn=\regIn_1,\dots,\regIn_t$ be a register on $\mathcal{H}(\{0,1\}^n)^{\otimes t}$ for any $t,n\in \N$. $\rho_{\regIn_1,\dots,\regIn_t}$ be any state. Then
    $$\Tr(\Pi_{{\sf dist}}^{n,t}\E_{U\gets \Haar(\{0,1\}^n)}[U^{\otimes t}\rho U^{\dagger,\otimes t}]) \geq 1 - \frac{2t^2}{2^n}$$
\end{theorem}
\else
\begin{theorem}[\Cref{lem:nocoll} restated]
    Let $\regIn=\regIn_1,\dots,\regIn_t$ be a register on $\mathcal{H}(\{0,1\}^n)^{\otimes t}$ for any $t,n\in \N$. $\rho_{\regIn_1,\dots,\regIn_t}$ be any state. Then
    $$\Tr(\Pi_{{\sf dist}}^{n,t}\E_{U\gets \Haar(\{0,1\}^n)}[U^{\otimes t}\rho U^{\dagger,\otimes t}]) \geq 1 - \frac{2t^2}{2^n}$$
\end{theorem}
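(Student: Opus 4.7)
The plan is to reduce to the path-recording oracle and exploit the fact that its outputs are automatically distinct by construction. Concretely, I would view the setup as a $t$-query non-adaptive algorithm $\A$ in the sense of~\Cref{thm:pro}: $\A$ prepares $\rho$ on registers $\regIn_1, \ldots, \regIn_t$ (plus auxiliary registers) and then, in $t$ separate oracle calls, queries its oracle once on each register $\regIn_i$. Applying~\Cref{thm:pro} to this algorithm yields
\[
\TD\!\left(\E_{U \gets \Haar(\{0,1\}^n)}\!\left[U^{\otimes t} \rho\, U^{\dagger,\otimes t}\right], \Tr_R\!\left(\ketbra{\A^{V_n}}\right)\right) \leq \frac{2t(t-1)}{2^n+1} \leq \frac{2t^2}{2^n}.
\]

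Next, I would observe that the path-recording state $\ket{\A^{V_n}}$ is supported entirely on branches where the $t$ output register contents are pairwise distinct. This is immediate from the definition of $V_n$: each application $V_n \ket{x}\ket{D}$ produces a superposition only over $y \in \{0,1\}^n \setminus \mathrm{Im}(D)$, and extends $D$ by $(x,y)$. Since the record register $R$ is shared across all $t$ queries while the input/output registers are disjoint, after $t$ queries the values written into $\regIn_1, \ldots, \regIn_t$ correspond to the second coordinates of the $t$ pairs in $D$, which by injectivity of $D \in R^{inj}_n$ are pairwise distinct. Hence $(\Pi_{\sf dist}^{n,t})_{\regIn} \otimes I_R$ acts as the identity on $\ketbra{\A^{V_n}}$, so
\[
\Tr\!\left(\Pi_{\sf dist}^{n,t}\, \Tr_R(\ketbra{\A^{V_n}})\right) = 1.
\]

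Finally, I would combine the two facts: since trace distance bounds the change in the probability of any binary measurement outcome (in particular the projector $\Pi_{\sf dist}^{n,t}$), the Haar-averaged state assigns probability at least $1 - \frac{2t^2}{2^n}$ to $\Pi_{\sf dist}^{n,t}$, giving the claim. The only mildly nontrivial step is justifying that non-adaptive parallel queries on disjoint input registers fit the hypothesis of~\Cref{thm:pro}, which is standard since a parallel query is a special case of sequential queries that happen to act on disjoint registers; all other ingredients are direct unpacking of the definitions.
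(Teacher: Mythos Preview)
Your proposal is correct and follows essentially the same approach as the paper's own proof: invoke the path-recording oracle, observe that its outputs on the $t$ registers are pairwise distinct by construction (so the projector $\Pi_{\sf dist}^{n,t}$ acts as the identity on the path-recorded state), and then transfer this to the Haar-averaged state via the trace-distance bound of \Cref{thm:pro}. If anything, you are slightly more explicit than the paper in justifying why the non-adaptive parallel query fits the sequential-query hypothesis of \Cref{thm:pro} and in spelling out the final ``trace distance bounds measurement outcome probabilities'' step.
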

\fi

\begin{proof}
    \pnote{not sue I follow this proof:} Let $V_n$ be the path recording oracle operating on an internal register $\regR$. Let $\ket{x_1,\dots,x_t}_{\regIn}$ be some standard basis state. Then
    $$V_{n,\regIn_1,\regR} \otimes \dots V_{n,\regIn_t,\regR} \ket{x_1,\dots,x_t}_{\regIn}\ket{\emptyset}_{\regR} = \sum_{y_1\neq \dots\neq y_t}\ket{y_1,\dots,y_t}\ket{\{(x_1,y_1),\dots,(x_t,y_t)\}}$$
    It is clear that this state is contained in ${\sf Im}(\Pi_{dist}^{n,t})$. And so in particular,
    $$\Tr((\Pi^{n,t}_{\sf dist,\regIn} \otimes I_{\regR}) (V_{n,\regIn_1,\regR} \otimes \dots V_{n,\regIn_t,\regR}) (\rho_{\regIn} \otimes \ketbra{\emptyset}_{\regR}) (V_{n,\regIn_1,\regR}^\dagger \otimes \dots V_{n,\regIn_t,\regR}^\dagger)) = 1$$
    And so the result follows from~\Cref{thm:pro}.
\end{proof}

\ifnum\llncs=0
\begin{lemma}\label{lem:idealunitary}
    Let $\mathcal{O}_1$ be defined by the following process:
    \begin{enumerate}
        \item On initialization, sample $U\gets \Haar(\{0,1\}^\ell)$, for each $k\in \{0,1\}^\ell$, sample $U'_k\gets \Haar(\{0,1\}^n)$. 
        \item When queried on registers $\mathbf{K},\regIn$ over $\mathcal{H}(\{0,1\}^\ell),\mathcal{H}(\{0,1\}^n)$ respectively, apply $\Apply^{\{U'_k\}}_{\mathbf{K},\regIn}\cdot U_{\mathbf{K}}$
    \end{enumerate}
    Let $\mathcal{O}_2$ be a Haar random unitary. Then for all non-adaptive $t$ query quantum algorithms $\A^{(\cdot)}$,
    $$\abs{\Pr[\A^{\mathcal{O}_1} \to 1]-\Pr[\A^{\mathcal{O}_2}\to 1]} \leq 2^\ell\frac{2t^2}{2^n} + \frac{2t^2}{2^\ell} + \frac{2t^2}{2^{\ell + n}}$$
\end{lemma}
\else
\begin{lemma}[\Cref{lem:idealunitary} restated]
    Let $\mathcal{O}_1$ be defined by the following process:
    \begin{enumerate}
        \item On initialization, sample $U\gets \Haar(\{0,1\}^\ell)$, for each $k\in \{0,1\}^\ell$, sample $U'_k\gets \Haar(\{0,1\}^n)$. 
        \item When queried on registers $\mathbf{K},\regIn$ over $\mathcal{H}(\{0,1\}^\ell),\mathcal{H}(\{0,1\}^n)$ respectively, apply $\Apply^{\{U'_k\}}_{\mathbf{K},\regIn}\cdot U_{\mathbf{K}}$
    \end{enumerate}
    Let $\mathcal{O}_2$ be a Haar random unitary. Then for all non-adaptive $t$ query quantum algorithms $\A^{(\cdot)}$,
    $$\abs{\Pr[\A^{\mathcal{O}_1} \to 1]-\Pr[\A^{\mathcal{O}_2}\to 1]} \leq 2^\ell\frac{2t^2}{2^n} + \frac{2t^2}{2^\ell} + \frac{2t^2}{2^{\ell + n}}$$
\end{lemma}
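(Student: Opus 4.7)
The plan is to interpolate between $\mathcal{O}_1$ and $\mathcal{O}_2$ through a chain of hybrid oracles built on the path-recording framework of~\cite{MH24}, bounding each transition by either~\Cref{thm:pro} or~\Cref{thm:promod}. We set $H_0=\mathcal{O}_1$, $H_3=\mathcal{O}_2$, and construct $H_1,H_2$ in between.

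In $H_1$, each of the $2^\ell$ independent Haar random inner unitaries $U'_k$ is replaced by its own path-recording oracle $V_n^{(k)}$ acting on a private relation register $\regR^{(k)}$; the $\Apply$ operation is correspondingly replaced by its controlled path-recording analogue, which invokes $V_n^{(k)}$ on the input register in the branch where the key register equals $k$. Since each $V_n^{(k)}$ can be involved in at most $t$ queries across the whole execution,~\Cref{thm:pro} implies that swapping one $U'_k$ alone costs at most $\frac{2t(t-1)}{2^n+1}$, and a triangle inequality over all $2^\ell$ keys gives total cost at most $2^\ell\cdot\frac{2t^2}{2^n}$. In $H_2$ the outer Haar random $U$ on the key register is likewise replaced by $V_\ell$, which by~\Cref{thm:pro} costs an additional $\frac{2t^2}{2^\ell}$. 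The crucial observation is that $H_2$ produces the same reduced state on the adversary's visible registers $(\regK,\regIn)$ as the modified path-recording oracle $V_{\ell,n}$ acting on the joint $(\ell+n)$-qubit register: after $V_\ell$ the key outputs lie in a superposition over values distinct from every previous output, so in every branch the conditionally invoked $V_n^{(k)}$ sees an empty relation register and returns a fresh uniform $y\in\{0,1\}^n$, and marginalising over the ancillas one recovers precisely the defining action of $V_{\ell,n}$, which outputs $(b,y)$ uniform subject only to $b$ being distinct from earlier prefixes.

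Finally, $H_3=\mathcal{O}_2$ is reached by invoking~\Cref{thm:promod}, which bounds the distinguishing advantage between $V_{\ell,n}$ and a Haar random unitary on $\ell+n$ qubits by $\frac{2t^2}{2^{\ell+n}}$. Summing the four contributions yields the claimed bound $2^\ell\cdot\frac{2t^2}{2^n}+\frac{2t^2}{2^\ell}+\frac{2t^2}{2^{\ell+n}}$. The main obstacle I expect is the equivalence claim embedded in $H_2$: the per-$k$ relation registers $\{\regR^{(k)}\}$ and the single joint relation register $D$ of $V_{\ell,n}$ record genuinely different classical data, so one has to verify by an inductive amplitude calculation (tracing out all ancillas) that the two book-keeping schemes produce identical reduced states on $(\regK,\regIn)$. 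The key invariant will be that after $j$ queries the set of prefixes recorded in $D$ always equals the set of key outputs already produced by $V_\ell$, and that the distinctness of these prefixes guarantees that each $V_n^{(k)}$ is ever invoked inside at most one branch, making the resulting input output uniform and independent of those produced by $V_n^{(k')}$ for $k'\neq k$.
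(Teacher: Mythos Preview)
Your approach is correct and essentially the same as the paper's: both replace the inner and outer Haar unitaries by path-recording oracles, identify the resulting purified oracle with the modified path-recording oracle $V_{\ell,n}$, and then invoke \Cref{thm:pro} and \Cref{thm:promod} to pay the three error terms. The one place where the paper is sharper is exactly the step you flag as your ``main obstacle'': rather than an inductive amplitude calculation, the paper exhibits an explicit isometry $Uncompress$ that sends the single database $\wt D=\{((a_i,x_i),(b_i,y_i))\}$ of $V_{\ell,n}$ to the tuple $(D,D_1,\dots,D_{2^\ell})$ with $D=\{(a_i,b_i)\}$ and $D_{b_i}=\{(x_i,y_i)\}$, and observes that this isometry intertwines $V_{\ell,n}^{\otimes t}$ with the fully path-recorded $\mathcal{O}_1$; since it acts only on the traced-out database registers, equality of the visible reduced states is immediate. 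Your invariant (``the set of prefixes recorded in $D$ equals the set of key outputs already produced by $V_\ell$'') is precisely what makes this map well-defined and bijective, so your plan would succeed, but the isometry packaging is cleaner than tracking amplitudes by hand.
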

\fi

\begin{proof}
    We replace $U$ with a path-recording oracle $Pr=V_\ell$.
    
    We will further replace unitary $U'_k$ with a path-recording oracle $Pr_k=V_{n}$, producing a new oracle $\mathcal{O}_1'$. In particular, $\mathcal{O}'_1$ will act as follows
    \begin{equation}
    \begin{split}
        \ket{x_1,x_2}\ket{D,D_1,\dots,D_{2^\ell}}\\
        \mapsto \propto\sum_{y_1\notin D}\sum_{y_2\notin D_{y_1}}\ket{y_1,y_2}\ket{D\cup \{(x_1,y_1)\},D_1,\dots,D_i\cup \{(x_2,y_2)\},\dots,D_{2^\ell}}
    \end{split}
    \end{equation}

    Given a database $\wt{D}=\{((x_1^1,x_2^1),(y_1^1,y_2^1)),\dots,((x_1^2,x_2^2),(y_1^2,y_2^2))\}$ over $\{0,1\}^{\ell}\times \{0,1\}^n$, define $\expand(D) = (D,D'_1,\dots,D'_{2^\ell})$ to be the following
    \begin{enumerate}
        \item $D = \{(x_1^1,y_1^1),\dots,(x_1^t,y_1^t)\}$
        \item For $j$ such that $j=y_1^i$, define $D_j' = \{(x_2^i,y_2^i)\}$
        \item For all other $j$, define $D_j'=\emptyset$.
    \end{enumerate}

    Let $\mathcal{O}_2'=V_{\ell,n}$.
    
    Define the isometry $Uncompress\ket{D} \mapsto \ket{Expand(D)}$.

    By construction, we have that for all $\ket{\phi}_{A\regIn}$, $Uncompress_{\mathbf{D}}\cdot (\mathcal{O}_2')_{\regIn_1,\dots,\regIn_t,\mathbf{D}}^{\otimes t}\ket{\phi}_{A,\regIn,\mathbf{D}} = (\mathcal{O}_1')_{\regIn_1,\dots,\regIn_t,\mathbf{D}}^{\otimes t}\ket{\phi}_{A,\regIn,\mathbf{D}}$.

    Thus, since $Uncompress$ only acts on the database register, for all $t$ parallel query quantum algorithms $\A$,
    $$\Pr[\A^{\mathcal{O}_1'} \to 1]=\Pr[\A^{\mathcal{O}_2'}\to 1]$$

    The theorem then follows by~\Cref{thm:pro,thm:promod}.
\end{proof}
\fi

\subsection{Proof of~\Cref{thm:mainunitarythm}}

\begin{proof}

We will define the non-adaptive, pure, $t$-time pseudorandom unitary $\wt{PRU}$ by defining the unitaries $\wt{PRU}_{\wt{k}}$ for each key $\wt{k}$. $\wt{PRU}_{\wt{k}}$ will act on input register $\wt{\regIn} = (\regK,\regIn)$. This construction is visualized in~\Cref{fig:prucons}.

\begin{enumerate}
    \item Parse $\wt{k}$ as $(k_f,k_g,k_U)$ where $k_f\in\{0,1\}^{\ell_{k_f}},k_g\in \{0,1\}^{\ell_{k_g}},k_U\in \{0,1\}^{\ell_{k_U}}$.\pnote{I found the phrasing to be a bit confusing? $k_f$ being key for $f_k$ seems like a recursive definition? we should perhaps use a different notation.}
    \item Let $\Apply^{f_{k_f},\PRU}_{\regK,\regIn}$ be the map which sends $\Apply^{f_{k_f},\PRU}_{\regK,\regIn}\ket{r}_{\regK}\otimes \ket{\phi}_\regIn\mapsto \ket{k}_{\regK}\otimes (\PRU_{f_{k_f}(r)}\ket{\phi}_\regIn)$.
    \item Let $S^{g_{k_g}}$ be the map which sends $\ket{x}\mapsto \omega_{2t}^{g_{k_g}(x)}\ket{x}$.
    \item Define $\wt{\PRU}_{k_f,k_g,k_U}$ to act on registers $\regK',\regIn$ where $\regK'$ is over $\{0,1\}^{\ell'(\secparam)}$ and $\regIn$ is over $\{0,1\}^{\ell_n(\secparam)}$.
    \item We then define 
    $$\wt{PRU}_{k_f,k_g,k_U}\coloneqq \Apply^{f_{k_f},\PRU}_{\regK,\regIn}\cdot S^{g_{k_g}}_{\regK'} \cdot (U_{k_U})_{\regK'}$$
\end{enumerate}

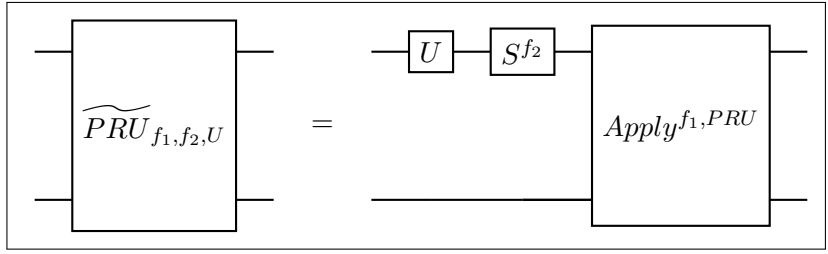
\begin{figure}
    \centering
    \fbox{
        \begin{quantikz}
        \qw&\gate[wires=3,nwires=2]{\wt{\PRU}_{f_1,f_2,U}}&\qw&  &&\gate{U}&\gate{S^{f_2}}&\gate[wires=3,nwires=2]{Apply^{f_1,\PRU}}&\qw \\
        &&&=&&&&&\\
        \qw&&\qw&  &  & \qw & \qw & \qw  & \qw
        \end{quantikz}
    }
\hfill
\caption{Construction of a $t$-copy non-adaptive PRU from a $1$-copy pure PRU. The key consists of $f_1,f_2$ two $2t$-wise independent hashes, and $U,U'$ two $t$-designs.}\label{fig:prucons}
\end{figure}

Note that since $\PRU$ is pure, $\Apply^{f_{k_f},\PRU}$ can be efficiently implemented by writing $f_{k_f}(k)$ in an ancilla register $C$, running $\PRU$ using register $C$ as the key register, and then clearing register $C$ by recomputing $f_{k_f}(k)$.

We will then show that this is a $t$-copy non-adaptive pseudorandom unitary. In particular, let $\A^{(\cdot)}$ be any $t$-query non-adaptive QPT adversary. We will model $\A$ as two efficient CPTP maps $\A_1,\A_2$ such that
$\A^{\Phi}=(\A_2 \circ \Phi \circ \A_1)(\ketbra{0})$. The role of $\Phi$ will be clear later. 

We then proceed to show that this construction is secure. We will do this via a sequence of hybrids. In particular, we will define a sequence of oracles defined by CPTP $\Phi_1,\dots,\Phi_4$. We will then show that for all non-adaptive $t$-query QPT oracle algorithms $\A=\A_1,\A_2$ and for all $i$,
$$\abs{\Pr[\A^{\Phi_{i}}\to 1] - \Pr[\A^{\Phi_{i+1}} \to 1]}\leq \negl(\secparam)$$
Here $\Phi_1$ will represent non-adaptive queries to $\wt{\PRU}$, while $\Phi_4$ will represent non-adaptive queries to a Haar random unitary. 
\par We present the following hybrids.

\noindent $\hybrid_1$: $\Phi_1$ will be $\wt{PRU}^{\otimes t}_{\wt{k}}$ for a random key $\wt{k}=(k_f,k_g,k_U)$.\\

\noindent $\hybrid_2$: $\Phi_2$ will be the same as $\Phi_1$, but with $U_{k_U}$ replaced by a Haar random unitary $U$ and with $f_{k_f},g_{k_g}$ replaced by random functions $f,g$.\\
    
\noindent $\hybrid_3$:  $\Phi_3$ will be the same as $\Phi_2$, but will project onto $\Pi_{dist}$ after applying the first round of $U$'s. Formally, on input $\rho_{\regK,\regIn}$, it will do the following
    \begin{enumerate}
        \item Sample $U\gets \Haar(\{0,1\}^\ell)$, $f:\{0,1\}^{\ell'}\to \{0,1\}^{\ell_n},g:\{0,1\}^{\ell'}\to [2t]$ random functions.
        \item Apply $U_{\regK_1}\otimes \dots \otimes U_{\regK_t}$.
        \item Apply the measurement $\{\Pi_{dist},I-\Pi_{dist}\}$ on $\regK_1,\dots,\regK_t$. If the result is the second option, output $\bot$.
        \item Otherwise, apply $(\Apply^{f,\PRU}_{\regK_1,\regIn_1}\cdot S^{g}_{\regK_1})\otimes \dots \otimes (\Apply^{f,\PRU}_{\regK_t,\regIn_t}\cdot S^{g}_{\regK_t})$.
        \item Finally, output registers $\regK,\regIn$.
    \end{enumerate}
\ \\
\noindent $\hybrid_4$: Define $\PRU^{f} = \{\PRU^{f}_k\}_{k\in \{0,1\}^{\secparam}}$ to be the family of unitaries defined by $\PRU^{f}_k = \PRU_{f(k)}$. Let $\Sim^{t,\PRU^{f}}$ be the simulator from~\Cref{thm:unisimmain} instantiated with the family $\PRU^{f}$. $\Phi_4$ will act as follows
    \begin{enumerate}
        \item Sample $U\gets \Haar(\{0,1\}^\ell)$, $f:\{0,1\}^{\ell'}\to \{0,1\}^{\ell_n}$ a random function.
        \item Apply $U_{\regK_1}\otimes \dots \otimes U_{\regK_t}$.
        \item Apply the measurement $\{\Pi_{dist},I-\Pi_{dist}\}$ on $\regK_1,\dots,\regK_t$. If the result is the second option, output $\bot$.
        \item Otherwise, apply $\Sim^{t,\PRU^{f}}_{\regK,\regIn}$.
        \item Finally, output registers $\regK,\regIn$.
    \end{enumerate}
\ \\
\noindent $\hybrid_5$: $\Phi_5$ will be defined as $\Phi_4$ with the following modification. Whenever the simulator queries $\PRU_{f(r)}$ on any (classical) input $r$, $\Phi_5$ will instead pick a fresh $r'$ uniformly at random and run $\PRU_{r'}$.\\
\ \\
\noindent $\hybrid_6$:  $\Phi_6$ will be the same as $\Phi_5$ with the following modification. Whenever the simulator queries $\PRU_{f(r)}$, it will instead sample a fresh Haar random unitary and apply that.\\
    
\noindent $\hybrid_7$: $\Phi_7$ will be the same as $\Phi_6$, but with the simulator replaced by a new simulator $\Sim^{t,\{U'_k\}}$ for a freshly sampled family of Haar random unitaries $\{U'_k\}$. Formally,    
    \begin{enumerate}
        \item Sample $U\gets \Haar(\{0,1\}^\ell)$.
        \item For each $k\in \{0,1\}^\ell$, sample $U'_k\gets \Haar(\{0,1\}^{n})$. 
        \item Apply $U_{\regK_1}\otimes \dots \otimes U_{\regK_t}$.
        \item Apply the measurement $\{\Pi_{dist},I-\Pi_{dist}\}$ on $\regK_1,\dots,\regK_t$. If the result is the second option, output $\bot$.
        \item Otherwise, apply $\Sim^{t,\{U'_k\}}_{\regK,\regIn}$.
        \item Finally, output registers $\regK,\regIn$.
    \end{enumerate}
\ \\ 
\noindent $\hybrid_8$:  $\Phi_8$ will be the same construction as $\Phi_2$, but with $\PRU$ replaced by a family of Haar random unitaries. Formally,
    \begin{enumerate}
        \item Sample $U\gets \Haar(\{0,1\}^\ell)$, $g:\{0,1\}^{\ell'}\to [2t]$ a random function.
        \item For each $k\in \{0,1\}^\ell$, sample $U'_k\gets \Haar(\{0,1\}^{n})$. 
        \item Apply $U_{\regK_1}\otimes \dots \otimes U_{\regK_t}$.
        \item Apply the measurement $\{\Pi_{dist},I-\Pi_{dist}\}$ on $\regK_1,\dots,\regK_t$. If the result is the second option, output $\bot$.
        \item Otherwise, apply $(\Apply^{\{U'_k\}}_{\regK_1,\regIn_1}\cdot S^{g}_{\regK_1})\otimes \dots \otimes (\Apply^{\{U'_k\}}_{\regK_t,\regIn_t}\cdot S^{g}_{\regK_t})$.
        \item Finally, output registers $\regK,\regIn$.
    \end{enumerate}
\ \\    
\noindent $\hybrid_9$: $\Phi_9$ will be the same as $\Phi_8$ but with the application of $\Pi_{dist}$ removed.\\
    
\noindent $\hybrid_{10}$: $\Phi_{10}$ will be the same as $\Phi_9$, but with the application of $S^{g}$ removed. Formally,
    \begin{enumerate}
        \item Sample $U\gets \Haar(\{0,1\}^\ell)$.
        \item For each $k\in \{0,1\}^\ell$, sample $U'_k\gets \Haar(\{0,1\}^{n})$. 
        \item Apply $U_{\regK_1}\otimes \dots \otimes U_{\regK_t}$.
        \item Otherwise, apply $\Apply^{\{U'_k\}}_{\regK_1,\regIn_1}\otimes \dots \otimes \Apply^{\{U'_k\}}_{\regK_t,\regIn_t}$.
        \item Finally, output registers $\regK,\regIn$.
    \end{enumerate}
\ \\   
\noindent $\hybrid_{11}$: Finally, $\Phi_{11}$ will be a $t$-fold Haar random unitary. \\

\noindent We show the indistinguishability of every pair of consecutive hybrids below.

\begin{claim}$\abs{\Pr[\A^{\Phi_{1}}\to 1] - \Pr[\A^{\Phi_{2}} \to 1]}\leq \negl(\secparam)$\end{claim}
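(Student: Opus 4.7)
The plan is to replace the pseudorandom components of $\wt{\PRU}_{k_f,k_g,k_U}$ with their truly random counterparts one at a time, via a short sequence of sub-hybrids, and bound each replacement separately.

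First, I would replace $U_{k_U}$ with a Haar-random unitary $U$. The adversary makes $t$ non-adaptive parallel queries, and in each query $U_{k_U}$ is applied exactly once (in parallel across the $t$ key registers $\regK'_1,\dots,\regK'_t$). So a distinguisher between this sub-hybrid and the previous one yields a $t$-query non-adaptive distinguisher between $\{U_k\}$ and a Haar-random unitary on $\{0,1\}^{\ell'}$. Since by hypothesis $\{U_k\}$ is a $\negl(\secparam)$-approximate $t$-design (\Cref{thm:effdesign}), this step incurs at most $\negl(\secparam)$ distinguishing advantage.

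Second, I would replace $f_{k_f}$ by a truly random $f:\{0,1\}^{\ell'}\to\{0,1\}^{\ell_k}$. The only uses of $f_{k_f}$ inside $\wt{\PRU}$ come from implementing $\Apply^{f_{k_f},\PRU}$ via compute/uncompute of $f_{k_f}$ on an ancilla; across the $t$ parallel queries this amounts to $O(t)$ quantum queries to $f_{k_f}$. By \Cref{lem:twise}, a $2t$-wise independent family is perfectly indistinguishable from a random function against any $t$-query quantum algorithm, so this substitution is exact (modulo a factor-2 slack in the wise-independence parameter that is absorbed into the stated construction). An identical argument replaces $g_{k_g}$ by a random $g:\{0,1\}^{\ell'}\to[2t]$.

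Finally, I would apply the triangle inequality over the three sub-hybrids. The only non-zero contribution is the $\negl(\secparam)$ coming from the $t$-design replacement, which establishes the claim. The main (minor) obstacle is bookkeeping the number of queries to each hash function, since $\Apply^{f_{k_f},\PRU}$ uses $f_{k_f}$ twice (compute and uncompute) per parallel query; this is handled by invoking \Cref{lem:twise} with the appropriate query count, which is still $O(t)$ and therefore fooled by the hash families chosen in the theorem statement.
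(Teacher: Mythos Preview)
Your proposal is correct and follows essentially the same approach as the paper: the paper's proof is a one-liner stating that the claim follows from $\{U_k\}$ being a negligibly-approximate $t$-design together with the $2t$-wise independence of $\{f_k\},\{g_k\}$ via \Cref{lem:twise}, and your three sub-hybrids simply spell this out. One minor remark on your bookkeeping worry: you need not route through compute/uncompute at all, since \Cref{lem:twise} allows inefficient algorithms and the density matrix after $t$ parallel applications of $\Apply^{f,\PRU}$ (respectively $S^{g}$) depends on $f$ (respectively $g$) only through its values at $2t$ points, so $2t$-wise independence is exactly the right parameter with no slack needed.
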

    \begin{proof}This follows directly from the fact that $\{U_k\}$ is a negligibly approximate $t$-design and that $\{f_k\},\{g_k\}$ are $2t$-wise independent hash functions (applying~\Cref{lem:twise}).
    \end{proof}
    
 \begin{claim}$\abs{\Pr[\A^{\Phi_{2}}\to 1] - \Pr[\A^{\Phi_{3}} \to 1]} \leq \negl(\secparam)$\end{claim}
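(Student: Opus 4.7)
The plan is to observe that $\Phi_3$ is identical to $\Phi_2$ except for inserting the measurement $\{\Pi_{\sf dist}^{\ell',t}, I - \Pi_{\sf dist}^{\ell',t}\}$ on the key registers $\regK_1,\dots,\regK_t$ immediately after applying $U_{\regK_1}\otimes \dots \otimes U_{\regK_t}$, outputting $\bot$ in the ``fail'' branch. Hence any distinguishing advantage between $\A^{\Phi_2}$ and $\A^{\Phi_3}$ is bounded by the probability that this measurement rejects.

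First, I would fix the input state to the oracle, namely $\rho_{\regK,\regIn} = \A_1(\ketbra{0})$, and write $\sigma_{\regK,\regIn} = \E_{U \gets \Haar(\{0,1\}^{\ell'})}\left[(U^{\otimes t})_{\regK}\,\rho_{\regK,\regIn}\,(U^{\dagger,\otimes t})_{\regK}\right]$ for the state after applying the Haar random unitary to each key register. By~\Cref{lem:nocoll} applied with $n=\ell'$, we have
$$\Tr\left((\Pi_{\sf dist}^{\ell',t})_{\regK}\,\sigma_{\regK,\regIn}\right) \geq 1 - \frac{2t^2}{2^{\ell'}}.$$
Since by assumption $\ell' = \omega(\log \secparam)$ and $t$ is a fixed polynomial, the failure probability $\frac{2t^2}{2^{\ell'}}$ is negligible in $\secparam$.

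Finally, since the two experiments agree whenever the measurement succeeds and differ only on the $\bot$-branch, the trace distance of the two output states going into $\A_2$ is at most the failure probability, so
$$\abs{\Pr[\A^{\Phi_{2}}\to 1] - \Pr[\A^{\Phi_{3}} \to 1]} \leq \frac{2t^2}{2^{\ell'}} \leq \negl(\secparam).$$
The only subtle point is that $\rho_{\regK,\regIn}$ may be entangled across the $\regK$ and $\regIn$ registers; but~\Cref{lem:nocoll} is stated for arbitrary joint states on the $t$ copies of the registers being Haar-randomized, so it applies directly. No further calculation is required.
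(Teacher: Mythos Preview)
Your approach is essentially the same as the paper's: use \Cref{lem:nocoll} to bound the probability that the inserted measurement fails, and conclude that adding the measurement changes the output distribution negligibly. However, there is a small gap in your final step.

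You write that ``the two experiments agree whenever the measurement succeeds and differ only on the $\bot$-branch, so the trace distance of the two output states going into $\A_2$ is at most the failure probability.'' This is not correct as stated: even when the measurement in $\Phi_3$ returns $\Pi_{\sf dist}$, the post-measurement state $\Pi_{\sf dist}\,\sigma\,\Pi_{\sf dist}/\Tr(\Pi_{\sf dist}\sigma)$ is in general different from the unmeasured state $\sigma$ used in $\Phi_2$. The measurement collapses coherences between the distinct and non-distinct subspaces, so the success branches of the two experiments do not literally agree. (A classical coupling intuition does not transfer directly here.)

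The fix is exactly what the paper invokes: the gentle measurement lemma. Since $\Tr(\Pi_{\sf dist}\sigma) \geq 1 - \epsilon$ with $\epsilon = 2t^2/2^{\ell'}$, gentle measurement gives $\TD\bigl(\sigma,\ \Pi_{\sf dist}\sigma\Pi_{\sf dist}/\Tr(\Pi_{\sf dist}\sigma)\bigr) \leq \sqrt{\epsilon}$, and combining with the $(1-p)$ contribution from the $\bot$-branch yields an overall bound of $O(\sqrt{\epsilon})$ rather than $\epsilon$. Since $\epsilon$ is negligible, so is $\sqrt{\epsilon}$, and your conclusion survives---but the intermediate bound you wrote is off by a square root, and the sentence justifying it should be replaced by an appeal to gentle measurement.
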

    \begin{proof}
    This follows from~\Cref{lem:nocoll} and gentle measurement. In particular, we know that the measurement $\{\Pi_{dist},I-\Pi_{dist}\}$ will output the first result with all but negligible probability, and so by gentle measurement, performing this measurement can have at most a negligible impact on the resulting output probability.
    \end{proof}
    
 \begin{claim}
    $\abs{\Pr[\A^{\Phi_{3}}\to 1] - \Pr[\A^{\Phi_{4}} \to 1]} = 0$
    \end{claim}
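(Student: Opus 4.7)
The plan is to invoke Main Theorem~\ref{thm:unisimmain} directly on the family $\PRU^{f} = \{\PRU_{f(k)}\}_{k \in \{0,1\}^{\ell'}}$, for each fixed choice of $f$, and then integrate over $f$. First, I would observe that the two hybrids $\Phi_3$ and $\Phi_4$ are structurally identical up to step~3: both sample $U \gets \Haar(\{0,1\}^{\ell'})$ (and a random $f$, plus in the case of $\Phi_3$ also a random $g$), apply $U^{\otimes t}$ on the key registers, and perform the projective measurement $\{\Pi_{dist}, I - \Pi_{dist}\}$, outputting $\bot$ on the second branch. So it suffices to show that, conditioned on the first measurement outcome, the remaining operation in $\Phi_3$ equals the remaining operation in $\Phi_4$ as CPTP maps on the post-measurement state.

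Next, let $\rho_{\regK,\regIn}$ denote any state obtained after the measurement succeeds (averaged over $U$ and over the algorithm's input). By construction of the measurement, $\Tr((\Pi_{dist}^{\secparam,t})_{\regK} \otimes I_{\regIn}) \rho_{\regK,\regIn}) = 1$, so $\rho$ satisfies exactly the hypothesis of~\Cref{thm:unisimmain}. Since $S^{g}_{\regK_i}$ acts diagonally in the standard basis on $\regK_i$ and $\Apply^{f,\PRU}_{\regK_i,\regIn_i}$ preserves $\regK_i$, these two operations commute, and $\Apply^{f,\PRU} = \Apply^{\PRU^{f}}$ by definition of $\PRU^{f}$. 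Thus for every fixed $f$, the channel applied in step~4 of $\Phi_3$ (after averaging over $g$) is
\[
\E_{g}\left[\left(\bigotimes_{i=1}^{t} S^{g}_{\regK_i} \cdot \Apply^{\PRU^{f}}_{\regK_i,\regIn_i}\right) \rho \left(\bigotimes_{i=1}^{t} S^{g}_{\regK_i} \cdot \Apply^{\PRU^{f}}_{\regK_i,\regIn_i}\right)^{\dagger}\right],
\]
which by~\Cref{thm:unisimmain} equals exactly $\Sim^{t,\PRU^{f}}(\rho)$, the channel applied in step~4 of $\Phi_4$.

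Finally, I would use linearity of expectation to pull the average over $f$ outside. Since the pre-step-4 stages of the two hybrids produce identical joint distributions over $(f, \rho)$, and since for each fixed $f$ the post-measurement channel of $\Phi_3$ (averaged over $g$) coincides with that of $\Phi_4$, the overall CPTP maps $\Phi_3$ and $\Phi_4$ are identical. Consequently, for every non-adaptive $t$-query adversary $\A$, $\Pr[\A^{\Phi_3}\to 1] = \Pr[\A^{\Phi_4}\to 1]$, establishing the claim. There is no real obstacle here, as the step is essentially a direct application of the main theorem; the only thing that requires care is checking that $\rho$ is supported on $\Img(\Pi_{dist})$, which is guaranteed by the measurement in step~3, and verifying that the order of $S^g$ and $\Apply^{f,\PRU}$ is immaterial.
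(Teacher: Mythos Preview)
Your proposal is correct and takes essentially the same approach as the paper, which simply states that the claim ``follows directly from~\Cref{thm:unisimmain}.'' You have filled in the details carefully, including the observation that $S^g$ and $\Apply^{f,\PRU}$ commute (so the order mismatch between the hybrid and the theorem statement is harmless) and that the post-measurement state lies in the support of $\Pi_{dist}$, which is exactly the hypothesis needed to invoke the theorem.
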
 
    \begin{proof}
    This follows directly from~\Cref{thm:unisimmain}.
    \end{proof}
    
 \begin{claim}$\abs{\Pr[\A^{\Phi_{4}}\to 1] - \Pr[\A^{\Phi_{5}} \to 1]} = 0$\end{claim}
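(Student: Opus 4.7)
The plan is to exploit the structural guarantee on the simulator $\Sim^t$ given by~\Cref{thm:unisimmain}: the simulator first samples a tuple of distinct classical keys $k_1,\dots,k_t \in \{0,1\}^\secparam$ using only its own internal randomness (independent of the family it has oracle access to), and then queries each unitary in the family exactly once on input $k_i$. When instantiated with the family $\PRU^{f} = \{\PRU_{f(k)}\}_k$, the simulator $\Sim^{t,\PRU^{f}}$ therefore only ever evaluates $f$ at the distinct classical points $k_1,\dots,k_t$, each exactly once, and no other part of $\Phi_4$ references $f$.

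I would then invoke the basic fact that, for any fixed tuple of distinct inputs $k_1,\dots,k_t$, the evaluations $f(k_1),\dots,f(k_t)$ of a uniformly random function $f:\{0,1\}^{\ell'}\to\{0,1\}^{\ell_k}$ are i.i.d.\ uniform over $\{0,1\}^{\ell_k}$. Conditioning on the simulator's (randomized) choice of $(k_1,\dots,k_t)$ and then taking expectation over $f$, the joint distribution of the $t$ unitaries $(\PRU_{f(k_1)},\dots,\PRU_{f(k_t)})$ applied by the simulator in $\Phi_4$ is thus identical to the joint distribution of $(\PRU_{r'_1},\dots,\PRU_{r'_t})$ for fresh uniform $r'_1,\dots,r'_t \in \{0,1\}^{\ell_k}$.

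This is exactly the modification that turns $\Phi_4$ into $\Phi_5$: replacing each query $\PRU_{f(r)}$ by $\PRU_{r'}$ for a freshly sampled $r'$. Since all other steps of $\Phi_4$ (sampling $U$, applying $U^{\otimes t}$ on the $\regK$ registers, projecting onto $\Pi_{\mathrm{dist}}$, running the surrounding simulator circuitry) are independent of $f$, replacing $f$-based queries by fresh random queries yields exactly the same output channel. Hence $\Phi_4(\rho) = \Phi_5(\rho)$ as CPTP maps, which immediately gives $\Pr[\A^{\Phi_4}\to 1] = \Pr[\A^{\Phi_5}\to 1]$ for every QPT adversary $\A$, and so the difference is exactly $0$.

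There is no real obstacle in this step: the argument is essentially by inspection. The only subtlety to check carefully is the distinctness guarantee on $k_1,\dots,k_t$ coming from~\Cref{thm:unisimmain} (which is why we projected onto $\Pi_{\mathrm{dist}}$ in moving from $\Phi_2$ to $\Phi_3$), since without distinctness a random function would produce correlated outputs and the reduction to fresh i.i.d.\ randomness would fail.
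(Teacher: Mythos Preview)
Your argument is correct and matches the paper's lazy-sampling proof. One minor correction: the distinct classical keys $k_1,\dots,k_t$ are not drawn from the simulator's own internal randomness but arise from measuring the input state's $\regK$ registers (copied into $\regSt$); this does not affect your argument, since that measurement precedes any use of $f$ and the prior $\Pi_{\mathrm{dist}}$ projection guarantees distinctness, so your conditioning step goes through unchanged.
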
 \begin{proof} This follows from the fact that the simulator defined in~\Cref{thm:unisimmain} explicitly queries $\PRU^{f}$ on $t$ distinct classical inputs only once. And so by lazy sampling, it is equivalent to sample the values of $f(\cdot )$ when they are first queried.\end{proof}
    
 \begin{claim}$\abs{\Pr[\A^{\Phi_{5}}\to 1] - \Pr[\A^{\Phi_{6}} \to 1]} \leq \negl(\secparam)$\end{claim}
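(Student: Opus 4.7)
The plan is a standard $t$-step hybrid argument that reduces to the $1$-copy pseudorandomness of $\PRU$. Define intermediate oracles $H^0_5, H^1_5, \ldots, H^t_5$ where $H^j_5$ is obtained from $\Phi_5$ by replacing the first $j$ of the $t$ independently sampled applications of $\PRU_{r'_i}$ (each $r'_i$ uniform and fresh) inside the simulator $\Sim^{t,\PRU^f}$ by fresh Haar random unitaries on $\ell_n$ qubits. Then $H^0_5 = \Phi_5$ and $H^t_5 = \Phi_6$, so by the triangle inequality it suffices to show that $\abs{\Pr[\A^{H^{j-1}_5}\to 1] - \Pr[\A^{H^j_5}\to 1]} \leq \negl(\secparam)$ for every $j \in [t]$; summing gives the overall negligible gap.

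For a single step I would construct a QPT reduction $\mathcal{R}$ attacking $1$-copy $\PRU$ security. $\mathcal{R}$ receives an oracle $\mathcal{O}^*$ which is either $\PRU_{r^*}$ for uniform $r^*$ or a Haar random unitary. It runs $\A_1$, samples the outer $\ell'$-qubit unitary from an efficient $\negl(\secparam)$-approximate $2t$-design (negligibly close to Haar for the $t$ uses made of it, by~\Cref{thm:effdesign}), applies it to $\regK_1 \otimes \cdots \otimes \regK_t$, and performs the $\Pi^{\ell',t}_{{\sf dist}}$ measurement. It then invokes the simulator of~\Cref{thm:unisimmain}, which by construction queries each of its $t$ distinct classical keys \emph{exactly once}. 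At the $j$-th such query $\mathcal{R}$ calls its challenge $\mathcal{O}^*$; for queries $1,\dots,j-1$ it performs the CPTP map that traces out the target register and replaces it with the maximally mixed state; and for queries $j+1,\dots,t$ it samples a fresh $r''_i$ and applies $\PRU_{r''_i}$. Finally $\mathcal{R}$ runs $\A_2$ and outputs its bit.

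Correctness of the reduction rests on two identities. First, for a single use, averaging $U \rho U^\dagger$ over $U \gets \Haar(\{0,1\}^{\ell_n})$ on a register $\regR$ equals exactly $\frac{I_{\regR}}{2^{\ell_n}} \otimes \Tr_{\regR}(\rho)$, so the ``earlier'' Haar positions in $H^j_5$ are emulated perfectly by the efficient maximally mixed channel. Second, the simulator from~\Cref{thm:unisimmain} touches each of its $t$ distinct keys exactly once, so inserting $\mathcal{O}^*$ at the $j$-th position produces $H^{j-1}_5$ when $\mathcal{O}^* = \PRU_{r^*}$ and $H^j_5$ when $\mathcal{O}^*$ is Haar, up to a $\negl(\secparam)$ slack from the $t$-design approximation of the outer unitary. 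Thus any non-negligible gap between consecutive hybrids would give $\mathcal{R}$ a non-negligible $1$-copy PRU distinguishing advantage, contradicting the hypothesis. The main subtlety is keeping $\mathcal{R}$ efficient in the presence of the information-theoretic Haar unitaries that appear inside the hybrids; this is resolved by the two observations above: the outer $U$ is used exactly $t$ times and is handled by the $t$-design, while each earlier inner Haar unitary is used only once and is handled exactly by the maximally mixed channel.
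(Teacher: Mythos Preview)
Your proof is correct and follows the same approach the paper sketches in one sentence: a $t$-step hybrid reducing to $1$-copy PRU security, using that each $\PRU_{r'_i}$ is invoked on a fresh random key exactly once. Your explicit handling of efficiency---approximating the outer Haar $U$ by a $t$-design and simulating each ``earlier'' single-use Haar unitary by the depolarizing channel $\rho \mapsto \frac{I_{\regR}}{2^{\ell_n}} \otimes \Tr_{\regR}(\rho)$---fills in a detail the paper leaves implicit but does not change the route.
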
 \begin{proof}This follows directly from the fact that $\PRU$ is a $1$-copy pseudorandom unitary, since it is only queried directly on random keys and once for each key.\end{proof}
    
 \begin{claim} $\abs{\Pr[\A^{\Phi_{6}}\to 1] - \Pr[\A^{\Phi_{7}} \to 1]} = 0$\end{claim}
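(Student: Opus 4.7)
The plan is to argue that the two hybrids are identically distributed by invoking the lazy-sampling principle together with the explicit structure of the simulator $\Sim^t$ from~\Cref{thm:unisimmain}. Recall from the description of $\Sim^t$ that it first chooses $t$ \emph{distinct} classical keys $k_1,\dots,k_t\in\{0,1\}^{\secparam}$ (by coherently applying $({\sf Cntrl}{\text{-}}\uplus)$ and then measuring $\regSt$), and subsequently queries each $U_{k_i}$ exactly once on the register $\regR_i$. In particular, for any given invocation of $\Sim^t$, the values $k_1,\dots,k_t$ are a.s.\ distinct and each element of the underlying family is touched at most once.

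Given this, I would compare the two hybrids as follows. In $\Phi_{7}$, one pre-samples an independent family $\{U'_k\}_{k\in\{0,1\}^{\secparam}}$ of Haar random unitaries, and then invokes $\Sim^{t,\{U'_k\}}$; because the simulator touches each $U'_{k_i}$ exactly once on the chosen distinct keys, only $t$ of the pre-sampled unitaries actually affect the output, and each is queried a single time. In $\Phi_{6}$, the same simulator is run, but now each time it would have applied $U'_{k_i}$ it instead samples a fresh Haar random unitary from scratch and applies that. Conditioned on any fixed outcome $(k_1,\dots,k_t)$ of the distinct-key measurement, the joint distribution over the $t$ applied unitaries is the same in the two experiments: $t$ independent Haar random unitaries applied in the designated positions. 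Since $(k_1,\dots,k_t)$ is drawn from the same distribution in both hybrids and the rest of the procedure (the initial application of $U$, the projection onto $\Pi_{\dist}$, the registers output) is identical, the output states coincide exactly.

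Formally, this is just the standard equivalence between eagerly sampling an entire exponential-size family of independent random variables and lazily sampling each variable the first time it is accessed, applied at the level of Haar measure. Because the simulator accesses each element of the family at most once, no element is accessed twice and so the lazy-sampled values and the pre-sampled values induce identical distributions on every register seen by $\A$. The only subtlety is to verify that ``the simulator touches each $U'_{k}$ at most once on classical inputs,'' which is immediate from the explicit description of $\Sim^t$ in~\Cref{thm:unisimmain} (steps that bring the $\regR_j$ registers into play act only through the classical, measured key $k_j$).

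I do not expect a real obstacle here: this step is purely a bookkeeping/lazy-sampling argument and the equality is exact rather than approximate. The only thing one needs to be careful about is to invoke the explicit form of $\Sim^t$ rather than its input/output specification, since only the explicit form makes the ``queried once per distinct key'' property manifest.
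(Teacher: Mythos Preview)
Your proposal is correct and follows essentially the same approach as the paper: both argue via lazy sampling, using the fact that the simulator from \Cref{thm:unisimmain} measures the key register to obtain $t$ distinct classical keys and then applies each $U'_{k_i}$ exactly once, so pre-sampling the entire family $\{U'_k\}$ (as in $\Phi_7$) is identical to sampling a fresh Haar unitary at each point of use (as in $\Phi_6$). Your write-up is more detailed than the paper's one-line proof, but the substance is the same.
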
 \begin{proof} Note that in $\Phi_7$, each $U'_k$ is Haar random and queried at most once. Thus, it is equivalent to sample $U'_k$ only at the point when it is queried.\end{proof}
    
\begin{claim} $\abs{\Pr[\A^{\Phi_{7}}\to 1] - \Pr[\A^{\Phi_{8}} \to 1]} \leq \negl(\secparam)$ \end{claim}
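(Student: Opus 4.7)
The plan is to observe that $\Phi_7$ and $\Phi_8$ are in fact identical as CPTP maps, so the claim holds with bound $0$ (which trivially implies the stated negligible bound). The argument is a direct invocation of Theorem~\ref{thm:unisimmain}.

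First I would note that steps 1--4 of $\Phi_7$ and $\Phi_8$ coincide: both hybrids sample a Haar random $U$ on $\ell'$ qubits together with an independent family $\{U'_k\}_{k \in \{0,1\}^{\ell'}}$ of Haar random unitaries on $n$ qubits, both apply $U_{\regK_1}\otimes \cdots \otimes U_{\regK_t}$, and both perform the projective measurement $\{\Pi_{dist}^{\ell',t},I-\Pi_{dist}^{\ell',t}\}$ on the key registers, aborting on the second outcome. The extra ingredient in $\Phi_8$, the random function $g:\{0,1\}^{\ell'}\to[2t]$, is not used until step 5. Hence conditioned on the first measurement outcome, the resulting state $\rho_{\regK,\regIn}$ is identical in the two hybrids, and by construction satisfies $\Tr((\Pi_{dist}^{\ell',t}\otimes I_{\regIn})\rho) = \Tr(\rho)$, which is exactly the precondition required by Theorem~\ref{thm:unisimmain}.

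Next I would apply Theorem~\ref{thm:unisimmain} with the unitary family $\U = \{U'_k\}_{k\in \{0,1\}^{\ell'}}$, identifying the theorem's parameter $\secparam$ with $\ell'$. The theorem states that for any such $\rho$,
$$\E_{g}\!\left[\Bigl(\bigotimes_{i=1}^t S^{g}_{\regK_i}\cdot \Apply^{\{U'_k\}}_{\regK_i,\regIn_i}\Bigr)\rho\Bigl(\bigotimes_{i=1}^t S^{g}_{\regK_i}\cdot \Apply^{\{U'_k\}}_{\regK_i,\regIn_i}\Bigr)^{\!\dagger}\right] \;=\; \Sim^{t,\{U'_k\}}(\rho),$$
where $g:\{0,1\}^{\ell'}\to [2t]$ is uniformly random. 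The left-hand side is precisely step 5 of $\Phi_8$ after averaging over $g$, while the right-hand side is precisely step 5 of $\Phi_7$.

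Since the conditional output distributions agree and the probability of the abort branch is the same in both hybrids, the two channels $\Phi_7$ and $\Phi_8$ act identically on every input, so $\Pr[\A^{\Phi_7}\to 1] = \Pr[\A^{\Phi_8}\to 1]$ for every non-adaptive $t$-query oracle algorithm $\A$. There is no real obstacle: the only thing to check is that the register and range conventions line up, which they do because $\regK_i$ is over $\mathcal{H}(\{0,1\}^{\ell'})$ and $g$ has range $[2t]$ exactly as in the statement of Theorem~\ref{thm:unisimmain}.
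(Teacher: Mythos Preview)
Your proposal is correct and follows the same approach as the paper, which simply says the claim ``follows directly from \Cref{thm:unisimmain}.'' You have supplied the details the paper omits---verifying that $S^g$ and $\Apply^{\{U'_k\}}$ commute (both are block-diagonal in $\regK$), that the post-projection state meets the distinctness precondition, and that the two step-5 maps match the two sides of the theorem's identity---and you correctly observe that the bound is in fact $0$, just as in the parallel transition $\Phi_3\to\Phi_4$.
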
 \begin{proof} This follows directly from~\Cref{thm:unisimmain}.\end{proof}
    
\begin{claim} $\abs{\Pr[\A^{\Phi_{8}}\to 1] - \Pr[\A^{\Phi_{9}} \to 1]} \leq \negl(\secparam)$\end{claim}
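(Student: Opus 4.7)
This is entirely analogous to the argument bounding $\abs{\Pr[\A^{\Phi_2}\to 1]-\Pr[\A^{\Phi_3}\to 1]}$. The only difference between $\Phi_8$ and $\Phi_9$ is the presence of the measurement $\{\Pi_{dist},I-\Pi_{dist}\}$ on registers $\regK_1,\dots,\regK_t$ immediately after the application of $U_{\regK_1}\otimes\dots\otimes U_{\regK_t}$.

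The plan is to invoke \Cref{lem:nocoll} on the state $\sigma$ obtained by applying $U^{\otimes t}$ to the $\regK$-registers of $\A_1(\ketbra{0})$ (tracing out all other registers). Since $U$ is Haar random on $\{0,1\}^\ell$, \Cref{lem:nocoll} yields
\[
\Tr(\Pi_{dist}^{\ell,t}\,\sigma)\;\geq\;1-\frac{2t^2}{2^\ell}\;=\;1-\negl(\secparam),
\]
where the last equality uses $\ell=\ell'=\omega(\log\secparam)$. By gentle measurement, the post-measurement state (conditioned on the first outcome, and taking the unnormalized state with weight equal to the success probability) is within trace distance $O(t/2^{\ell/2})=\negl(\secparam)$ of the unmeasured state $\sigma$. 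Since the remaining operations applied by $\Phi_8$ and $\Phi_9$ (namely the controlled $\Apply^{\{U'_k\}}\cdot S^g$ maps and the subsequent action of $\A_2$) are CPTP, they cannot increase trace distance, so the two output distributions differ by at most $\negl(\secparam)$.

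The only subtlety is that on the $\bot$ branch $\Phi_8$ outputs a fixed symbol whereas $\Phi_9$ proceeds normally, but by the bound above this branch occurs with negligible weight and hence contributes at most $\negl(\secparam)$ to $\abs{\Pr[\A^{\Phi_8}\to 1]-\Pr[\A^{\Phi_9}\to 1]}$. This completes the proof.
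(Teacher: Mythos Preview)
Your proof is correct and follows exactly the same approach as the paper, which simply cites \Cref{lem:nocoll} (implicitly together with gentle measurement, as spelled out in the analogous $\Phi_2$/$\Phi_3$ claim). You have merely supplied the details the paper omits.
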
 
    \begin{proof} This follows from~\Cref{lem:nocoll}.\end{proof}
    
 \begin{claim} $\abs{\Pr[\A^{\Phi_{9}}\to 1] - \Pr[\A^{\Phi_{10}} \to 1]} = 0$\end{claim}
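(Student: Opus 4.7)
The plan is to show that, in expectation over the random choices made by both $\Phi_9$ and $\Phi_{10}$, the presence of $S^g_{\regK}$ contributes nothing to the output state. Two simple observations drive the argument: (i) $\Apply^{\{U'_k\}}_{\regK,\regIn}$ acts as a controlled operation with $\regK$ as the control register, so it commutes with the projective measurement $\{\ketbra{k}_{\regK}\}_{k}$; and (ii) averaging over the independently Haar-sampled family $\{U'_k\}$ already has the effect of dephasing $\regK$ in the standard basis. Once both facts are established the equality is immediate.

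First I would verify (ii) by direct calculation. Write an arbitrary pre-$\Apply$ state as $\rho_{\regK\regIn} = \sum_{k,k'} \ket{k}\bra{k'}_{\regK}\otimes \sigma^{k,k'}_{\regIn}$. After applying $\Apply^{\{U'_k\}}$ one gets $\sum_{k,k'} \ket{k}\bra{k'}_{\regK}\otimes U'_k \sigma^{k,k'} (U'_{k'})^{\dagger}$. Since $U'_k$ and $U'_{k'}$ are independent Haar-random for $k\neq k'$, and $\E_{U\gets \Haar}[U] = 0$ (the Haar measure is invariant under multiplication by a global phase), every off-diagonal block vanishes in expectation. Thus
\[
\E_{\{U'_k\}}\bigl[\Apply^{\{U'_k\}} \rho (\Apply^{\{U'_k\}})^{\dagger}\bigr] \;=\; \sum_k \ketbra{k}_{\regK}\otimes \E_{U'_k}[U'_k \sigma^{k,k} (U'_k)^{\dagger}],
\]
which is exactly what one obtains by first dephasing $\regK$ in the standard basis and then applying $\Apply^{\{U'_k\}}$.

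Next I would use (i) to push $S^g_{\regK}$ past $\Apply^{\{U'_k\}}$ in $\Phi_9$: since $\Apply^{\{U'_k\}}$ is diagonal in the standard basis of $\regK$ and $S^g_{\regK}$ is also diagonal in that basis, they commute, so $\Phi_9$ produces the same output as the process that applies $\Apply^{\{U'_k\}}$ first and $S^g_{\regK}$ afterwards. Finally, averaging $S^g$ over a uniformly random $g:\{0,1\}^{\ell'}\to [2t]$ acts on any density operator as the complete dephasing channel on $\regK$, because $\E_g[\omega_{2t}^{g(r)-g(r')}] = \mathds{1}[r=r']$. By step (ii), $\regK$ is already in a standard-basis-diagonal form at this point, so a further dephasing is the identity. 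Hence $\Phi_9(\rho) = \Phi_{10}(\rho)$ for every input $\rho$, and therefore $\Pr[\A^{\Phi_9}\to 1] = \Pr[\A^{\Phi_{10}}\to 1]$.

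The argument is essentially bookkeeping; the only potential subtlety is making sure that the commutation used to move $S^g_{\regK}$ past $\Apply^{\{U'_k\}}$ is valid even after the non-trivial (Haar-random) action on $\regIn$, which follows because $\Apply^{\{U'_k\}}$ touches $\regK$ only as a standard-basis control. No further approximation is needed, yielding equality (not just negligible closeness) as claimed.
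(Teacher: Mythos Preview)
Your route is different from the paper's, and as written it has a gap for $t>1$.

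The paper's argument is a single line: $\Apply^{\{U'_k\}}\cdot S^{g}=\Apply^{\{\omega_{2t}^{g(k)}U'_k\}}$, and by phase-invariance of the Haar measure the family $\{\omega_{2t}^{g(k)}U'_k\}_k$ has exactly the same distribution as $\{U'_k\}_k$ for every fixed $g$. So $\Phi_9$ and $\Phi_{10}$ are literally the same channel; no dephasing analysis is needed.

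Your argument is written for a single control register $\regK$ and a single application of $\Apply$, but $\Phi_9,\Phi_{10}$ act $t$-fold in parallel on $\regK_1,\ldots,\regK_t$. Both of your key computations fail in that setting:
\begin{itemize}
    \item Averaging $S^g$ over $g$ on $\regK_1\cdots\regK_t$ is \emph{not} the complete dephasing channel. The relevant coefficient is $\E_g\bigl[\omega_{2t}^{\sum_i (g(k_i)-g(k'_i))}\bigr]$, which equals $1$ whenever $\type(\vec{k})=\type(\vec{k'})$ (same multiset), not only when $\vec{k}=\vec{k'}$.
    \item Likewise, averaging over the Haar family $\{U'_k\}$ does \emph{not} kill every off-diagonal block $\ket{\vec{k}}\bra{\vec{k'}}$. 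For $\vec{k}=(1,2)$, $\vec{k'}=(2,1)$ the block $\E\bigl[(U'_1\otimes U'_2)\,\sigma\,(U'_2\otimes U'_1)^{\dagger}\bigr]$ is generically nonzero; your ``$\E[U]=0$'' trick applies only when some $U'_r$ appears an unequal number of times on the left and right, i.e.\ precisely when $\type(\vec{k})\neq\type(\vec{k'})$.
\end{itemize}
Your strategy can be repaired: both operations in fact project onto the same-type blocks (for the $g$-average this uses $q=2t>t$; for the Haar average it uses phase invariance $U'_r\mapsto e^{i\theta}U'_r$), so the $g$-averaging is indeed redundant. But that is more work than the paper's one-line invariance argument, and the computations you actually wrote down do not establish the $t>1$ case.
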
 \begin{proof} This follows from unitary invariance. In particular, $\Apply^{\{U'_k\}}\cdot S^{f_2} = \Apply^{\{\omega_q^{f_2(k)}U'_k\}}$, and by unitary invariance the distribution $\{\omega_q^{f_2(k)}U'_k\}$ is identically distributed to $\{U'_k\}$.\end{proof}
    
 \begin{claim} $\abs{\Pr[\A^{\Phi_{10}}\to 1] - \Pr[\A^{\Phi_{11}} \to 1]} \leq \negl(\secparam)$\end{claim}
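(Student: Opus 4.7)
The plan is to reduce this claim directly to \Cref{lem:idealunitary}. Recall the setup in that lemma: $\mathcal{O}_1$ samples a single Haar random unitary $U \gets \Haar(\{0,1\}^\ell)$ on the key register together with an independent family $\{U'_k \gets \Haar(\{0,1\}^n)\}$, and answers each query by applying $\Apply^{\{U'_k\}}_{\regK,\regIn} \cdot U_{\regK}$; meanwhile $\mathcal{O}_2$ is a single Haar random unitary on the combined register. Inspecting $\Phi_{10}$, I see that it implements exactly $t$ non-adaptive parallel applications of this same map with $\ell = \ell'$ and $n = \ell_n$: a single $U$ is applied to each of the $t$ copies of $\regK_i$, and then $\Apply^{\{U'_k\}}_{\regK_i,\regIn_i}$ is applied in parallel. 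Likewise, $\Phi_{11}$ is $t$ non-adaptive parallel applications of a single Haar random unitary on the combined $(\regK,\regIn)$ register. So the adversary $\A = (\A_1,\A_2)$, which makes $t$ non-adaptive queries, is exactly the kind of algorithm to which \Cref{lem:idealunitary} applies.

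Applying \Cref{lem:idealunitary} with $\ell = \ell'$ and $n = \ell_n$ therefore yields
\[
\abs{\Pr[\A^{\Phi_{10}} \to 1] - \Pr[\A^{\Phi_{11}} \to 1]} \leq 2^{\ell'} \cdot \frac{2t^2}{2^{\ell_n}} + \frac{2t^2}{2^{\ell'}} + \frac{2t^2}{2^{\ell' + \ell_n}}.
\]
It remains to check that the right-hand side is negligible under the hypotheses of the theorem. By assumption, $\ell' \leq \ell_n/2$, hence $\ell_n - \ell' \geq \ell'$, and so the first term satisfies $2^{\ell'} \cdot \frac{2t^2}{2^{\ell_n}} = \frac{2t^2}{2^{\ell_n - \ell'}} \leq \frac{2t^2}{2^{\ell'}}$. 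The third term is even smaller. Thus the overall bound is at most $\frac{6t^2}{2^{\ell'}}$, which is negligible because $\ell' = \omega(\log \secparam)$ and $t = \poly(\secparam)$.

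No genuine obstacle is expected here: the hybrid $\Phi_{10}$ was designed precisely to match the $\mathcal{O}_1$ setup of \Cref{lem:idealunitary}, and the parameter restriction $\ell' \leq \ell_n/2$ in the theorem statement was imposed precisely to tame the $2^{\ell'}/2^{\ell_n}$ factor appearing in that lemma. The only thing to verify carefully is that a single oracle call in the sense of \Cref{lem:idealunitary} corresponds to a single parallel block $(U_{\regK_i},\Apply^{\{U'_k\}}_{\regK_i,\regIn_i})$ in $\Phi_{10}$; once this is noted, the claim follows immediately by substituting parameters.
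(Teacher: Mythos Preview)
Your proposal is correct and follows exactly the paper's approach: the paper's entire proof is the single sentence ``This follows from~\Cref{lem:idealunitary}.'' You have additionally spelled out the parameter substitution $\ell=\ell'$, $n=\ell_n$ and verified that the resulting bound is negligible using the hypotheses $\ell'\leq \ell_n/2$ and $\ell'=\omega(\log\secparam)$, which the paper leaves implicit.
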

    \begin{proof} This follows from~\Cref{lem:idealunitary}.\end{proof}

\end{proof}

\ifnum\llncs=0
\section*{Acknowledgements}
PA would like to thank Zihan Hu for enlightening preliminary discussions on designing multi-copy secure public-key quantum money schemes. PA is supported by the National Science Foundation under the grants FET-2329938, CAREER-2341004 and, FET-2530160. EG is supported by the National Science Foundation Graduate Research Fellowship Program.
\par We acknowledge the use of generative AI tools for improving the presentation quality of the paper. However, they were not used for deriving any results in this paper.  
\fi
\ifnum\llncs=0
\bibliographystyle{alpha}
\bibliography{references}
\fi 
\ifnum\llncs=1
\section*{Acknowledgements}
\noindent We acknowledge the use of generative AI tools for improving the presentation quality of the paper. However, they were not used for deriving any results in this paper.  
\bibliographystyle{plain}
\bibliography{references}
\fi

\ifnum\llncs=1
\newpage 
\begin{center}
	{\bf{\Large Supplementary Material}}
	\end{center}
\appendix

\section{Details on the Unitary Setting}\label{sec:unitaryapp}

\newpage 
\section{Sticky Reviews}
\newcommand{\response}[1]{{\color{blue} {\bf Response}: #1}}
\begin{verbatim}
This paper was previously submitted to EUROCRYPT 2026.

The paper has been revised according to the feedback from the referees.

We state and address the relevant referee’s comments below.

\end{verbatim}

\noindent {\bf Referee said}: In the proof of the main theorem (Theorem 11), I had difficulty understanding the description of the simulator. The proof states that the simulator takes as input eigenbases sampled uniformly at random. However, the spectral decompositions of given mixed states
 do not necessarily have uniform eigenvalues. Does this part mean that the eigenvectors are sampled according to their corresponding eigenvalues?\\

 \noindent \response{We agree, the eigenvectors are sampled according to their responses. On page 21, in the earlier version it said that $\{\ell_i\}$ are picked uniformly at random. We have now corrected the typo to say that $\{\ell_i\}$ are sampled from $\distr_{h_i}$. }\\

 \noindent {\bf Referee said}: There are several places where explanations are insufficient or terminology is used inconsistently. 
 \begin{enumerate}
     \item For example, in Theorem 11, the sampling procedure for $(\chi_1,...\chi_t)$ is not defined.
     \item In page 16, wt is not defined.
     \item In the proof of Theorem 11, $n=n_A$?
     \item In the results on PRS and PRU, additional assumptions are required, and it would be helpful if the paper could discuss in more detail how mild these assumptions are. In the case of PRU, it seems that existing constructions satisfy the required assumption, however,
 since the existing constructions achieve adaptive multi-query secure PRU, while this work focuses on 1-copy (or non-adaptive bounded-copy) secure PRU, I feel there is a gap between the two settings.
 \item  In the introduction, the notions of 1-copy secure PRS and
 PRU are introduced, but in the main body the authors seem to use the term 1-time PRS (and PRU) to refer to the same primitives.
 \item In addition, in the definitions of quantum money (Definitions 17–19), several algorithms should be specified as quantum polynomial-time,
 but they are currently described as probabilistic polynomial-time. These issues somewhat detract from the overall clarity of the paper.\\
 
 \end{enumerate}  

 \noindent \response{We have made the corrections. More details below:  
 \begin{enumerate}
 \item $(\chi_1,...\chi_t)$ are eigenvectors in the spectral decomposition of the density matrix $\Tr_{\regB}(\ketbra{\phi_h}_{\regA \regB})$ and are sampled according to $\distr_h$ (see page 19).
\item  ${\sf wt}$ is defined in~\Cref{sec:prelims}. 
\item For the results of PRS and PRUs, no computational assumptions are required. However, we do assume that the single-copy PRS (with short keys) has bounded amount of ancillae and the single-query PRU (with short keys) construction is pure: both these caveats are stated in the introduction as well as the technical sections. 
\item We believe there is a misunderstanding here: there is a simple construction of 1-query secure PRU that {\bf doesn't} achieve multi-query security. The construction of an $n$-qubit 1-query secure PRU is as follows: on input $k \in \{0,1\}^{\secparam}$, apply a pseudorandom generator on $k$ to stretch to $2n$ bits (where $n > \secparam$) and then apply a 1-design (random Pauli) using the $2n$ bits of pseudorandomness. This simple construction satisfies our constraints that the construction is pure. 
\item We have addressed the inconsistency between 1-time and single-copy. 
\item We have corrected the typos in the quantum money definitions. 
 \end{enumerate}
 }

 \noindent {\bf Referee said}: A strength of this paper is that it studies a very fundamental and interesting question: How the number of copies of quantum states affects the security of cryptographic primitives.
That said, due to certain technical reasons, the applicability of the main theorem appears somewhat restricted and requires additional assumptions.
I also felt that there are several points in the manuscript that require clearer or more thorough explanations, which makes the paper somewhat harder to follow.\\

 \noindent \response{We disagree with the assertion that the applicability of the main theorem is limited. Our work already shows many applications:
 \begin{itemize}
     \item Single-copy PRS (with bounded ancillae) to multi-copy PRS transformation. Single-query PRU (with a unitary construction) to multi-query secure PRU. These transformations are {\bf new}. Prior to our work, the notions of single-copy PRS and multi-copy PRS had been extensively studied. However, despite many years, their relationship was not well understood and our work sheds light on their relationship for the first time. 
     \item The constructions of identical-copy secure quantum money and copy-protection (two of the foundational unclonable cryptographic notions) are also {\bf new}. 

 \end{itemize}
 In fact, after posting our paper, a few concurrent and independent papers were also posted. In particular, the purification framework (see~\Cref{sec:relatedwork}), which is very related to our main theorem, has led to several new applications in quntum information theory over the last few months.\\
 
 \noindent We have also made efforts to add more explanation at places to make it more accessible.}

\fi

\end{document}